\newtheorem{theorem}{Theorem}[section]
\newtheorem{prop}[theorem]{Proposition}
\newtheorem{conj}[theorem]{Conjecture}
\def\Bbb{\mathbb} \def\BZ{\Bbb Z}  
  \def\BH{\mathbb{H}}
 \newcommand{\be}{\begin{equation}}
  \newcommand{\ee}{\end{equation}}
\newcommand{\Mtwo}{M_{12}\!:\!2}
\newcommand{\wchi}{{\widehat{\chi}}}
\newcommand{\tchi}{{\widetilde{\chi}}}
\newcommand{\wrho}{{\widehat{\rho}}}
\newcommand{\trho}{{\widetilde{\rho}}}
\newcommand{\wpsi}{{\widehat{\psi}}}
\begin{document}
\bibliographystyle{utphys}

\begin{titlepage}
\begin{flushright}
 Oct 2018 (v2.1)\\
\texttt{arXiv:1804.06677 [hep-th]}
\end{flushright}
\begin{center}
\textsf{\large Two moonshines for $L_2(11)$ but none for $M_{12}$}\\[12pt]
Suresh Govindarajan$^{\dagger}$ and Sutapa Samanta$^{*}$ \\
Department of Physics\\
Indian Institute of Technology Madras \\
Chennai 600036 INDIA\\
Email: $^\dagger$suresh@physics.iitm.ac.in, $^*$sutapa@physics.iitm.ac.in
\end{center}
\begin{abstract}
In this paper, we revisit an earlier conjecture by one of us that related conjugacy classes of $M_{12}$ to Jacobi forms of weight one and index zero. 
We construct Jacobi forms for all conjugacy classes of $M_{12}$ that are consistent with  constraints from group theory as well as modularity.
However, we obtain 1427 solutions that satisfy these constraints (to the order that we checked) and are unable to provide a unique Jacobi form. Nevertheless, as a consequence, we are able
to provide a  group theoretic proof of the evenness of the coefficients of all EOT Jacobi forms associated with conjugacy classes of $M_{12}:2 \subset M_{24}$. We show that there exists no solution where the Jacobi forms (for order 4/8 elements of $M_{12}$) transform with phases under the appropriate level. In the absence of a moonshine for $M_{12}$, we show that there exist moonshines for two distinct $L_2(11)$ sub-groups of the $M_{12}$. We construct Siegel modular forms for all $L_2(11)$ conjugacy classes and show that each of them arises as the denominator formula for a distinct Borcherds-Kac-Moody Lie superalgebra. 
\end{abstract}
\end{titlepage}

\section{Introduction}

Following the discovery of monstrous moonshine, came a moonshine for the largest sporadic Mathieu group, $M_{24}$. This related $\rho$, a conjugacy class of $M_{24}$, to a multiplicative eta product that we denote by $\eta_\rho$ via the map\cite{Dummit:1985,Mason:1985}:
\begin{equation}
\rho=1^{a_1}2^{a_2}\cdots N^{a_N} \longrightarrow \eta_\rho(\tau):=\prod_{m=1}^N \eta(m\tau)^{a_m}\quad. 
\end{equation}
The same multiplicative eta products appeared as the  generating function of $\tfrac12$-BPS states twisted by a symmetry element (in the conjugacy class $\rho$) in type II string theory compactified on $K3\times T^2$. 
This was further extended to the generating function of $\tfrac14$-BPS states in the same theory. The generating function in this case was a  genus-two Siegel modular form that we denote by $\Phi^\rho(\mathbf{Z})$\cite{Govindarajan:2009qt}. 

Renewed interest in this moonshine (now called \textit{Mathieu moonshine}) appeared following the work of Eguchi-Ooguri-Tachikawa (EOT) who observed the appearance of the dimensions of irreps of $M_{24}$  in the elliptic genus of $K3$ when expanded in terms of characters of the $\mathcal{N}=4$ superconformal algebra\cite{Eguchi:2010ej}.  The Siegel modular form $\Phi^\rho(\mathbf{Z})$, when it exists, unifies the two Mathieu moonshines, the one related to multiplicative eta products as well as the one related to the elliptic genus\cite{Cheng:2010pq,Govindarajan:2010fu}. 

It has now been established that there is a moonshine that relates conjugacy classes of $M_{24}$ to (the EOT) Jacobi forms of weight zero and index one that we denote by $Z^\rho(\tau,z)$\cite{Gaberdiel:2010ch,Gaberdiel:2010ca,Eguchi:2010fg}.  These Jacobi forms arise
as twistings (also called `twinings') of the elliptic genus of $K3$.  Given a conjugacy class $\rho$ of $M_{24}$,  the associated Jacobi form
expressed in terms of $\mathcal{N}=4$ characters: the massless
character (with zero isospin) $\mathcal{C}(\tau,z)$ and the massive characters $q^{h-\frac18}\,\mathcal{B}(\tau,z)$ (with $h\geq 0$) takes the form\cite{Eguchi:2008gc,Eguchi:2009cq}
\begin{equation}\label{M24char}
Z^\rho(\tau,z) = \alpha^{\,\rho}~ \mathcal{C}(\tau,z) + q^{-\tfrac18}~\Sigma^{\,\rho}(\tau)~ \mathcal{B}(\tau,z)\ ,
\end{equation}
where  $\alpha^{\,\rho}= 1+\chi_{23}(\rho)$ and the character expansion of the function $\Sigma^\rho(\tau)$ is as follows:
\begin{multline}
\Sigma^\rho(\tau)= -2 + [\chi_{45}(\rho)+ \chi_{\overline{45}}(\rho) ]\ q + [\chi_{231}(\rho) +\chi_{\overline{231}}(\rho) ]\ q^2 \\ + [\chi_{770}(\rho)+ \chi_{\overline{770}}(\rho) ]\ q^3  
 + 2 \chi_{2277}(\rho)\ q^4 + 2 \chi_{5796}(\rho)\ q^5 +\cdots   
\end{multline}
where the subscript denotes the dimension of the irrep of $M_{24}$ and $q=e^{2\pi i\tau}$.
An all-orders proof of the existence of such an expansion has been given by Gannon\cite{Gannon:2012ck}. In particular, $\Phi^\rho(\mathbf{Z})$ can be constructed in two ways: an additive lift which uses the eta product $\eta_\rho(\tau)$  as input and a multiplicative lift where
$Z^\rho(\tau,z)$ is the input. However, the additive lift is not known for all conjugacy classes.

In some cases, the square-root of $\Phi^\rho(\mathbf{Z})$ is related to a Borcherds-Kac-Moody (BKM) Lie superalgebra with the additive and multiplicative lifts providing the sum and product side of the Weyl denominator formula.  An attempt at understanding this square-root was done in \cite{Govindarajan:2010cf} where it was argued that there might be a moonshine involving the Mathieu group $M_{12}$ relating its conjugacy classes to BKM Lie superalgebras. In this paper, we revisit that proposal from several viewpoints. Our results may be summarised as follows:
\begin{enumerate}
\item We study a conjecture \ref{weakform} (due to one of us) that implies a moonshine for $M_{12}$ that provides Jacobi forms of weight zero and index 1 for every conjugacy class of $M_{12}$. We find 1427 families of Jacobi forms that have a positive definite character expansion. This result, albeit non-unique, is sufficient to show that all the Fourier coefficients of the EOT Jacobi forms for $M_{24}$ conjugacy classes that reduce to conjugacy classes of $\Mtwo$ (a maximal sub-group of $M_{24}$) are even. This provides an alternate group-theoretic proof of a result due to Creutzig et al.\cite{Creutzig:2012}.
\item In an attempt to obtain a unique solution, we introduce a stronger form  of the conjecture (Proposition \ref{phases}) -- this imposes a  condition that the Jacobi forms transform with suitable phases under an appropriate level. 
For conjugacy classes $4a/4b/8a/8b$, we find \textbf{no} solutions, thereby concluding that there is no moonshine for $M_{12}$.
\item We address the non-existence  of a moonshine for $M_{12}$ by providing two moonshines for $L_2(11)$ that arise as two distinct sub-groups of $M_{12}$. Using this, we construct Siegel modular forms for all conjugacy classes using a product formula that arises naturally as a consequence of $L_2(11)$ moonshine. The modularity of this product is proven in two ways: (i) as an additive lift determined by the eta product, $\eta_{\hat\rho}(\tau)$ -- this works for all but three conjugacy classes ($1^111^1$, $3^4$ and $6^2$), and (ii) as a product of rescaled Borcherds products -- this works for all conjugacy classes.
\item For all conjugacy classes of $L_2(11)$, we show the existence of Borcherds-Kac-Moody (BKM) Lie superalgebras for conjugacy classes of both the $L_2(11)$ by showing that the Siegel modular forms, $\Delta_k^{\hat\rho}(\mathbf{Z})$, arise as their  Weyl-Kac-Borcherds deominator identity. Figure \ref{moonshinesfig} pictorially summarises the moonshines for $L_2(11)$.
\end{enumerate}

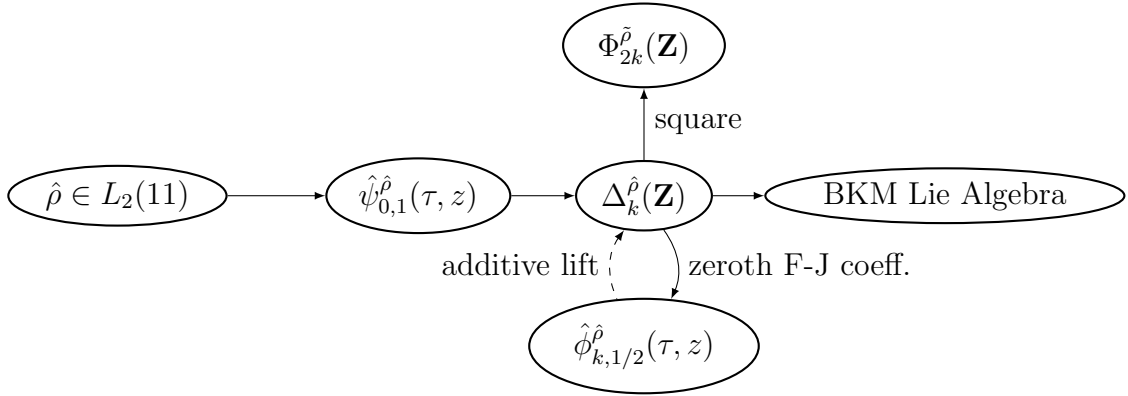
\begin{figure}[hbt]\label{moonshinesfig}
\centering
\begin{tikzpicture}
\draw (0,0)node[ellipse, thick, inner sep=2pt,draw](y1){$\hat\rho\in L_2(11)$};
\draw (4,0)node[ellipse, thick, inner sep=2pt,draw](y2){$\hat\psi_{0,1}^{\hat\rho}(\tau,z)$};
\draw[-latex](y1)--(y2);
\draw (7,0)node[ellipse, thick, inner sep=2pt,draw](y3){$\Delta_k^{\hat\rho}(\mathbf{Z})$};
\draw[-latex](y2)--(y3);
\draw (7,-2)node[ellipse, thick, inner sep=4pt,draw](x3){$\hat{\phi}_{k,1/2}^{\hat\rho}(\tau,z)$};
\draw (7,2)node[ellipse, thick, inner sep=4pt,draw](x4){$\Phi_{2k}^{\tilde\rho}(\mathbf{Z})$};
\draw[latex-](x3) to [bend right]node[right]{zeroth F-J coeff.} (y3);
\draw[latex-, dashed](y3) to [bend right]node[left]{additive lift} (x3);
\draw[latex-](x4)--node[right]{square}(y3);
\draw (11,0)node[ellipse, thick, inner sep=2pt,draw](y4){BKM  Lie Algebra};
\draw[-latex](y3)--(y4);
\end{tikzpicture}
\caption{Moonshines for $L_{2}(11)$: $\hat\rho$ is an $L_{2}(11)_{A/B}$ conjugacy class.}
\end{figure}

The plan of the manuscript is as follows. Following the introductory section, 
in section 2, we describe the conjecture \ref{weakform} for $M_{12}$ moonshine. We find multiple solutions that satisfy all the constraints that are imposed. In proposition \ref{phases}, we show that a stronger form of $M_{12}$ moonshine does not hold. We then show that there are unique solutions for two distinct $L_2(11)$ subgroups of $M_{12}$. In section 3, we construct genus-two Siegel modular forms for all conjugacy classes for both $L_2(11)$ moonshines using a multiplicative lift. In section 4, we show that each of these Siegel modular forms arises as the Weyl-Kac-Borcherds denominator formula for a BKM Lie superalgebra.
We conclude with brief remarks in section 5. Appendix A contains various definitions and details of modular forms, Jacobi forms and Siegel modular forms. Appendix B provides some details of the computations that prove modularity of the multiplicative lift. Finally, Appendix C
contains details of (finite) group theory that are relevant for this paper. \\

\noindent\textbf{Notation} \\
\begin{tabular}{l l }
$\rho$ & Conjugacy class of $M_{24}$ \\

$\tilde\rho$ & Conjugacy class of $M_{12}:2$ \\

$\hat\rho$ & Conjugacy class of $M_{12}$ and $L_2(11)$ \\

$\varrho$ & Weyl vector for a BKM Lie superalgebra \\

$\hat\rho_m$ & Conjugacy class of the $m$-th power of an element of conjugacy class $\hat\rho$ \\

$\eta_\rho(\tau)$ & Eta product for cycle shape $\rho$\\


$Z^\rho(\tau,z)$ & EOT Jacobi form for conjugacy class $\rho$ of $M_{24}$ \\


$\widehat\psi^{\hat\rho}(\tau,z)$ & Jacobi form of weight zero and index one for  conjugacy class $\hat\rho$ of $M_{12}$ \\
 
$\phi^{\hat\rho}_{k,1/2}(\tau,z)$ & Additive seed corresponding to $M_{12}$ conjugacy class $\hat\rho$\\

$\Phi^\rho_k(\mathbf{Z})$ & Siegel modular form of weight $k$ for conjugacy class $\rho$ of $M_{24}$\\

$\Delta^{\hat\rho}_k(\mathbf{Z})$ & Siegel modular form of weight $k$ for conjugacy class $\hat\rho$ of $M_{12}$ and $L_2(11)$\\
\end{tabular}

\section{The $M_{12}$ conjecture}

In some cases, the square-root of the Siegel modular form that unifies the additive and multiplicative moonshines for $M_{24}$ is related to the denominator formula  of a Borcherds-Kac-Moody (BKM) Lie superalgebra. A necessary condition is that the multiplicative seed have even coefficients. A group-theoretic answer to this question comes via 
$\Mtwo$, a maximal subgroup of $M_{24}$ that can be constructed from $M_{12}$ and its outer automorphism. We denote characters and conjugacy classes of $\Mtwo$ by
$\tchi_i$ and $\trho$ throughout this paper. Similarly, we will use a hat for $M_{12}$ characters and conjugacy classes.

$\Mtwo$ is a maximal subgroup of $M_{24}$.  The construction of $\Mtwo$ has its origin in the  order two outer automorphism of $M_{12}$   that is generated by an element that we denote  by  $\varphi$. There are two classes of elements, $(\hat{g},e)$ and $(\hat{g},\varphi)$, where $\hat{g}\in M_{12}$. The composition rule is given by ($\hat{g}_1,\hat{g}_2\in M_{12}$ and $h \in (e,\varphi)$)
\[
(\hat{g}_1,e) \cdot (\hat{g}_2,h) = (\hat{g}_1\cdot \hat{g}_2,h)\quad,\quad(\hat{g}_1,\varphi) \cdot (\hat{g}_2,h) = (\hat{g}_1\cdot \varphi(\hat{g}_2),\varphi\cdot h)
\]
The existence of the decomposition of $Z^\rho(\tau,z)$ in terms of characters of $M_{24}$ as in Eq. \eqref{M24char} immediately implies the following decomposition for all the 21 conjugacy classes of
$\Mtwo$.
\begin{equation}\label{M122char}
Z^{\trho}_{0,1}(\tau,z) = \alpha^{\,\trho}~ \mathcal{C}(\tau,z) + q^{-\tfrac18}~\Sigma^{\,\trho}(\tau)~ \mathcal{B}(\tau,z)\ ,
\end{equation}
where  $\alpha^{\,\trho}= 2+\tchi_2(\trho)+ \tchi_{3}(\trho)$ and the function $\Sigma^\trho(\tau)$ can be expanded in terms of characters of $\Mtwo$ as follows\footnote{We indicate conjugacy classes and other objects related to  $\Mtwo$ with a tilde and a hat for $M_{12}$. In addition, characters of $\Mtwo$ are labelled with the beginning letters of the alphabet while letters beginning from $m$ are used for characters of $M_{12}$.} :
\begin{align}\label{M122Sigma}
\Sigma^\trho(\tau)= -2 +  \sum_{n=1}^\infty \left(\sum_{a=1}^{21} \widetilde{N}_a(n) \tchi_a(\trho) \right) q^n\ .
\end{align}
where the multiplicities $\widetilde{N}_{a}(n)$ are non-negative integers. 

Characters of representations of $\Mtwo$ arise in two ways from characters of $M_{12}$. For  representations of \textit{splitting} type,  a pair of $\Mtwo$ characters are determined by a single $M_{12}$ character and for representations  of \textit{fusion} type, a $\Mtwo$ character is determined by a couple of $M_{12}$ characters.
\begin{center}
\begin{tabular}{c|c|c}
\multirow{2}{*}{ Rep. Type} 
      & \multicolumn{2}{c}{Conjugacy Class} 
                    \\ \cline{2-3} 
 & $(g,e)$ & $(g,\varphi)$ \\ \hline
 Splitting  & $\tilde{\chi}_a = \tilde{\chi}_{a'} =\hat{\chi}_m$ & $\tilde{\chi}_a + \tilde{\chi}_{a'} =0$ \\ 
  Fusion & $\tilde{\chi}_a = \hat{\chi}_m+\hat{\chi}_{m'}$ & $\tilde{\chi}_a =0$ \\[5pt]
\end{tabular}
\end{center}

 Let $\hat{\rho}$ denote a conjugacy class of $M_{12}$ associated with $\hat{g}\in M_{12}$.  Then, the pair of conjugacy classes $(\hat{\rho},\varphi(\hat{\rho}))$ become a conjugacy class $\tilde{\rho}$ associated with the element $(\hat{g},e)$ of 
  $M_{12}\!:\!2$ and hence of $M_{24}$ as well. $12$ of the $21$ conjugacy classes of $\Mtwo$ arise in this fashion. 

  \begin{conj}[Govindarajan\cite{Govindarajan:2010cf}] \label{weakform}
There exists a moonshine for $M_{12}$ that associates  a unique weight zero, index one real Jacobi form $\widehat{\psi}_{0,1}^{\hat\rho}$ to every conjugacy class $\hat{\rho}$ of $M_{12}$ such that
\begin{enumerate}
\item $Z^{\tilde\rho}(\tau,z)= \widehat{\psi}_{0,1}^{\hat\rho}(\tau,z) + \widehat{\psi}_{0,1}^{\varphi(\hat\rho)}(\tau,z)$, where $\tilde\rho$ is the conjugacy class of the element $(g,e)\in\Mtwo$, $g\in M_{12}$ is in the conjugacy class $\hat\rho$
 and $Z^{\tilde\rho}_{0,1}(\tau,z)$ is the EOT Jacobi form that appears in the moonshine for $M_{24}$.
\item The Jacobi form written in terms of $\mathcal{N}=4$ characters \[ \widehat{\psi}_{0,1}^{\hat\rho}(\tau,z)= \hat\alpha^{\hat\rho}~ \mathcal{C}(\tau,z) + q^{-\tfrac18}~\hat{\Sigma}^{\hat\rho}(\tau)~ \mathcal{B}(\tau,z)\ ,\]
where $\hat\alpha^{\hat\rho}$ and   $\hat{\Sigma}^{\hat\rho}(\tau)$ can be expressed  in terms of $M_{12}$ characters. One has
$\hat\alpha^{\hat\rho}= 1+\hat{\chi}_{2}(\hat\rho)$ and 
\begin{equation} \label{M12Sigma}
\hat{\Sigma}^{\hat{\rho}}(\tau) = -1 + \sum_{n=1}^\infty \left(\sum_{m=1}^{15} \hat{N}_m(n) \hat{\chi}_m(\hat\rho) \right) q^n\ .
\end{equation}
with  $\hat{N}_{m}(n)$ being non-negative integers  for all $m\geq 1$ and $n\geq 1$.
 \end{enumerate}
 \end{conj}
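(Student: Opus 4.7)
The plan is to reduce the construction of each $\widehat{\psi}_{0,1}^{\hat\rho}$ to a problem of decomposing the known EOT Jacobi forms for the maximal subgroup $\Mtwo \subset M_{24}$. For each of the twelve $\Mtwo$ conjugacy classes $\tilde{\rho}$ arising from pairs $(\hat\rho,\varphi(\hat\rho))$, condition~(1) demands a splitting $Z^{\tilde\rho}=\widehat{\psi}_{0,1}^{\hat\rho}+\widehat{\psi}_{0,1}^{\varphi(\hat\rho)}$. When $\hat\rho$ is $\varphi$-fixed this already determines $\widehat{\psi}_{0,1}^{\hat\rho}$ up to the overall factor $\tfrac12$, whereas for $\hat\rho\neq\varphi(\hat\rho)$ there is genuine freedom that must be pinned down. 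First I would use the splitting/fusion dictionary to translate the $\Mtwo$ character expansion \eqref{M122Sigma} into an Ansatz for $\hat\Sigma^{\hat\rho}$ written in terms of the $15$ irreducible $M_{12}$ characters $\hat\chi_m$, with unknown non-negative integer multiplicities $\hat{N}_m(n)$.

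Next I would exploit the $(g,\varphi)$-type conjugacy classes of $\Mtwo$. Although these do not themselves lie in the image of the map $\hat\rho\mapsto\tilde\rho$, the corresponding EOT Jacobi forms furnish additional linear constraints on the same unknowns: splitting-type $M_{12}$ characters enter with opposite signs in the two halves of each pair, while fusion-type $M_{12}$ characters drop out entirely. Combined with the orthogonality of the $M_{12}$ character table, this yields, order by order in $q$, a finite-dimensional affine lattice of candidate integer solutions for the $\hat{N}_m(n)$. I would then impose modularity: each $\widehat{\psi}_{0,1}^{\hat\rho}$ must be a weight-zero, index-one Jacobi form on a congruence subgroup $\Gamma_0(N_{\hat\rho})$ with $N_{\hat\rho}$ dictated by the order of $\hat\rho$. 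The relevant space is finite dimensional and can be described explicitly (via theta decomposition or as a span of $\eta$-quotients times the standard index-one generators), so the integer candidates get cut out by matching finitely many Fourier coefficients.

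The main obstacle, and the step I expect to be genuinely hard, is \emph{uniqueness}. The non-negativity of the $\hat{N}_m(n)$ together with the $\Mtwo$ splitting condition leaves considerable room for redistributing characters between $\widehat{\psi}_{0,1}^{\hat\rho}$ and $\widehat{\psi}_{0,1}^{\varphi(\hat\rho)}$ whenever $\hat\rho$ is not $\varphi$-fixed, and I see no a~priori reason why modularity at level $N_{\hat\rho}$ alone should collapse this freedom to a single form. I therefore expect a proof along these lines to establish at best \emph{existence} of at least one family of Jacobi forms per conjugacy class, with uniqueness demanding an additional input---most naturally, the prescription of definite multiplier phases under $\Gamma_0(N_{\hat\rho})$, analogous to those appearing in the $M_{24}$ story. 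Failing that, the realistic outcome is a finite enumeration of admissible families, leaving the status of a genuine $M_{12}$ moonshine contingent on whether those supplementary phase constraints can in fact be satisfied.
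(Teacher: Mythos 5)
The statement you are asked about is a \emph{conjecture}, and the paper does not prove it: the body of Section~2 tests it and ultimately refutes it in its strong form. Your plan nonetheless mirrors the paper's strategy almost exactly -- Proposition~\ref{easycases} disposes of the $\varphi$-fixed classes by setting $\wpsi^{\wrho}=\tfrac12 Z^{\trho}$, the four remaining classes $4a/4b/8a/8b$ are reduced to weight-two modular forms constrained by non-negativity of the $\hat N_m(n)$ and by modularity, and the outcome is a finite enumeration (1427 families at level 32) rather than a unique answer. Your closing prediction is precisely what the paper establishes: uniqueness fails without a phase prescription, and Proposition~\ref{phases} shows that once phases at the natural levels 4 and 8 are demanded, \emph{no} solution survives, so the conjecture as stated cannot hold.

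Two concrete points where your plan would not go through as written. First, the $(g,\varphi)$-type conjugacy classes of $\Mtwo$ do not furnish constraints on the $M_{12}$ unknowns: on those classes only the splitting characters survive, so the EOT data there determines the differences $\widetilde N_a(n)-\widetilde N_{a'}(n)$ of $\Mtwo$ multiplicities, whereas the $M_{12}$ multiplicities enter only through the sums $2\hat N_m(n)=\widetilde N_a(n)+\widetilde N_{a'}(n)$. The genuine ambiguity, $\hat N_2(n)-\hat N_3(n)$ and $\hat N_9(n)-\hat N_{10}(n)$, is invisible to all $\Mtwo$ data and must be fixed by modularity of the individual $\wpsi^{\wrho}$ plus positivity, which is what the paper does. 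Second, imposing modularity on $\Gamma_0(N_{\hat\rho})$ with $N_{\hat\rho}$ the order of the element is too restrictive: the paper shows there is no solution with $\gamma^{\widehat{4a}}$ modular at level 16, let alone 4, and the admissible solutions live at level 32. Had you enforced your stated level from the outset you would have found the empty set -- which is, in effect, the content of Proposition~\ref{phases} and the reason the paper concludes there is no moonshine for $M_{12}$.
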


\subsection{Implications of  Conjecture \ref{weakform}}

We will later see  that Conjecture \ref{weakform} holds only after one relaxes the uniqueness condition on the Jacobi forms. We will now study its implications.

\begin{prop}\label{easycases}
For $M_{12}$ conjugacy classes $\hat\rho\neq 4a/4b/8a/8b$
   \begin{align*}
     {\widehat\psi}_{0,1}^{\hat\rho}(\tau,z)=\frac{1}{2}\ Z^{\tilde\rho}(\tau,z)\quad,\quad \ .
     \end{align*}
\end{prop}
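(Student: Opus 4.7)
The plan is to invoke part 1 of Conjecture \ref{weakform}, which supplies the sum relation
\[ Z^{\trho}(\tau,z) = \wpsi_{0,1}^{\wrho}(\tau,z) + \wpsi_{0,1}^{\varphi(\wrho)}(\tau,z), \]
and reduce the proposition to showing that the two summands coincide whenever $\wrho \notin \{4a,4b,8a,8b\}$. First I would classify the $12$ conjugacy classes of $\Mtwo$ coming from elements $(g,e)$ as either splitting type ($\varphi(\wrho) = \wrho$) or fusion type ($\varphi(\wrho) \neq \wrho$). The counting $k + 2m = 15$ (total $M_{12}$ classes) together with $k + m = 12$ forces $k = 9$ fixed classes and $m = 3$ swapped pairs, and one reads off from the $M_{12}$ power maps that these three pairs are exactly $\{4a,4b\}$, $\{8a,8b\}$ and $\{11a,11b\}$.

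For each of the nine splitting-type classes the identity $\wchi_m(\varphi(\wrho)) = \wchi_m(\wrho)$ holds for every $m$. Since part 2 of Conjecture \ref{weakform} expresses $\wpsi_{0,1}^{\wrho}$ as a universal non-negative-integer combination of these character values --- the multiplicities $\hat N_m(n)$ do not depend on $\wrho$ --- the equality $\wpsi_{0,1}^{\wrho} = \wpsi_{0,1}^{\varphi(\wrho)}$ is immediate, and the sum identity yields $\wpsi_{0,1}^{\wrho} = \tfrac12 Z^{\trho}$.

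The last case to handle is $\wrho \in \{11a,11b\}$. Here I would use the observation, read off the $M_{12}$ character table, that all irreducible characters take rational values on $11a,11b$ except for the Galois-conjugate pair of $16$-dimensional characters, whose values on these classes differ by complex conjugation. Since $\varphi$ swaps this fusion pair of irreducibles and simultaneously swaps $11a$ and $11b$, one obtains $\wchi_m(11b) = \overline{\wchi_m(11a)}$ for every $m$; because the $\hat N_m(n)$ are integers, $\wpsi_{0,1}^{11b}$ is then the complex conjugate of $\wpsi_{0,1}^{11a}$, and the hypothesis that these are real Jacobi forms forces them to be equal. The main subtlety is this Galois step, and it is exactly what fails for $\{4a,4b\}$ and $\{8a,8b\}$, whose character values are all rational, so that no reality or Galois constraint can identify the two Jacobi forms in each pair --- which is precisely why these four classes must be excluded from the statement.
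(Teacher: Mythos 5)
Your proof is correct and follows essentially the same route as the paper: both reduce to the sum relation of Conjecture \ref{weakform}, dispose of the nine $\varphi$-fixed classes by the universality of the multiplicities $\hat N_m(n)$, and settle $11a/11b$ using the reality condition together with the fact that only the complex-valued pair $\wchi_4,\wchi_5$ distinguishes these two classes. The only cosmetic difference is that the paper first extracts $\hat N_4(n)=\hat N_5(n)$ from reality and then deduces $\wpsi^{11a}_{0,1}=\wpsi^{11b}_{0,1}$, whereas you observe directly that the two $q$-series are complex conjugates of one another, so reality forces them to coincide; these are the same argument.
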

\noindent \textbf{Remark:} It is easy to verify that $\alpha^{\hat\rho}=\alpha^{\varphi(\hat\rho)}$ for these conjugacy classes.
Thus it suffices to show that $\hat{\Sigma}^{\hat{\rho}} = \hat{\Sigma}^{\varphi(\hat{\rho})}$. A complete list of EOT Jacobi forms that are related to conjugacy classes of $\Mtwo$ is given in Table \ref{EOTJFlist}.
\begin{proof}
For $M_{12}$ conjugacy classes $\hat\rho$ such that $\varphi(\hat\rho)=\hat\rho$, conjecture \ref{weakform} implies that $\widehat{\psi}_{0,1}^{\hat\rho}(\tau,z) = \widehat{\psi}_{0,1}^{\varphi(\hat\rho)}(\tau,z)$ and hence they are given by half the corresponding $\Mtwo$ Jacobi form. This excludes the conjugacy classes $4a/4b/8a/8b/11a/11b$.  The $M_{12}$ characters $\hat{\chi}_4$ and $\hat{\chi}_5$ take complex values for the conjugacy classes $11a/b$.  The reality condition on the Jacobi forms (in Conjecture \ref{weakform})  requires that the multiplicities that appear in Eq. \eqref{M12Sigma} must be such that $$ \hat{N}_4(n)= \hat{N}_5(n) \text{ for all } n\geq 1 \ . $$ Further, it follows that 
$\widehat{\psi}_{0,1}^{11a}(\tau,z) = \widehat{\psi}_{0,1}^{11b}(\tau,z)$.
\end{proof}
\noindent \textbf{Remark:} There is a proposal due to Eguchi and Hikami where they propose a moonshine called `Enriques' moonshine\cite{Eguchi:2013es}. For the four classes that are undetermined, they propose to take one half of the $M_{24}$ Jacobi form as the required Jacobi form. That is not consistent with the decomposition in the expression for $\hat\alpha^{\hat\rho}$ as it implies that $\hat\alpha^{\hat\rho}= 1 + \tfrac12 (\hat{\chi}_2 + {\hat\chi}_3$) -- the coefficients are half integral and do not make sense from group theory. The outer automorphism of $M_{12}$ that we make extensive use of also does not make an appearance in their considerations. We believe that their proposal in not related to our work.
\begin{prop}
For all  conjugacy classes of $\Mtwo$, the Fourier-Jacobi coefficients of the  associated Jacobi form are even i.e.,
\[
Z^\trho(\tau) = 0 \mod 2\ .
\]
\end{prop}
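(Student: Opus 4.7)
The plan is to treat the 21 conjugacy classes of $\Mtwo$ in two batches: the 12 classes of $(g,e)$-type, where Conjecture~\ref{weakform} supplies a concrete decomposition $Z^{\tilde\rho}=\widehat\psi_{0,1}^{\hat\rho}+\widehat\psi_{0,1}^{\varphi(\hat\rho)}$, and the 9 classes of $(g,\varphi)$-type, where evenness has to be deduced from the splitting/fusion structure of $\Mtwo$ characters together with the data already extracted on the $(g,e)$-type side.

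For a $(g,e)$-type class with $\hat\rho\notin\{4a,4b,8a,8b\}$, Proposition~\ref{easycases} already gives $Z^{\tilde\rho}=2\,\widehat\psi_{0,1}^{\hat\rho}$, and $\widehat\psi_{0,1}^{\hat\rho}$ has integer Fourier-Jacobi coefficients by Eq.~\eqref{M12Sigma} (the only delicate point being $11a/b$, where reality forces $\hat N_4(n)=\hat N_5(n)$ so that the complex entries of the $M_{12}$ character table recombine into integers). The two remaining $(g,e)$-type classes $\tilde\rho_4$ and $\tilde\rho_8$ come from the $\varphi$-pairs $\{4a,4b\}$ and $\{8a,8b\}$; for these I would expand the full $Z^{\tilde\rho}=\widehat\psi_{0,1}^{\hat\rho}+\widehat\psi_{0,1}^{\varphi(\hat\rho)}$ via Eq.~\eqref{M12Sigma} and regroup by representation type of $M_{12}$, obtaining
\[
[q^n]\!\left(\hat\Sigma^{\hat\rho}+\hat\Sigma^{\varphi(\hat\rho)}\right)=2\!\!\sum_{m\,\text{split}}\!\!\hat N_m(n)\,\hat\chi_m(\hat\rho)\;+\!\!\sum_{(m,m')\,\text{fusion}}\!\!\bigl(\hat N_m(n)+\hat N_{m'}(n)\bigr)\,\tilde\chi_a(\tilde\rho),
\]
where $\tilde\chi_a$ is the fused $\Mtwo$ character attached to the pair $(\hat\chi_m,\hat\chi_{m'})$. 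The splitting sum is manifestly even, and evenness of the fusion sum reduces to a finite check: each fusion-type $\Mtwo$ character must take an even value at $\tilde\rho_4$ and at $\tilde\rho_8$, which I would read off from the character table collected in Appendix~C.

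For the nine $(g,\varphi)$-type classes I would instead work with the $\Mtwo$ expansion \eqref{M122Sigma}. On such a class, every fusion character vanishes and each splitting pair satisfies $\tilde\chi_{a'}(\tilde\rho)=-\tilde\chi_a(\tilde\rho)$, so the coefficient of $q^n$ in $\Sigma^{\tilde\rho}$ takes the shape $\sum_{\text{split pairs}}\bigl(\widetilde N_a(n)-\widetilde N_{a'}(n)\bigr)\,\tilde\chi_a(\tilde\rho)$. Matching the $M_{12}$ and $\Mtwo$ expansions on the companion $(g,e)$-type class pins down $\widetilde N_a(n)+\widetilde N_{a'}(n)=2\hat N_m(n)$, so $\widetilde N_a(n)$ and $\widetilde N_{a'}(n)$ always share the same parity and their difference is automatically even; together with the a~priori integrality of the Fourier-Jacobi coefficients of $Z^{\tilde\rho}$, the coefficient is forced into $2\BZ$.

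The main obstacle is the character-table parity check at $\tilde\rho_4$ and $\tilde\rho_8$; there does not seem to be a slicker uniform identity that bypasses it. What the argument does achieve is that the conclusion depends only on the parities of the multiplicities $\hat N_m(n)$, and is therefore insensitive to which of the 1427 admissible families produced by Conjecture~\ref{weakform} one picks: every family yields the same $Z^{\tilde\rho}$ and the same parity statement, so the proposition is a genuine group-theoretic consequence of $M_{12}$ moonshine rather than a property of any distinguished Jacobi form.
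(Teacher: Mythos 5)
Your argument is essentially the proof in the paper: the same splitting/fusion regrouping of $\hat\Sigma^{\hat\rho}+\hat\Sigma^{\varphi(\hat\rho)}$ on $(g,e)$-type classes, the same parity relation $\widetilde N_a(n)+\widetilde N_{a'}(n)=2\hat N_m(n)$ imported to the $(g,\varphi)$-type classes, and the same appeal to integrality to dispose of the irrational splitting pairs. The one organizational difference is that you lean on Proposition \ref{easycases} to settle ten of the twelve $(g,e)$-type classes outright, which shrinks the fusion-character parity check to the two classes coming from $\{4a,4b\}$ and $\{8a,8b\}$ instead of all twenty-one; that is a legitimate economy over the paper's uniform treatment.

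Two points need attention before the proof is complete. First, you never address the massless coefficient: the proposition is about $Z^{\trho}=\alpha^{\trho}\,\mathcal{C}+q^{-1/8}\Sigma^{\trho}\mathcal{B}$, so for the eleven classes where you do not have $Z^{\trho}=2\hat\psi^{\hat\rho}$ you must also show $\alpha^{\trho}\equiv 0 \bmod 2$ (Part 1 of the paper's proof). This is repaired with tools you already have: on the two remaining $(g,e)$-type classes $\hat\alpha^{\hat\rho}+\hat\alpha^{\varphi(\hat\rho)}=2+\hat\chi_2(\hat\rho)+\hat\chi_3(\hat\rho)=2+\tchi_3(\trho)$ falls under your fusion parity check, and on $(g,\varphi)$-type classes $\alpha^{\trho}=\tchi_1(\trho)+\tchi_2(\trho)+\tchi_3(\trho)$ vanishes identically because $(1,2)$ is a splitting pair and $\tchi_3$ is of fusion type -- but it must be said. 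Second, on the $(g,\varphi)$-type classes your phrase ``together with the a priori integrality \ldots forced into $2\mathbb{Z}$'' is doing real work for the pairs $(7,8)$ and $(16,17)$, whose character values are $\pm\sqrt{5}$ and $\pm\sqrt{3}$ there: an even \emph{nonzero} difference $\widetilde N_a(n)-\widetilde N_{a'}(n)$ would still contribute an irrational term, so rationality must be invoked to force that difference to vanish exactly, not merely to be even. Spell out those two steps and your argument coincides with the paper's.
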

\noindent \textbf{Remark:} This follows from a theorem of Creutzig, H\"ohn and Miezaki\cite{Creutzig:2012} where they show that the $Z^\rho(\tau,z)$ for $M_{24}$ conjugacy classes $(7a/b, 14a/b, 15a/b, 23a/b)$
have some odd coefficients. These are precisely  the conjugacy classes of $M_{24}$ that do not reduce to conjugacy classes of $\Mtwo$. We provide an alternate proof assuming that the $M_{12}$ conjecture holds.
\begin{proof}
It is useful to rewrite the above character decomposition taking into account the outer automorphism $\varphi$ of $M_{12}$. With this in mind, we define 
\begin{align*}
\hat{N}_2^\pm(n)= (\hat{N}_2(n) \pm \hat{N}_3(n))\quad&,\quad \hat{N}_9^\pm(n)= (\hat{N}_9(n) \pm \hat{N}_{10}(n))\ , \\
\hat{\chi}_2^\pm(\hat\rho)=  (\hat\chi_2(\hat\rho)) \pm \hat\chi_3(\hat\rho)))\quad&,\quad\hat\chi_9^\pm(\hat\rho))=   (\hat\chi_9(\hat\rho)) \pm \hat\chi_{10}(\hat\rho))\ .
\end{align*}
Note that $\hat\chi_2^\pm(\varphi(\hat\rho))=\pm \hat\chi_2^\pm (\hat\rho)$ and $\hat\chi_9^\pm(\varphi(\hat\rho))=\pm \hat\chi_9^\pm (\hat\rho)$ by construction.  We can then write the character decomposition as follows
 \begin{multline}
\hat{\Sigma}^{\hat{\rho}} = -1 + \sum_{n=1}^\infty \bigg( \sum_{\substack{m=1\\ m\neq 2,3,9,10}}^{15}  \hat{N}_m(n) \hat{\chi}_m(\hat\rho) +\tfrac12  \hat{N}_2^+(n) \hat\chi^+_2(\hat\rho) + \tfrac12  \hat{N}_9^+(n) \hat\chi^+_9(\hat\rho) \\
+ \tfrac12  \hat{N}_2^-(n) \hat\chi^-_2(\hat\rho) + \tfrac12  \hat{N}_9^-(n) \hat\chi^-_9(\hat\rho) 
\bigg) q^n\ .
\end{multline}

Since the $\mathcal{N}=4$ characters have integer coefficients, it suffices to show that $\alpha^\trho=0\mod 2$ and $\Sigma^\trho=0\mod 2$ for all conjugacy classes.
\begin{itemize}
\item[Part 1:] $\alpha^\trho= 1 + \tchi_2(\trho)+\tchi_3(\trho)=0\mod 2$ for all conjugacy classes as can be explicitly checked.
\item[Part 2:] Consider the  $\Mtwo$ conjugacy classes that arise from elements of type $(g,e)$. For such cases, 
\begin{align}
\Sigma^\trho(\tau) &= \hat{\Sigma}^{\hat{\rho}} + \hat{\Sigma}^{\varphi(\hat{\rho})}\ , \nonumber \\
&=  -2 +  \sum_{n=1}^\infty \bigg(\Big[\!\!\!\!
\sum_{\substack{m=1\\ m\neq 2,3,9,10}}^{15} \!\!\! 
2\hat{N}_m(n) \hat{\chi}_m(\hat\rho) \Big]+ \hat{N}^+_2(n)\hat{\chi}^+_2(\hat\rho)
+ \hat{N}^+_9(n)\hat{\chi}^+_9(\hat\rho) 
\bigg) q^n \ ,\nonumber 
\end{align}
Comparing the above equation with Eq. \eqref{M122Sigma}, we obtain
\begin{equation} \label{splittingrelation}
2\hat{N}_m(n) = \tilde{N}_{a}(n)+\tilde{N}_{a'}(n) \text{ for splitting characters}\ ,
\end{equation}
and for the fusion type,
\[
\hat{N}^+_2(n) = \tilde{N}_3(n) \quad,\quad \hat{N}^+_9(n) = \tilde{N}_{11}(n)\quad,\quad  2\hat{N}_4(n)= 2\hat{N}_5(n)= \tilde{N}_4 (n)\ .
\]
Next, considering the expression for $\Sigma^\trho(\tau)$ above modulo 2, we obtain
\begin{align}
\Sigma^\trho(\tau) 
 = \hat{N}^+_2(n)\hat{\chi}^+_2(\hat\rho)
+ \hat{N}^+_9(n)\hat{\chi}^+_9(\hat\rho) \mod 2\ .
\end{align}
Further, one has $\hat{\chi}^+_2(\hat\rho)=\hat{\chi}^+_9(\hat\rho)=0\mod 2$ as one can explicitly check. Thus, for conjugacy classes for elements of type $(g,e)$, one has $\Sigma^\trho(\tau)=0\mod 2$.
\item[Part 3:] Now consider conjugacy classes of $\Mtwo$ associated with elements of type $(g,\varphi)$. For these conjugacy classes, the character for all fusion representations vanish. Thus,
\begin{align*}
\Sigma^\trho(\tau)&= -2 +  \sum_{n=1}^\infty \left(\sum_{a\in \text{splitting}} \widetilde{N}_a(n) \tchi_a(\trho) \right) q^n  \\
&= -2 +  \sum_{n=1}^\infty \bigg(\sum_{\substack{\text{pairs }\\(a,a')\in \text{splitting}}} (\widetilde{N}_a(n) \tchi_a(\trho) + \widetilde{N}_{a'}(n) \tchi_{a'}(\trho) \Big) q^n \\
&= -2 +  \sum_{n=1}^\infty \left(\sum_{\substack{\text{pairs }\\(a,a')\in \text{splitting}}} (\widetilde{N}_a(n) - \widetilde{N}_{a'}(n)) \tchi_a(\trho)  \right) q^n \ ,
\end{align*}
where we have used the relation $\tchi_a + \tchi_{a'}=0$ for all pairs of splitting representations. The characters for the pairs $(7,8)$ and $(16,17)$ are irrational for these conjugacy classes. For such pairs, rationality (and hence integrality) of the Fourier-Jacobi coefficients of the EOT Jacobi forms implies that $ (\widetilde{N}_a(n) - \widetilde{N}_{a'}(n))=0$ for $(a,a')=(7,8), (16,17)$. For all other pairs, equation \eqref{splittingrelation} implies the weaker condition, $ (\widetilde{N}_a(n) - \widetilde{N}_{a'}(n))=0 \mod 2$. Thus, for conjugacy classes for elements of type $(g,\varphi)$, one has $\Sigma^\trho(\tau)=0\mod 2$.

\end{itemize}
\end{proof}

\subsection{Checking the $M_{12}$ conjecture}

We have seen in Proposition \ref{easycases} that for $M_{12}$ conjugacy classes $\wrho\neq 4a/4b/8a/8b$, one has
\begin{equation}
\wpsi^\wrho(\tau,z) = \frac12\, Z^\rho(\tau,z)\ ,
\end{equation}
where $ Z^\rho(\tau,z)$ is the EOT Jacobi form for the $M_{24}$  conjugacy class $\rho=(\wrho)^2$ i.e., $\rho=1^{2a_1}2^{2a_2}\cdots N^{2a_N}$ if $\wrho=1^{a_1}2^{a_2}\cdots N^{a_N}$. That leaves four undetermined Jacobi forms associated with the classes $4a/4b$ and $8a/8b$. In these cases, we have the relation that relates the sums of Jacobi form of the two $M_{12}$ conjugacy classes to EOT Jacobi forms.
\begin{align}
Z^{4b}(\tau,z) &= \wpsi^{\widehat{4a}}(\tau,z) +  \wpsi^{\widehat{4b}}(\tau,z) \\
Z^{8b}(\tau,z) &= \wpsi^{\widehat{8a}}(\tau,z) +  \wpsi^{\widehat{8b}}(\tau,z) \ .
\end{align}
Thus it suffices to determine $\wpsi^{\widehat{4a}}(\tau,z)$ and $\wpsi^{\widehat{8a}}(\tau,z)$ to obtain Jacobi forms for all conjugacy classes. These two examples have vanishing twisted Witten index as the corresponding cycle shapes have no one-cycles. Thus, for $\wrho=4a$ and $8a$, one has
\begin{equation}
\wpsi^{\wrho}(\tau,z) = \gamma^\wrho(\tau) \   \frac{\vartheta_1(\tau,z)^2}{\eta(\tau)^6}\ ,
\end{equation}
where $\gamma^\wrho(\tau)$ is a weight two modular form of suitable subgroup of $SL(2,\mathbb{Z})$.

\subsubsection*{Constraints}

There are two kinds of constraints that we impose on the weight two modular forms.
\begin{enumerate}
\item \textbf{Non-negativity of coefficients in the character expansion:} We anticipate that all the Jacobi forms associated with the fifteen $M_{12}$-conjugacy classes admit a decomposition in terms of the fifteen $M_{12}$ characters. The $\mathcal{N}=4$ decomposition (as carried out by Eguchi-Hikami\cite{Eguchi:2008gc,Eguchi:2009cq}) implies 
\begin{equation}\label{characterexpand}
\wpsi^\wrho_{0,1}(\tau,z) = \widehat{\alpha}^{\,\wrho}~ \mathcal{C}(\tau,z) + q^{-\tfrac18}~\widehat{\Sigma}^{\,\wrho}(\tau)~ \mathcal{B}(\tau,z)\ ,
\end{equation}
where  $\widehat{\alpha}^{\,\hat{\rho}}= 1+\widehat{\chi}_2(\hat{\rho})$ and
\begin{align}
\widehat{\Sigma}^{\wrho}(\tau)= -1 + \widehat{\chi}_6\ q + [\widehat{\chi}_{8}+\widehat{\chi}_{15} ]\ q^2  + [\widehat{\chi}_{11}+2\ \widehat{\chi}_{13}+2\ \widehat{\chi}_{14}+\widehat{\chi}_{15}) ]\ q^3   + \cdots \nonumber
\end{align}
We have written out the first four terms in the character expansion of $\widehat{\Sigma}^{\wrho}$ as there is no ambiguity arising from the undetermined conjugacy classes. The ambiguity arises from irreps that get exchanged by the outer automorphism of $M_{12}$ -- these correspond to the four characters $\wchi_2\leftrightarrow \wchi_3$ and $\wchi_9\leftrightarrow \wchi_{10}$. In particular, this implies that we know the first four terms in the $q$-series for $\gamma^{\widehat{4a}}$ and $\gamma^{\widehat{8a}}$. One has
\begin{equation}\label{initialterms}
\gamma^{\widehat{4a}}= 1- 4q +4q^2 +4q^3 + \cdots \quad,\quad
\gamma^{\widehat{8a}}= 1- 2q -2q^2 +2q^3+\cdots
\end{equation}
The constraint from group theory is that coefficients in the character expansion of $\widehat{\Sigma}^\wrho$ are  all \textit{non-negative} integers.
\item \textbf{Modularity:} The multiplicative eta products for the two classes $4a/8a$ are modular forms at level $16$ and $64$ respectively. From our experience with determining such examples for the $M_{24}$ Jacobi forms for conjugacy classes with no one-cycles, these do provide a rough guide to determining the levels for our Jacobi forms. For the class $4a$, the first four terms already determined imply that the level must be larger than $8$. 
\end{enumerate}
We find solutions where both conjugacy classes are determined by weight two modular forms at level 32. 

\subsection{The conjugacy classes $4a/8a$}
Our non-unique proposal for the Jacobi forms for the conjugacy classes $4a/8a$ are as follows:
\begin{equation}
\begin{aligned}
\widehat{\psi}^{\widehat{4a}}(\tau,z) &=\Big[ \gamma_{4a}(\tau)+ 4  \alpha(\tau) \Big]\  \frac{\vartheta_1(\tau,z)^2}{\eta(\tau)^6}\ ,\\
\widehat{\psi}^{\widehat{8a}} (\tau,z)&=\Big[ \gamma_{8a}(\tau)-2 \alpha(\tau) \Big]\  \frac{\vartheta_1(\tau,z)^2}{\eta(\tau)^6}\ ,
\end{aligned}
\end{equation}
where $\gamma_{4a}(\tau)$ and $\gamma_{8a}(\tau)$ are the following weight two modular forms of $\Gamma_0(32)$:
\begin{align}
\gamma_{4a}(\tau) &:= -\tfrac{31}{192} E_2^{(2)}(\tau) + \tfrac{37}{64}  E_2^{(4)}(\tau)-\tfrac{77}{48}  E_2^{(8)}(\tau)+\tfrac{35}{16}  E_2^{(16)}(\tau)  - \tfrac{9}4 f(\tau)\nonumber \\ &\qquad\qquad  -\tfrac12 \eta_{4^28^2}(\tau) 
\\
&= 1- 4q +4q^2 +4q^3-4q^4 -20 q^5 +16q^6 + 8q^7 +24q^8 +\cdots   \nonumber \\
\gamma_{8a}(\tau) &:=\tfrac{41}{384} E_2^{(2)}(\tau) - \tfrac{45}{128}  E_2^{(4)}(\tau)+\tfrac{301}{192}  E_2^{(8)}(\tau)
-\tfrac{155}{32}  E_2^{(16)}(\tau)
\nonumber  \\
  &\qquad\qquad  +\tfrac{217}{48}  E_2^{(32)}(\tau)+\tfrac38 f(\tau) + \tfrac12 f(2\tau)-\tfrac{13}4 \eta_{4^28^2}(\tau) \\
  &= 1- 2q -2q^2 +2q^3+2q^4 + 14q^5 -12 q^6 + 4q^7 -32 q^8 + \cdots \ .\nonumber 
 \end{align}
The ambiguity in our solution is  given by a modular form  $\alpha(\tau)$ with integral coefficients and is parametrised by four integers $(d_4,d_5,d_6,d_8)$.
\begin{align*}
\alpha(\tau) = &\phantom{+}d_4 ( -\tfrac{1}{32}  E_2^{(4)}(\tau)+\tfrac{7}{64}  E_2^{(8)}(\tau)-\tfrac{5}{64}  E_2^{(16)}(\tau) ) \\
&+ d_5(-\tfrac{1}{256} E_2^{(2)}(\tau) + \tfrac{1}{256}  E_2^{(4)}(\tau)-\tfrac{1}{16} f(\tau) + \tfrac18\eta_{4^28^2}(\tau)) \\
&+d_6 (-\tfrac{1}{384} E_2^{(2)}(\tau) + \tfrac{3}{256}  E_2^{(4)}(\tau)-\tfrac{7}{768}  E_2^{(8)}(\tau) -\tfrac18 f(2\tau)) \\
&+ d_8( -\tfrac{7}{192}  E_2^{(8)}(\tau)+\tfrac{15}{128}  E_2^{(16)}(\tau) -\tfrac{31}{384}  E_2^{(32)}(\tau)) \\
=& \phantom{+} d_4\ q^4 - d_5\ q^5 + d_6\ q^6 + d_8\ q^8 + \cdots .
\end{align*}
We can see that these four integers would be  determined if we could fix either $\gamma^{\widehat{4a}}$ or $\gamma^{\widehat{8a}}$ to order $q^8$. Non-negativity of the multiplicities defined in Eq. \eqref{M12Sigma} for the first few values of $n$ leads to the following inequalities:
\[
-4\leq d_4 \leq 0\quad,\quad -3\leq (d_5 - 3d_4) \leq 3 \ , \]
\[
-10 \leq d_6 +3(3d_4 -d_5) \leq 8
\quad ,\quad -32 \leq  d_8 + 9 d_6  - 22 d_5 +51 d_4\leq 46 \ . 
\]
There exists no solution with $d_5=-1$ and $d_6=d_8=0$. Thus there is no solution for which $\gamma^{\widehat{4a}}$ is a modular form of  $\Gamma_0(16)$.
To order $q^{128}$, we find 1427 solutions that include $d_4=d_5=d_6=d_8=0$. We have checked that  all these solutions continue to satisfy the positivity constraints to order $q^{512}$ and possibly to all orders\footnote{All 1427 solutions have been listed in the LaTeX source file  after the \texttt{$\backslash$end\{document\}}command. They can be accessed by downloading the source file from arXiv.org.}.

The result can also be understood in terms of the expansion of $\hat\Sigma^{\hat\rho}$ in terms of representations of $M_{12}$. The above solution completely determines the multiplicities
$\hat{N}_m(n)$ defined in Eq. \eqref{M12Sigma} for all representations except $m=9,10$.  In particular, it uniquely fixes $\hat{N}_2(n)$ and $\hat{N}_3(n)$. Table \ref{characterdecomp} lists out the multiplicities of the various $M_{12}$ characters, i.e., $\hat{N}_m(n)$, in the character decomposition for the $M_{12}$ Jacobi form for $n\in[0,32]$ when $d_4=d_5=d_6=d_8=0$.

\subsection{There is no moonshine for $M_{12}$}

The EOT Jacobi forms associated with elements of order $N$ transform as Jacobi form at level $N$ with phases that are (powers of) twelfth-roots of unity\cite{ChengDuncan:2012}. Further, it is known that $H_3(M_{24},\mathbb{Z})=\mathbb{Z}_{12}$.  
The non-uniqueness of our solution for $M_{12}$ moonshine suggests we look for further constraints beyond the ones that we have imposed. It is known that like $M_{24}$, $M_{12}$ has a non-trivial 3-cocycle\cite{DutourSikiric2009}. One has $H_3(M_{12},\mathbb{Z})= \mathbb{Z}_8 \oplus \mathbb{Z}_6$. With this in mind we looked to see if whether suitable powers of the  1427 solutions for $\gamma^{\widehat{4a}}$ and $\gamma^{\widehat{8a}}$  are modular forms of $\Gamma_0(4)$ and $\Gamma_0(8)$ respectively. We did not find any solution. A more detailed analysis  where we go beyond looking at just the 1427 solutions leads to a stronger negative result.

\begin{prop}\label{phases}
 There exist no modular forms $\gamma^{\widehat{4a}}$ and $\gamma^{\widehat{8a}}$, whose initial terms are as given in Eq. \eqref{initialterms}, that transform with phases given by (powers of) eighth-roots of unity at levels 4 and 8 respectively such that the non-negativity conditions on the multiplicities $\hat{N}_2(n)$, $\hat{N}_3(n)$, $\hat{N}_9(n)$  and $\hat{N}_{10}(n)$ defined in Eq. \eqref{M12Sigma} holds for all $n$.
\end{prop}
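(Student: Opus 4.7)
The plan is to recast the question as a finite-dimensional linear-algebra problem and then produce an explicit finite positivity obstruction. A weight-two modular form on $\Gamma_0(N)$ whose multiplier takes values in $\mu_8$ is a genuine modular form on $\ker(v)\subseteq\Gamma_0(N)$, a finite-index subgroup; equivalently, the space decomposes as a direct sum $\bigoplus_{v}M_2(\Gamma_0(N),v)$ over multiplier systems of order dividing eight. For $N=4,8$ each summand is low-dimensional, and concrete bases can be written in terms of Eisenstein series $E_2^{(d)}$ (with $d\mid 4$ or $d\mid 8$), weight-two eta quotients, and if needed theta series of appropriate discriminants. After this enumeration one has a finite-dimensional affine space of candidate pairs $(\gamma_{4a},\gamma_{8a})$.

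Next, I would impose every constraint already available. The initial terms of \eqref{initialterms} fix four Fourier coefficients of each $\gamma$. The sum relations $\wpsi^{\widehat{4a}}+\wpsi^{\widehat{4b}}=Z^{4b}$ and $\wpsi^{\widehat{8a}}+\wpsi^{\widehat{8b}}=Z^{8b}$, combined with the requirement that $\gamma_{4b}$ and $\gamma_{8b}$ also lie in the appropriate $\mu_8$-multiplier spaces, give further linear equations. Moreover, all multiplicities $\hat{N}_m(n)$ for $m\notin\{2,3,9,10\}$ are already determined by Proposition \ref{easycases}, so the only residual freedom in $\widehat\Sigma^{\widehat{4a}}$ and $\widehat\Sigma^{\widehat{8a}}$ lies in the contributions $\hat{N}_2^{-},\hat{N}_9^{-}$ to the differences (in the notation of the preceding proof). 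Computing these quantities as linear functionals of the residual parameters yields, for each $n$, explicit linear forms in finitely many unknowns.

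Finally, I would run the linear programme: non-negativity of $\hat{N}_2(n), \hat{N}_3(n), \hat{N}_9(n), \hat{N}_{10}(n)$ for each $n\leq N_{\max}$ (say $N_{\max}=50$) is a finite system of linear inequalities in the residual parameters, and its infeasibility can be certified by a Farkas-type argument, i.e.\ a non-negative combination of the inequalities yielding a strictly negative constant. The conceptual obstacle is to be sure that the enumeration of $\mu_8$-multiplier systems on $\Gamma_0(4)$ and $\Gamma_0(8)$ is exhaustive, since overlooking a single multiplier system would invalidate the non-existence claim; carefully classifying the characters $\Gamma_0(N)\to\mu_8$ via the abelianization, and using that every such form is contained in $M_2(\Gamma_1(32))$, is the technical heart of the argument. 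Once the classification is secure, the positivity failure is a routine but delicate finite computation whose certificate, an explicit $n$ together with Farkas coefficients, is the output of the proof.
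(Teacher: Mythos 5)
Your overall strategy---reduce to a finite-dimensional space of candidates, impose the known coefficients, and exhibit a finite positivity obstruction---is the same in spirit as the paper's, but the two proofs diverge at the crucial step of handling the unknown phases, and it is precisely there that your plan has a gap. You propose to enumerate all multiplier systems $v\colon\Gamma_0(4),\Gamma_0(8)\to\mu_8$ and to use that every candidate form lies in $M_2(\Gamma_1(32))$. That containment is not justified and is the weak point: $\Gamma_0(4)$ and $\Gamma_0(8)$ have large free quotients, so an arbitrary character with values in eighth roots of unity can have a \emph{non-congruence} kernel, and the corresponding weight-two forms need not sit inside any $M_2(\Gamma_1(M))$; your explicit bases of Eisenstein series and eta quotients would then not span the candidate space, and the non-existence claim would not be established for all admissible $\gamma^{\widehat{4a}},\gamma^{\widehat{8a}}$. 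You correctly flag the exhaustiveness of the character classification as the ``technical heart,'' but the proposal does not actually resolve it, and resolving it is harder than you suggest.

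The paper sidesteps this entirely with one device you do not use: it passes to the \emph{eighth powers} $(\gamma^{\widehat{4a}})^8$ and $(\gamma^{\widehat{8a}})^8$. Whatever the $\mu_8$-valued character is, the eighth power has trivial character and is therefore an honest weight-$16$ form in $M_{16}(\Gamma_0(4))$ (dimension $9$) or $M_{16}(\Gamma_0(8))$ (dimension $17$), with no classification of multiplier systems needed. The price is that the resulting relations among Fourier coefficients become polynomial rather than linear: the paper solves for the free parameters $c_2(n)$, $c_9(n)$ (the residual freedom in $\hat N_2,\hat N_3,\hat N_9,\hat N_{10}$, exactly as you identify), obtains a quadratic expression for $c_9(9)$ in the remaining sixteen unknowns, and then a simple interval bound forces $c_9(9)\le -905523$ while positivity of $\hat N_9(9)$ forces $|c_9(9)|\le 75$. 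So where you would run a linear programme and certify infeasibility by a Farkas argument (once per character pair, of which there are many), the paper needs only a single interval computation on one coefficient. If you want to salvage your linear route, you would have to either restrict the proposition to congruence-type characters (weakening the statement) or justify the congruence containment; adopting the eighth-power reduction is the cleaner fix.
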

\begin{proof} Let $\tilde{N}_a(n)$ denote multiplicity of the representation with $\Mtwo$ character $\tilde{\chi}_a$ in the expansion of $\Sigma^{\tilde{\rho}}$ and similarly $\hat{N}_m(n)$ is defined 
for $M_{12}$. Since $\hat{N}_2(n)$, $\hat{N}_3(n)$, $\hat{N}_9(n)$  and $\hat{N}_{10}(n)$ remain unfixed, we write them as follows:
$$ \hat{N}_2(n)=\frac{\tilde{N}_3(n)}2 + c_2 (n),\ \hat{N}_3(n)=\frac{\tilde{N}_3(n)}2 - c_2 (n)\ ,$$
and
$$ \hat{N}_9(n)=\frac{\tilde{N}_{11}(n)}2 + c_9 (n),\ \hat{N}_{10}(n)=\frac{\tilde{N}_{11}(n)}2 - c_9 (n)\ .$$
We can now write the Fourier expansion of weight two modular form for $\gamma^{\widehat{4a}}$ and $\gamma^{\widehat{8a}}$ in terms of multiplicities with $c_2(n)$ and $c_9(n)$ as unknowns. From the positive definiteness of the multiplicities we have following constraints.
$$|c_2(n)| \le  \frac{\tilde{N}_3(n)}2 \text{  and  } |c_9(n)| \le  \frac{\tilde{N}_{11}(n)}2\ .$$
Eq. \eqref{initialterms} determines $c_2(n)$ and $c_9(n)$ for $n\leq 4$ with the others remaining unfixed. 

We look for them to be modular forms with unknown phases that are powers of an eighth-root of unity. We take the eighth powers of  $\gamma^{\widehat{4a}}$ and $\gamma^{\widehat{8a}}$ -- these are expected to be  modular forms of weight 16 at level 4 (of dimension 9) and 8 (of dimension 17) respectively. 
 The remaining 16 coefficients can be expressed in terms of $c_2(n)$ ($n\in[5,16]$) and $c_9(n)$ ($n\in[5,8]$) by matching up to order $q^8$ for $\gamma^{\widehat{4a}}$ and order $q^{16}$ for $\gamma^{\widehat{8a}}$. Modularity determines all other $c_2(n)$ and $c_9(n)$. 
Let us focus on $c_9(9)$ which can be expressed in terms of the 16 unknowns. We obtain
 \begin{multline*}
 c_9(9) = -1078695 + 2359 c_2(6) - 396 c_2(7) + 31 c_2(8) - c_2(9) - 
    65930 c_9(4) \\ - 420 c_9(4)^2  - 1830 c_9(5) + 28 c_9(4) c_9(5) + 
   2359 c_9(6) - 396 c_9(7) + 31 c_9(8)\ .
 \end{multline*}
 Using the constraints from positivity on the coefficients appearing on the right hand side of
 the above equation, we get
 $$-1254891 \le c_9(9) \le -905523\ .$$
Further, positive definiteness of $\hat N_9(9)$ implies $-75\le c_9(9) \le 75$.
 Clearly the two constraints are not compatible with each other. Thus, there exists no solution that is compatible with the positive definiteness of $\hat N_9(9)$.
\end{proof}

\subsection{Moonshines for $L_2(11)$ }

We have seen that there is  no moonshine for $M_{12}$. With this in mind, 
we  look for subgroups of $M_{12}$ for which the characters $\hat\chi_2^-(\hat\rho)$ and $\hat\chi_9^-(\hat\rho)$ vanish on restriction to the sub-group. There are two such
sub-groups, both isomorphic to $L_2(11)$. The first is a maximal subgroup of $M_{12}$ and the second is a maximal subgroup of $M_{11}\subset M_{12}$. As sub-groups of $M_{12}$, these two groups are \textit{not} conjugate to each other and thus lead to distinct moonshines\cite{Conway1971}.

\subsubsection{$L_{2}(11)$}
$L_2(11)$ is Artin's notation for the finite simple  group $PSL(2,\mathbb{F}_{11})=SL(2,\mathbb{F}_{11})/\mathbb{F}_{11}^{\times}$, where $\mathbb{F}_{11}$ is the prime field of integers modulo $11$. It has a natural action on the projective line, $PL(11)$,
via projective linear transformations:
\[
x \rightarrow \frac{a x +b }{c x +d} \quad, \quad x\in PL(11)\ .
\]
The projective line $PL(11)$ consists of 12 points whose inhomogeneous coordinates are given by the set $\Omega =(0,1,2,3,\ldots, 9, X=10, \infty)$. This provides a 12-dimensional permutation representation of $L_2(11)$. In this representation, $L_2(11)$ is generated as $\langle  \alpha, \beta, \gamma\rangle=:L_2(11)_A$, where
\begin{equation}
\alpha: x \rightarrow x+1 \quad,\quad \beta: x \rightarrow 3 \cdot x \quad, \quad \gamma: x \rightarrow -1/x \quad.
\end{equation}
Explicitly, one has
\begin{align*}
\alpha&=(\infty)(0,1,2,3,4,5,6,7,8,9,X) \\
\beta &=(\infty)(0)(1,3,9,5,4) (2,6,7,X,8) \\
\gamma &= (\infty,0) (1,X),(2,5),(3,7) (4,8) (6,9)
\end{align*}
One has $\alpha^{11}=\beta^5=\gamma^2=1$. 
The eight conjugacy classes of $L_2(11)$ are given by the following cycle shapes  in $L_2(11)_A$:
\[
\begin{array}{c|rrrccrcc}
\rho &1a & 2a &3a& 5a &5b &6a &11a & 11b \\[3pt] \hline
\text{cycle shape} & 1^{12} & 2^6 & 3^4 & 1^25^2 & 1^2 5^2 & 6^2 & 1^1 11^1 & 1^111^1\phantom{\Big|}\\
\text{element} & 1 & \gamma & \alpha \gamma &  \beta &\beta^{-1} &  \alpha\gamma\beta & \alpha & \alpha^{-1} \\
\end{array} 
\]
Let $\delta$ represent the permutation (with cycle shape $1^42^4$) acting on $PL(11)$:
\[
\delta=(\infty) (0)(1)(2,X)(3,4)(5,9)(6,7)(8)\ .
\]
A second construction of $L_2(11)$, that we call $L_2(11)_B$, is generated by $\langle \alpha, \beta, \delta\rangle$.
All three generators fix $\infty$ and  thus $L_2(11)_B$ permutes points in $\Omega \backslash \infty$.  The cycle shapes for 
the conjugacy classes for  $L_2(11)_B$ are
\[
\begin{array}{c|cccccccc}
\rho &1a & 2a &3a& 5a &5b &6a &11a & 11b \\[3pt] \hline
\text{cycle shape} & 1^{12} & 1^4 2^4 & 1^33^3 & 1^25^2 & 1^2 5^2 & 1^12^13^16^1 & 1^1 11^1 & 1^111^1\phantom{\Big|} \\
\text{element} & 1 & \delta & \alpha \delta &  \beta &\beta^{-1} &  \alpha \delta \beta & \alpha & \alpha^{-1} \\
\end{array}
\]

The important observation here is that both $L_2(11)_A$ and $L_2(11)_B$ do not have any elements of order 4 and 8. Thus the conjugacy classes $4a/4b$ and $8a/8b$ do not reduce to conjugacy classes of these sub-groups. The conjugacy class $10a$ of $M_{12}$, for which we do know the Jacobi form, also does not appear.

Below we provide first few terms that appear in the character expansion  which is  the analog of Eq. \eqref{M12Sigma} for the two $L_{2}(11)$ subgroups. For $L_2(11)_A\subset M_{12}$, one has\footnote{For the two equations that follow, the characters that appear are those for $L_2(11)$. }
\begin{multline}
\Sigma = -\chi _1+ \left(\chi _1+2 \chi _5+\chi _7+\chi _8\right)q + \left(\chi _2+\chi _3+5 \chi _4+2 \chi
   _5+5 \chi _6+4 \chi _7+4 \chi _8\right) q^2 \\ + \left(\chi _1+8 \chi _2+8 \chi _3+9 \chi _4+12 \chi
   _5+13 \chi _6+14 \chi _7+14 \chi _8\right)q^3\\ + \left(2 \chi _1+15 \chi _2+15 \chi _3+39 \chi _4+32
   \chi _5+37 \chi _6+42 \chi _7+42 \chi _8\right) q^4
   +O\left(q^5\right)
\end{multline}
For $L_1(11)_B\subset M_{11}\subset M_{12}$
\begin{multline}
\Sigma = -\chi _1+ \left(\chi _4+\chi _6+\chi _7+\chi _8\right) q+ \left(\chi _1+3 \chi _2+3 \chi _3+2 \chi
   _4+4 \chi _5+4 \chi _6+4 \chi _7+4 \chi _8\right)q^2\\ 
   + \left(\chi _1+4 \chi _2+4 \chi _3+15 \chi
   _4+10 \chi _5+13 \chi _6+14 \chi _7+14 \chi _8\right) q^3 \\
   +\left(4 \chi _1+19 \chi _2+19 \chi _3+31
   \chi _4+38 \chi _5+35 \chi _6+42 \chi _7+42 \chi _8\right)q^4
   +O\left(q^5\right)
\end{multline}

\section{Siegel Modular Forms for $L_2(11)_{A}$ and $L_2(11)_B$}

\begin{table}
\centering
\begin{tabular}{c|c|c|c|c|c}
$M_{12}$ Conj. Class & Cycle Shape $\hat\rho$ & $N_{\hat\rho}$ & $p_{\hat\rho}$ &$k_{\hat\rho}$ &  $\wchi_{\hat\rho}(d)$ \\[3pt] \hline
1a & $1^{12}$ & $1$ & 1&$6$ &\phantom{$\left(\tfrac{-1}{d}\right)$}\\
2a & $2^6$ & $2$ & $2$ &$3$ & $\left(\tfrac{-1}{d}\right)$\\
2b & $1^4 2^4$ & $2$ & $1$ & $2$  \\
3a & $1^3 3^3$ & $3$ &$3$ & $3$ & $\left(\tfrac{-3}{d}\right)$\\
3b & $3^4$ & $3$ & $3$ & $2$ \\
4a & $2^24^2$ & $4$ & $2$ & $2$ \\
4b & $1^42^2 4^4 / 2^2 4^2$ & $4$ & $2$& $3$ &$\left(\tfrac{-1}{d}\right)$\\
5a & $1^2 5^2 $  & $5$ & $1$ & $2$\\
6a & $6^2$ & $6$ & $6$ &  $1$&$\left(\tfrac{-1}{d}\right)$\\
6b & $1^12^13^1 6^1$ & $6$ & $1$ & $2$ \\
8a & $4^18^1$ & $8$ & $4$ & $1$& $\left(\tfrac{-2}{d}\right)$\\
8b & $1^2 2^14^18^2/ 4^1 8^1$ &  $8$ & $4$ & $2$\\
10a & $2^1 10^1 $  & $10$ & $2$ & $1$&$\left(\tfrac{-20}{d}\right)$ \\[3pt]
11a/b & $1^1 11^1 $  & $11$ & $1$ & $1$&$\left(\tfrac{-11}{d}\right)$ \\[3pt] \hline
\end{tabular}\caption{$M_{12}$ conjugacy classes and the corresponding cycle shapes. The associated eta products are modular forms of $\Gamma_0(2N_{\hat\rho}\,p_{\hat\rho},2)$ with weight $k_{\hat\rho}$ and Dirichlet character $\chi$ with $\Gamma_1(2N_{\hat\rho}\,p_{\hat\rho},2)\subset \text{ker}(\chi)$.  The values are given in columns 3-5 }\label{TableEtaProducts}
\end{table}

The construction of Borcherds-Kac-Moody Lie algebras is intimately connected to modular forms that appear as the Weyl denominator formula for the BKM Lie algebra. In this section, we shall pursue this approach by constructing genus-two Siegel modular forms. First, we show that the two distinct $L_2(11)$ moonshines naturally lead to a product formula given by Eq. \eqref{productformulamain}. Modularity of this formula is not manifest in the construction and we prove this in two ways -- (i) by constructing the sum side as an additive lift and (ii) by showing that the product formula is equivalent to Borcherds products. The second method always works while the additive lift works in most cases.

\subsection{The multiplicative lift}

The connection with $L_2(11)$ moonshine leads to a Siegel modular form that is given by the following formula (on repeating arguments given in \cite{Govindarajan:2011em}):
\begin{align}\label{multiplicativelift}
\Delta_k^{\hat\rho}(\mathbf{Z}) = s^{1/2}\ {\hat\phi}^{\hat\rho}_{k,1/2}(\tau,z)\times \exp\Big(\sum_{m=1}^\infty s^m {\hat\psi}^{\hat\rho}\big|_0 T(m)(\tau,z)\Big)\ ,
\end{align}
where the twisted  Hecke operator (first defined in \cite{Govindarajan:2011em}) is given by
\[
s^m\ \hat{\psi}^{\hat\rho}\big|_0 T(m)(\tau,z) =s^m \  \tfrac1{m} \sum_{\substack{ad=m\\ b \text{ mod }d}}  \hat{\psi}^{\hat\rho_a}\left(\tfrac{a\tau+b}{d},az\right)\
\]
where $\rho_a$ is the conjugacy class of the $a$-th power of an element in the conjugacy class $\rho$. Further, ${\hat\phi}^{\hat\rho}_{k,1/2}(\tau,z)$ is defined as follows:
\begin{equation}\label{additiveseed}
\hat{\phi}^{\hat\rho}_{k,1/2}(\tau,z) = \frac{\vartheta_1(\tau,z)}{\eta(\tau)^3} \times \eta_{\hat\rho}(\tau)\ ,
\end{equation}
where $\eta_{\hat\rho}(\tau)$ is an eta product, $k_{\hat\rho}$ its weight as given in Table \ref{TableEtaProducts} and $k=(k_{\hat\rho}-1)$.

Again, as in \cite{Govindarajan:2011em} with $L_2(11)$ playing the role of $M_{24}$, we will show that Eq. \eqref{multiplicativelift} implies a product formula for $\Delta_k^{\hat\rho}(\mathbf{Z})$. Let $g$ be an element of order $N$ and $\hat{\rho}_a$ denote the conjugacy class of the element $g^a$. Further, set $\hat\rho_0=1^{12}$. Define the Fourier coefficients, $c^a(n,\ell)$, of the Jacobi form $\hat{\psi}^{\hat{\rho}_a}_{0,1}(\tau,z)$ as follows
\begin{equation}
\hat{\psi}^{\hat{\rho}_a}_{0,1}(\tau,z) = \sum_{n=0}^\infty \sum_{\ell\in\mathbb{Z}} c^{a}(n,\ell)\ q^nr^\ell\ .
\end{equation}
Then, define $f_\alpha(n,\ell)$ via the discrete Fourier transform ($\omega_N=\exp(2\pi i/N)$):
\[
c^{a}(n,\ell) = \sum_{\alpha=0}^{N-1} \omega_N^{\alpha a}\ f_\alpha(n,\ell)\ .
\]
One then can rewrite the formula for the multiplicative lift as follows:
\begin{equation}\label{productformulamain}
\Delta^{\hat{\rho}}_k(\mathbf{Z}) = s^{\frac12}\ \hat\phi^{\hat{\rho}}_{k,1/2}(\tau,z) \times \prod_{\alpha=0}^{N-1} \prod_{m=1}^\infty \prod_{n=0}^\infty \!\!\prod_{\substack{\ell\in \BZ\\[2pt] 4nm -\ell^2 \geq 0}}   \Big(1-\omega_N^\alpha\, q^{n} r^{\ell}s^{m}\Big)^{f_{\alpha}(nm,\ell)
}\ , 
\end{equation}
For the cases when  the order of $g\in L_2(11)$ is prime (i.e., $N=2,3,5,11$), one has the conjugacy class of $g^a$ for $a\neq 0 \textrm{ mod }N$ is the same as that of $g$. Thus, one has $\rho_a=\rho$ for $a\neq 0\textrm{ mod } N$.  For these cases, on using the product representation for the theta and eta functions that appear in $\hat\phi^{\hat{\rho}}_{k,1/2}(\tau,z)$, the above formula simplifies to
\begin{equation} \label{primeproduct}
\Delta^{\hat{\rho}}_k(\mathbf{Z}) = q^{\frac12}r^{\frac12}s^{\frac12} 
\prod_{(n,\ell,m)> 0}  \Big(1- q^{n} r^{\ell}s^{m}\Big)^{c^{1}(nm,\ell)}
\Big(1- q^{nN} r^{\ell N}s^{mN}\Big)^{\frac{c^{0}(nm,\ell)-c^1(nm,\ell)}N }
\ , 
\end{equation}
where $(n,\ell,m)> 0$ implies $n>0$, or $n=0$ and $m>0$, or $n=m=0$ and $\ell<0$. \\

For $N=6$, the product formula takes the form
\begin{multline}\label{sixproduct}
\Delta^{\hat{\rho}}_k(\mathbf{Z}) = q^{\frac12}r^{\frac12}s^{\frac12} 
\prod_{(n,\ell,m)> 0}  \Big(1- q^{n} r^{\ell}s^{m}\Big)^{c^1(nm,\ell)}
 \Big(1- (q^{n} r^{\ell }s^{m})^6\Big)^{f_{1}(nm,\ell)} \\
\Big(1- (q^{n} r^{\ell }s^{m})^2\Big)^{\frac12(c^{2}(nm,\ell)-c^1(nm,\ell))
}
  \Big(1- (q^{n} r^{\ell}s^{m})^3\Big)^{\frac13(c^{3}(nm,\ell)-c^1(nm,\ell)) }
\ ,
\end{multline}
with $f_1(n,\ell)=\frac16 (c^0(n,\ell)+c^1(n,\ell)-c^2(n,\ell)-c^3(n,\ell))$.\\

\noindent \textbf{Remarks:} All the terms in the product formula that appear with $m=0$ arise from the product representation of $\hat\phi^{\hat{\rho}}_{k,1/2}(\tau,z)$. Further, the formula is manifestly symmetric under the exchange $q\leftrightarrow s$ and odd under $r\rightarrow r^{-1}$.

\subsection{Modularity of the multiplicative lift}

The product formulae that we obtained  starting from Eq. \eqref{multiplicativelift} is not standard in the context of Siegel modular forms. Thus, we need to establish the modular properties of the product formula. We establish modularity in the next couple of sub-sections.

\subsubsection{Modularity by comparing with the additive lift}

 The additive `seed' for the Siegel modular form is a Jacobi form of weight $k=(k_{\hat\rho}-1)$ and index $1/2$ and is defined in Eq. \eqref{additiveseed}.
For $k>0$, the additive lift is given by  
\begin{equation}\label{additivelift}
\mathcal{A}\Big(\hat\phi^{\hat\rho}_{k,1/2}\Big)(\mathbf{Z}):= \sum_{m=1}^\infty s^{(2m-1)/2} \ \hat\phi^{\hat\rho}\big|_{k,1/2} T_-^M(2m-1)(\tau,z)\ ,
\end{equation}
where $T_-^M(m)$ is  the Hecke operator defined by Clery-Gritsenko\cite{Gritsenko:2008}. Let $\phi(\tau,z)$ be a Jacobi form of weight $k$ of $\Gamma_0(M)$ with character $\chi$ and index which is integral or half-integral. Then
\[
\phi^\rho\big|_{k} T_-^M(m)(\tau,z) = m^{k-1} \sum_{\substack{ad=m\\ (a,Mq)=1\\ b \text{ mod }d}} d^{-k}\ \chi(a)\ \phi^\rho\left(\tfrac{a\tau+q b}{d},az\right)\ .
\]
where $q$ is chosen such that $\Gamma_1(Mq,q)\subset\text{ker}(\chi)$.\footnote{The group $\Gamma_1(Mq,q)$ is defined as follows.
\[\left\{\begin{pmatrix} a & b \\ c & d \end{pmatrix}\in SL(2,\mathbb{Z}) ~\Big|~ c=0 \text{ mod }Mq, \ b=0 \text{ mod }q, a=1\text{ mod }Mq, d=1 \text{ mod }Mq\right\}\ .\]} For all the cases of interest, one has $q=2$ and $M=p_{\hat\rho}N_{\hat\rho}$ as given in Table \ref{TableEtaProducts}. The additive lift $\mathcal{A}\Big(\hat\phi^{\hat\rho}_{k,1/2}\Big)(\mathbf{Z})$ is a genus-two Siegel modular form of a level $N$ subgroup of $Sp(2,\mathbb{Z})$ with weight $k$.

A necessary condition for the compatibility of the  additive lift   with the multiplicative lift is:
\[
 \Big[\tfrac{\theta_1(\tau,z)}{\eta(\tau)^3}\ \eta_{\hat\rho}(\tau)\Big]\Big|_{k} T_-^M(3)(\tau,z)\ \stackrel{?}= \hat\psi^{\hat\rho}(\tau,z)   \Big[\tfrac{\theta_1(\tau,z)}{\eta(\tau)^3}\ \eta_{\hat\rho}(\tau)\Big]
\]
This is the coefficient of $s^{3/2}$ in both the lifts i.e., the ones given in Eq. \eqref{additivelift} and Eq. \eqref{multiplicativelift}. This condition holds for all the cycle shapes appearing in $L_2(11)_{A/B}$ except for the following three cycle shapes where we observe that:
\begin{align}
\frac{T_3 {\hat\phi}^{3^4}}{\hat\phi^{3^4}} -\hat\psi^{3^4} &= \frac{\theta_1(\tau,z)^2}{\eta(\tau)^6} \Big[ 9\ \eta_{1^33^{-2}9^3}(\tau)    \Big] \nonumber\ ,  \\
\frac{T_3 \hat\phi^{1^111^1}}{\hat\phi^{1^111^1}} -\hat\psi^{1^1 11^1} &= \frac{\theta_1(\tau,z)^2}{\eta(\tau)^6} \Big[ \tfrac{11}3\ \eta_{1^211^2}(\tau)    \Big]\ , \\
\frac{T_3 \hat\phi^{6^2}}{\hat\phi^{6^2}} -\hat\psi^{6^2} &= \frac{\theta_1(\tau,z)^2}{\eta(\tau)^6} \Big[ \tfrac{2}3 \eta_{6^4}(\tau) +\tfrac13 \eta_{1^33^{-3}9^3}(\tau) + 2    \eta_{1^33^{-3}9^3}(2\tau) + \tfrac83 \eta_{1^33^{-3}9^3}(4\tau)  \Big] \nonumber\ ,
\end{align}
where $T_3\hat\phi^{\hat\rho}$ is short for $\hat \phi^{\hat\rho}|_k T_-^M(3)$. The last two examples potentially correspond to Siegel modular forms with weight $k=0$ and we have 
na\"ively applied a formula which assumes $k>0$. So we need to prove modularity in another way for these three examples.

\subsubsection{Modularity by comparing with a Borcherds formula}

We begin with a theorem due to Clery-Gritsenko (see also \cite{GritsenkoNikulinI,GritsenkoNikulinII,Aoki:2005}) that leads to a Borcherds product formula for a meromorphic Siegel modular form starting from a nearly holomorphic Jacobi form of weight zero and index $t$ specialised to the case for the case $t=1$.  

  \begin{theorem}[Clery-Gritsenko\cite{Gritsenko:2008}] \label{CGproduct} Let $\psi$ be a nearly holomorphic Jacobi form of weight $0$ and index $1$ of $\Gamma_0(N)$. Assume that for all cusps of $\Gamma_0(N)$ one has $\frac{h_e}{N_e} c_{f/e}(n,\ell)\in \mathbb{Z}$ if $4n -\ell^2\leq 0$. Then the product
  \[
  B_\psi(\mathbf{Z}) = q^A r^B s^C \prod_{f/e\in \mathcal{P}} \prod_{\substack{n,\ell,m\in \mathbb{Z}\\ (n,\ell,m)>0}} \Big(1-(q^nr^\ell s^{m})^{N_e}\Big)^{\frac{h_e}{N_e} c_{f/e}(nm,\ell)}\ ,
  \]
  with 
  \[
  A=\frac1{24} \sum_{\substack{f/e\in \mathcal{P}\\ \ell \in \mathbb{Z}}} h_e\, c_{f/e}(0,\ell) ,\ 
  B=\frac1{2} \sum_{\substack{f/e\in \mathcal{P}\\ \ell \in \mathbb{Z}_{>0}}} \ell h_e\, c_{f/e}(0,\ell),\ 
   C=\frac1{4} \sum_{\substack{f/e\in \mathcal{P}\\ \ell \in \mathbb{Z}}} \ell^2 h_e\, c_{f/e}(0,\ell)\ ,
  \]
  defines a meromorphic Siegel modular form of weight 
  \[
  k=  \frac1{2} \sum_{\substack{f/e\in \mathcal{P}\\ \ell \in \mathbb{Z}}} \frac{h_e}{N_e} c_{f/e}(0,0)
  \]
  with respect to $\Gamma_1(N)^+$ possibly with character. The character is determined by the zeroth Fourier-Jacobi coefficient of $B_\psi(\mathbf{Z})$ which is a Jacobi form of weight $k$ and index $C$ of the Jacobi subgroup of $\Gamma_1(N)^+$.
  \end{theorem}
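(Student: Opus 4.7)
The plan is to prove the theorem via Borcherds' regularized singular theta correspondence, in the form adapted by Gritsenko-Nikulin and Cl\'ery-Gritsenko to congruence subgroups of level $N$. The input is the nearly holomorphic Jacobi form $\psi$ of weight $0$ and index $1$ of $\Gamma_0(N)$, and the target is a meromorphic Siegel modular form $B_\psi(\mathbf{Z})$ on the genus-two upper half-space whose infinite product matches the claimed formula.

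First, I would reorganize $\psi$ by its behavior at each cusp $f/e\in\mathcal{P}$ of $\Gamma_0(N)$. At each cusp the local Fourier expansion has coefficients $c_{f/e}(n,\ell)$ with only finitely many polar terms $4n-\ell^2<0$ (the nearly holomorphic hypothesis). Theta decomposition at each cusp turns this data into a vector-valued weakly holomorphic modular form for the Weil representation attached to the index-$1$ Jacobi lattice, and the integrality hypothesis $\tfrac{h_e}{N_e}\,c_{f/e}(n,\ell)\in\BZ$ for $4n-\ell^2\leq 0$ is precisely what Borcherds' theorem requires on the principal part in order for the resulting lift to have an integral automorphic product. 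Next, I would apply the regularized theta lift against a Siegel theta kernel for a lattice of signature $(2,2)$ with $\Gamma_0(N)$-level structure. Borcherds' theorem, in the form generalized to level by Bruinier and adapted to the paramodular/Siegel setting by Gritsenko-Nikulin, asserts that the lift produces a meromorphic modular form on $\mathbb{H}_2$ for $\Gamma_1(N)^+$ whose divisor and infinite product are prescribed by the principal parts of $\psi$ at each cusp. Matching conventions, the exponents are $\tfrac{h_e}{N_e}\,c_{f/e}(nm,\ell)$ with indices running over $(n,\ell,m)>0$, reproducing the claimed product formula exactly. The Weyl-vector data $(A,B,C)$ are then read off by evaluating the $q^0 s^0 r^{\bullet}$ part of the regularized integral, and the standard lattice computation yields the stated expressions. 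The weight $k=\tfrac12\sum \tfrac{h_e}{N_e}\,c_{f/e}(0,0)$ is the contribution of the identity orbit, and the character is fixed by the transformation of the zeroth Fourier-Jacobi coefficient, which by construction is a Jacobi form of weight $k$ and index $C$ of the Jacobi subgroup of $\Gamma_1(N)^+$.

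The main obstacle is the careful bookkeeping at the non-infinity cusps of $\Gamma_0(N)$: one must identify the widths $N_e$ and multiplicity factors $h_e$ so that the individual contributions from inequivalent cusps assemble into a single automorphic product on $\Gamma_1(N)^+$ rather than on a strictly smaller group. It is precisely at this step that the integrality hypothesis on $\tfrac{h_e}{N_e}\,c_{f/e}$ is used in an essential way, and where the level-$N$ version genuinely departs from Borcherds' original level-$1$ theorem; the required analysis is a careful study of the Weil representation at prime powers dividing $N$. A sanity check would be that, in the case $N=1$ (a single cusp, $h_e=N_e=1$), the formulas collapse to the Gritsenko-Nikulin Borcherds product used in the prior section.
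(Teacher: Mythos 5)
First, a point of comparison: the paper does not prove this statement at all. Theorem \ref{CGproduct} is imported verbatim from Clery--Gritsenko \cite{Gritsenko:2008} and is followed only by a remark on the role of the integrality hypothesis; there is no in-paper argument to measure your proposal against. So the question is whether your sketch would stand on its own as a proof of the cited result. Your overall strategy --- theta-decompose $\psi$ at each cusp into a vector-valued weakly holomorphic form for the Weil representation, feed the result into a regularized singular theta lift, and read off the product expansion, the Weyl vector $(A,B,C)$, the weight, and the character from the Fourier expansion of the lift --- is indeed the standard route to such Borcherds-product theorems, so the plan is sound in outline.

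As written, however, the proposal has concrete gaps. The lattice you lift to must have signature $(2,3)$, not $(2,2)$: the genus-two Siegel upper half-space $\mathbb{H}_2$ is the type~IV domain attached to $O(2,3)$ (via $Sp(4,\mathbb{R})\cong \mathrm{Spin}(2,3)$), and an index-one Jacobi form theta-decomposes against a rank-one positive-definite lattice, giving $2U\oplus\langle -2\rangle$; signature $(2,2)$ would produce Hilbert-type forms on $\mathbb{H}\times\mathbb{H}$, not Siegel modular forms. More importantly, everything that makes the level-$N$ statement nontrivial is deferred rather than carried out: the identification of the widths $N_e$ and heights $h_e$ at each cusp, the verification that the cusp-by-cusp contributions assemble into a single product automorphic under all of $\Gamma_1(N)^+$ (including the Fricke-type extension denoted by the ``$+$''), the lattice computation producing the stated $(A,B,C)$ and $k=\tfrac12\sum \tfrac{h_e}{N_e}c_{f/e}(0,0)$, and the determination of the character from the zeroth Fourier--Jacobi coefficient. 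You name the cusp bookkeeping as ``the main obstacle'' and then cite it away to Bruinier and Gritsenko--Nikulin, which is close to assuming the level-$N$ theorem one is trying to prove. Finally, the role you assign to the hypothesis $\tfrac{h_e}{N_e}c_{f/e}(n,\ell)\in\mathbb{Z}$ for $4n-\ell^2\le 0$ is slightly off: it is not required for the theta lift to exist, but rather (as the paper's remark after the theorem states) to guarantee that the multiplicities along the rational quadratic divisors are integers, so that $B_\psi$ is a single-valued meromorphic form with genuine zeros and poles rather than a multivalued automorphic product; the paper even relaxes it to integrality of the sum over cusps with equal $(N_e,h_e)$.
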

  \noindent \textbf{Remark:} As discussed by Clery and Gritsenko, the poles and zeros of $B_\psi$ lie on rational quadratic divisors defined by the Fourier coefficients $ c_{f/e}(n,\ell)$ for $4n-\ell^2\leq0$. The condition $\frac{h_e}{N_e} c_{f/e}(n,\ell)\in \mathbb{Z}$ ensures that one has only poles or zeros at these divisors. In our case, multiple cusps contribute to the same term in the product formula and hence we relax the condition.
for all cusps that have identical values of $(N_e,h_e)$, we require that the sum of $\frac{h_e}{N_e} c_{f/e}(n,\ell)$ (with $4n-\ell^2\leq0$) for all such cusps be integral. This suffices to ensure that one has only zeros or poles at all divisors.
  
  Since only the coefficients $c_{f/e}(n,\ell)$,  with $n\in \mathbb{Z}$ appear in the product formula, we define the projection(also defined in \cite{Raum:2012}), $\pi_{FE}$, as follows
  \begin{equation}
  \pi_{FE} \left(\phi |M_{f/e}\right) (\tau,z) := \frac1{h_e} \sum_{b=0}^{h_e-1} \phi |M_{f/e} (\tau +b,z)\ ,
  \end{equation}
  where $M_{f/e} = \left(\begin{smallmatrix} f & * \\ e & *\end{smallmatrix}\right)\in SL(2,\mathbb{Z} )$ maps the cusp at $i\infty$ to $f/e$. It is the Fourier coefficients of the projected Jacobi form that appears in the product formula. 

Following Raum\cite{Raum:2012}, we look to prove modularity of the product given in Eq. \eqref{productformulamain} by considering products of rescaled Borcherds products.  Our considerations not only extend his results but also provide a systematic method of obtaining the precise rescaled Borcherds products that are needed. For all conjugacy classes of $L_2(11)_B$, we find that the product formula is equivalent to a single Borcherds formula. More generally, we find that the following holds.   
  \begin{prop}\label{modularprop}
For $g\in L_2(11)_{A/B}$, let $\hat\rho_m=[g^m]$, $\hat\rho=\hat\rho_1$, $\psi^{\hat\rho_m}=\hat\psi_{0,1}^{\hat\rho_m}(\tau,z)$ and  $\Delta_k^{\hat\rho}(\mathbf{Z})$ be the modular form defined by the multiplicative lift given in Eq. \eqref{productformulamain}. Then
\begin{equation}\label{modularproduct}
\left(\Delta_k^{\hat\rho}(\mathbf{Z})\right)^{p_{\hat\rho}}  = \prod_{m|p} \left(B_{\frac{p}m\psi^{\hat\rho_m}}(m\,\mathbf{Z})\right) \ ,
\end{equation}
where $p_{\hat\rho}$ is the length of the shortest cycle in the cycle shape for the conjugacy class $\hat\rho$ and $B_\psi(\mathbf{Z})$ is the Siegel modular form given by Theorem \ref{CGproduct}. 
  \end{prop}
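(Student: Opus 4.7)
The plan is to prove Eq. \eqref{modularproduct} by showing that both sides admit matching infinite-product representations in their region of convergence. Taking the logarithm of the product formula Eq. \eqref{productformulamain} and expanding each factor via $-\log(1-x) = \sum_{j\geq 1} x^j/j$, the multiplicative-lift side becomes an exponentiated triple sum whose exponents involve the discrete-Fourier-transformed coefficients $f_\alpha(nm,\ell)$. On the Borcherds side, Theorem \ref{CGproduct} expresses each rescaled product $B_{\frac{p}{m}\psi^{\hat\rho_m}}(m\mathbf{Z})$ as an exponential of a similar sum, but now indexed by cusps $f/e$ of $\Gamma_0(N_{\hat\rho}/m)$ with coefficients coming from the projections $\pi_{FE}(\frac{p}{m}\psi^{\hat\rho_m}|M_{f/e})$. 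The entire proof reduces to identifying these two parametrizations of exponents.

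The key step is to show that $\pi_{FE}(\psi^{\hat\rho_m}|M_{f/e})$ equals a specific linear combination of the $\hat\psi^{\hat\rho_a}$, reflecting the fact that under $M_{f/e}: i\infty \to f/e$ a Jacobi form twisted by an $M_{12}$ element $g^m$ transforms into a form twisted by a related power of $g$, the relation being controlled by the denominator $e$ of the cusp. Inverting the discrete Fourier transform $c^a(n,\ell) = \sum_\alpha \omega_N^{\alpha a} f_\alpha(n,\ell)$ then converts the $\omega_N^\alpha$-indexed product in Eq. \eqref{productformulamain} into a product indexed by conjugacy classes $\hat\rho_a$, which after raising to the $p_{\hat\rho}$-th power matches exactly the cusp sum on the Borcherds side. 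For the prime cases $N_{\hat\rho}\in\{2,3,5,11\}$ (all conjugacy classes of $L_2(11)_B$ have $p_{\hat\rho}=1$, while $L_2(11)_A$ has $p=2,3$ for classes $2a,3a$), $\Gamma_0(N_{\hat\rho})$ has only two cusps $\{i\infty,0\}$ and the identification is immediate from Eq. \eqref{primeproduct}. For $N_{\hat\rho}=6$ (class $6a$ of $L_2(11)_A$, $p_{\hat\rho}=6$), there are four cusps and the Hecke-decomposition Eq. \eqref{sixproduct} must be matched against a product of four rescaled Borcherds factors indexed by $m\in\{1,2,3,6\}$, with the cusp/divisor correspondence read off from the structure of $\Gamma_0(6)\backslash SL(2,\mathbb{Z})$.

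The main obstacle will be the case-by-case verification that after raising to the power $p_{\hat\rho}$, the exponents $\frac{h_e}{N_e}c_{f/e}(nm,\ell)$ at every cusp are integers, so that each rescaled Borcherds factor is individually well-defined. The factor $p_{\hat\rho}$ is precisely what is needed to clear the denominators $N_e/h_e$ at every cusp of $\Gamma_0(N_{\hat\rho})$, and this integrality cannot be read off formally but must be checked from the explicit $L_2(11)$ character tables, particularly in the $N_{\hat\rho}=6$ case where multiple cusps contribute to the same term $(n,\ell,m)$ in the product and one must invoke the relaxed integrality condition described in the remark after Theorem \ref{CGproduct}. Once this integrality is established, matching the leading $q^A r^B s^C$ prefactors from each Borcherds product against $\bigl(s^{1/2}\hat\phi^{\hat\rho}_{k,1/2}\bigr)^{p_{\hat\rho}}$ using the product expansions of $\vartheta_1$ and $\eta$ is a finite computation that simultaneously fixes the weight $kp_{\hat\rho}$ and the character, completing the proof.
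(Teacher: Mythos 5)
Your overall strategy---expanding both sides as products, reindexing the $\omega_N^\alpha$-indexed exponents of Eq.~\eqref{productformulamain} by conjugacy classes via the discrete Fourier transform, and matching them against the cusp-indexed exponents $\frac{h_e}{N_e}c_{f/e}(nm,\ell)$ of Theorem~\ref{CGproduct}---is the same architecture as the paper's proof, and you correctly locate the integrality issue that forces the power $p_{\hat\rho}$ and the pairing of cusps with equal $(N_e,h_e)$. However, there are two concrete gaps. First, for the classes with $p_{\hat\rho}>1$ ($2^6$, $3^4$, $6^2$ in $L_2(11)_A$) the multiplicative seed $\hat\psi_{0,1}^{\hat\rho}=\tfrac12 Z^{\hat\rho^2}$ is a Jacobi form of $\Gamma_0(N^2)$ (levels $4$, $9$, $36$), not of $\Gamma_0(N)$: for instance $Z^{2^{12}}$ is built from $E_2^{(2)}$ and $E_2^{(4)}$ and is genuinely level $4$. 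Your claim that the prime-order cases involve only the two cusps $\{i\infty,0\}$ and that "the identification is immediate from Eq.~\eqref{primeproduct}" therefore fails for exactly these classes; the proof has to show that the cusps whose width does not divide the order of $g$ contribute nothing (their expansions have no integral powers of $q$, so they die under $\pi_{FE}$) and to compute, e.g., $\pi_{FE}(\psi|M_{1/2})=-\hat\psi_{0,1}^{2^6}$ on $\Gamma_0(4)$ and $\pi_{FE}(\psi|M_{1/3}+\psi|M_{2/3})=-\hat\psi_{0,1}^{3^4}$ on $\Gamma_0(9)$. Likewise $6^2$ lives on $\Gamma_0(36)$ with twelve cusps, six of which contribute, not the four cusps of $\Gamma_0(6)$ you describe (the four-cusp analysis belongs to the class $1^12^13^16^1$ of $L_2(11)_B$, which has $p_{\hat\rho}=1$).

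Second, even in the genuine two-cusp case ($p_{\hat\rho}=1$, $N$ prime), the identification is not immediate: the required identity $\pi_{FE}(\psi|S)=\tfrac1N\bigl(\psi_{0,1}^{1^{12}}-\psi_{0,1}^{\hat\rho}\bigr)$ is the crux of the argument and rests on the computation $\pi_{FE}(E_2^{(N)}|S)=-\tfrac1{N}E_2^{(N)}|_2U_N=-\tfrac1{N}E_2^{(N)}$, i.e.\ on $E_2^{(N)}$ being an eigenform of the Atkin--Lehner operator $U_N$ with eigenvalue $+1$; for $N=11$ one additionally needs the analogous statement for the cusp form $\eta_{1^211^2}$ entering $\alpha^{(11)}$. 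Without supplying these projections explicitly, the reindexing step of your plan cannot be carried out, so the proposal as written is an outline of the correct strategy rather than a proof.
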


  \begin{proof}
  
We deal with case when $p_{\hat\rho}=1$ before considering $p_{\hat\rho}>1$. \\

\noindent $\boxed{\mathbf{p_{\hat\rho}=1}}$\\

 This occurs for all cases when $g\in L_2(11)_B$ and the conjugacy classes of order 1, 5, and 11 in $L_2(11)_A$. In all these cases, there is precisely one term in the product appearing on the right hand side of Eq. \eqref{modularproduct}.

 Let $N$ be the order of $g$.  The Jacobi form $\psi$ is a modular form of $\Gamma_0(N)$ with weight $0$ and index $1$. Considering the case when $N$ is prime, there are only two cusps, one at $i\infty$ (which is $\Gamma_0(N)$ equivalent to $1/N$) and another at $0/1$.  To prove the equality, we need to show that
\begin{align}
\pi_{FE} (\psi) &=\psi_{0,1}^{\hat\rho} \ ,\\
\pi_{FE} (\psi|S) &= \frac1N (\psi_{0,1}^{1^{12}} -\psi_{0,1}^{\hat\rho})\ . \label{check2}
\end{align}
The first equation holds trivially since $\psi=\psi_{0,1}^{\hat\rho} $ has only integral powers of $q$ in its Fourier-Jacobi expansion. The second part follows from a calculation.
\[
\pi_{FE}(\psi|S)(\tau,z) = \frac{1}{N+1} \phi_{0,1} (\tau,z) + \pi_{FE}(\alpha^{(N)}|S)(\tau) \phi_{-2,1}(\tau,z)\ ,
\]
where $\alpha^{(N)}(\tau)=\frac{N}{N+1} E_2^{(N)}(\tau)$ for $N=2,3,5$. Computing $\pi_{FE}(\alpha^{(N)}|S)(\tau)$, we obtain
\[
\pi_{FE}(\alpha^{(N)}|S)(\tau) = -\tfrac{1}{N(N+1)} \sum_{b=0}^{N-1} E_2^{(N)}(\tfrac{\tau+b}N)= -\tfrac{1}{(N+1)} E_2^{(N)}\big|_2U_N = -\tfrac{1}{(N+1)} E_2^{(N)}(\tau)\ ,
\]
where $U_N$ is the Hecke operator for $\Gamma_0(N)$ (defined by Atkin and Lehner\cite{Atkin:1970}) and  $E_2^{(N)}$ is its eigenform with eigenvalue $+1$.
Thus, we get
\begin{align*}
\pi_{FE}(\psi|S)(\tau,z) &= \frac{1}{N+1} \phi_{0,1} (\tau,z) -\frac1N \alpha^{(N)}(\tau)\phi_{-2,1}(\tau,z)\\
&= \frac{1}{N} \phi_{0,1} (\tau,z) -\frac1N\psi_{0,1}^{\hat\rho} (\tau,z)\\
&= \frac1N \left(\psi_{0,1}^{1^{12}}-\psi_{0,1}^{\hat\rho}\right)(\tau,z)\ ,
\end{align*}
which establishes Eq. \eqref{check2} for prime $N=2,3,5$.
A similar computation holds for $N=11$ for which
$\alpha^{(11)}(\tau)=\frac{11}{6} E_2{(11)}(\tau)-\frac{22}5 \eta_{1^211^2}(\tau)$. One can show that $\pi_{FE}(\alpha^{(11)}|S)(\tau)=-(1/11)\alpha^{(11)}(\tau)$. thus Eq. \eqref{check2} holds for $N=11$ as well.

Next, considering the case of $N=6$, where there are additional cusps at $1/3$ (with width $2$) and $1/2$ (with width $3$). One needs to verify that the last three conditions hold as the first condition holds trivially.
\begin{equation}\label{six-identities}
\begin{split}
\pi_{FE}(\psi) &= \hat\psi_{0,1}^{\hat\rho}(\tau,z)\\
\pi_{FE}(\psi|S) &= \frac16(\hat\psi_{0,1}^{1^{12}}(\tau,z)+\hat\psi_{0,1}^{\hat\rho}(\tau,z)-\hat\psi_{0,1}^{\hat\rho_2}(\tau,z)-\hat\psi_{0,1}^{\hat\rho_3}(\tau,z))\\
\pi_{FE}(\psi|M_{1/3}) &= \frac12(\hat\psi_{0,1}^{\hat\rho_2}(\tau,z)-\hat\psi_{0,1}^{\hat\rho}(\tau,z))\\
\pi_{FE}(\psi|M_{1/2}) &= \frac13(\hat\psi_{0,1}^{\hat\rho_3}(\tau,z)-\hat\psi_{0,1}^{\hat\rho}(\tau,z))
\end{split}
\end{equation}
In the above equations $\hat\rho=1^12^13^16^1$, $\hat\rho_2=1^33^3$ and $\hat\rho_3=1^42^4$. $\psi=\hat\psi_{0,1}^{\hat\rho}(\tau,z)= \frac12 Z^{1^22^23^26^2}(\tau,z)$, is a Jacobi form of $\Gamma_0(6)$. Then,
\begin{align*}
\psi|_{0,1} S (\tau,z) &= \frac16\phi_{0,1} (\tau,z) +\frac1{12} \left( E_2^{(2)}(\tau/2)+ 2  E_2^{(3)}(\tau/3) -5  E_2^{(6)}(\tau/6)\right)\phi_{-2,1}(\tau,z)\ ,\\
\psi|_{0,1} M_{1/2} (\tau,z)   &=  \frac16\phi_{0,1} (\tau,z) +\frac1{12}\left(E_2^{(2)}(\tau) -8 E_2^{(3)}\left(\tfrac{2\tau+1}3\right) +5 E_2^{6}\left(\tfrac{\tau+2}3\right) \right)\phi_{-2,1}(\tau,z)\ ,\\
\psi|_{0,1} M_{1/3} (\tau,z) &=  \frac16\phi_{0,1}(\tau,z)  + \frac1{12}\left(2 E_2^{(3)}(\tau) - 9 E_2^{(2)}\left(\tfrac{3\tau-1}3\right) +5 E_2^{6}\left(\tfrac{\tau-1}2\right)\right)\phi_{-2,1}(\tau,z)\ .
\end{align*}
The projections $\pi_{FE}$ of the Eisenstein series appearing in the above equations are given by (with $b\in\mathbb{Z}$)
\begin{align*}
\pi_{FE}\left(E_2^{(2)}(\tfrac{\tau+b}2)\right) &= E_2^{(2)}(\tau)\ ,\\
\pi_{FE}\left(E_2^{(3)}(\tfrac{\tau+b}3)\right) &= E_2^{(3)}(\tau)\ ,\\
\pi_{FE}\left(5E_2^{(6)}(\tfrac{\tau+b}6)\right) &= -5E_2^{(6)}(\tau)+6 E_2^{(3)}(\tau)+4 E^{(2)}_2(\tau)\ ,
\end{align*}
which implies Eq. \eqref{six-identities}.\\

\noindent $\boxed{\mathbf{p_{\hat\rho}>1}}$\\

There are three conjugacy classes of $L_2(11)_A$, $2^6$, $3^4$ and $6^2$, with $p=2,3,6$ respectively. First, consider $B_{\psi^{\hat\rho}}$ with $\psi^{\hat\rho}$ a Jacobi form of $\Gamma_0(N^2)$ where $N$ is the order of the group element. For all three conjugacy classes, cusps whose width does not divide the order of the group element, do not contribute as the Fourier expansion has no integral powers of $q$ and thus they vanish under the projection $\pi_{FE}$. The details of the computation are given in Appendix \ref{appendixcomp}.
\begin{itemize}
\item[$\mathbf{2^6}$:] We need to consider the contribution from the cusps at $i\infty$ and $1/2$ (with $h_e=1$ $N_e=1$). One has
\[
\pi_{FE}(\psi|M_{1/2})(\tau,z) =-\hat\psi_{0,1}^{\hat\rho}(\tau,z)\ .
\]
We need to consider the square of $\Delta^{2^6}_2(\mathbf{Z})$ so that $\frac{h_e}{N_e} c_{f/e}(n,\ell)\in \mathbb{Z}$ if $4n -\ell^2\leq 0$. 
\[
B_{2\,\psi^{2^{6}}}(\mathbf{Z})  =  
\prod_{(n,\ell,m)> 0}  \Big(1- q^{n} r^{\ell}s^{m}\Big)^{2c^{1}(nm,\ell)}
\Big(1- q^{2n} r^{2\ell}s^{2m}\Big)^{- c^1(nm,\ell) }\ ,
\]
which is a meromorphic Siegel modular form at level 4 with weight $-1$.
This Siegel modular form does not account for the following terms in Eq. \eqref{primeproduct}:
\[
q^{\frac12}r^{\frac12}s^{\frac12}\ \prod_{(n,\ell,m)> 0} (1- q^{2n} r^{2\ell }s^{2m})^{\frac{c^{0}(nm,\ell)}2}\ .
\]
The square of this term is
$B_{\psi^{1^{12}}}(2\mathbf{Z})$. We  obtain
\[
(\Delta^{2^6}_2(\mathbf{Z}))^2 = B_{2\,\psi^{2^{6}}}(\mathbf{Z})\ B_{\psi^{1^{12}}}(2\mathbf{Z}) \ .
\]
Thus, $(\Delta^{2^6}_2(\mathbf{Z}))^2$ is a Siegel modular form of weight $4$ at level $4$. The transformation of the Siegel modular form is the one induced from  
$(\phi_{2,1/2}^{2^6}(\tau,z))^2$. 
\item[$\mathbf{3^4}$:] We need to consider the contribution from the cusps at $i\infty$, $1/3$ (with $h_e=1$ $N_e=3$) and $2/3$ ($h_e=1$ $N_e=3$). The contributions from the cusps at $1/3$ and $2/3$ have cube roots of unity. However,  the contributions of the two cusps add in the product formula to give integral coefficients.  One finds
\[
\pi_{FE}(\psi|M_{1/3}+\psi|M_{2/3})(\tau,z) =-\hat\psi_{0,1}^{\hat\rho}(\tau,z)\ ,
\]
leading to 
\[
B_{3\,\psi^{3^{4}}}(\mathbf{Z})  =
\prod_{(n,\ell,m)> 0}  \Big(1- q^{n} r^{\ell}s^{m}\Big)^{3c^{1}(nm,\ell)}
\Big(1- q^{3n} r^{3\ell}s^{3m}\Big)^{- c^1(nm,\ell) }\ ,
\]
which is a Siegel modular form at level 9 with weight $-2/3$.
This Siegel modular form does not account for the following terms in Eq. \eqref{primeproduct}:
\[
q^{\frac12}r^{\frac12}s^{\frac12}\ (1- q^{3n} r^{3\ell }s^{3m})^{\frac{c^{0}(nm,\ell)}3}\ .
\]
 The cube of this term is
$B_{\psi^{1^{12}}}(3\mathbf{Z})$. Thus we obtain
\[
(\Delta^{3^4}_1(\mathbf{Z}))^3 = B_{3\,\psi^{3^{4}}}(\mathbf{Z})\ B_{\psi^{1^{12}}}(3\mathbf{Z})\ .
\]
Thus, $(\Delta^{3^4}_1(\mathbf{Z}))^3$ is Siegel modular form of weight $3$ at level $9$.
\item[$\mathbf{6^2}$:] We need to consider the contribution from the cusps at $i\infty$, ($1/6, 5/6)$, $(1/12, 5/12)$ and $1/18$. Like in the case of $3^4$, we need to pair up cusps with identical values of $h_e$ and $N_e$ to get integral coefficients. One obtains
\begin{align*}
\pi_{FE}(\psi|M_{1/6}+\psi|M_{5/6}) &=\hat\psi_{0,1}^{\hat\rho}(\tau,z) \\
\pi_{FE}(\psi|M_{1/12}+\psi|M_{5/12}) &=-\hat\psi_{0,1}^{\hat\rho}(\tau,z) \\
\pi_{FE}(\psi|M_{1/18}) &=-\hat\psi_{0,1}^{\hat\rho}(\tau,z)
\end{align*}
\end{itemize}
\begin{multline}
B_{6\, \psi^{6^2}}(\mathbf{Z})  = 
\prod_{(n,\ell,m)> 0}  \Big(1- q^{n} r^{\ell}s^{m}\Big)^{c^1(nm,\ell)}
 \Big(1- (q^{n} r^{\ell }s^{m})^6\Big)^{c^{1}(nm,\ell)} \\
\Big(1- (q^{n} r^{\ell }s^{m})^2\Big)^{-3\, c^1(nm,\ell)}
  \Big(1- (q^{n} r^{\ell}s^{m})^3\Big)^{-2\, c^1(nm,\ell) }
\ ,
\end{multline}
which is a Siegel modular form of weight $-2$ at level 36.
This does not give all the terms that appear in the product form given in Eq. \eqref{sixproduct}. The missing terms can be accounted for by additional terms leading to 
\[
(\Delta^{6^2}_0(\mathbf{Z}))^6 = B_{6\,\psi^{6^2}}(\mathbf{Z})\ B_{2\,\psi^{2^{6}}}(3\mathbf{Z}) B_{3\,\psi^{3^{4}}}(2\mathbf{Z})\ B_{\psi^{1^{12}}}(6\mathbf{Z}) \ .
\]
$(\Delta^{6^2}_0(\mathbf{Z}))^6$ is a meromorphic Siegel modular form of weight zero at level $36$.
  \end{proof}
  
  \begin{prop}\label{characterprop}
  The modular properties of the Borcherds product formula for  $(\Delta_k^{\hat\rho}(\mathbf{Z}))^{2p_{\hat\rho}}$ is determined  by $(\hat\phi_{k,1/2}^{\hat\rho}(\tau,z))^{2p_{\hat\rho}}$. In particular, they are meromorphic Siegel modular forms at level $N_{\hat\rho}$, where $N_{\hat\rho}$ the order of the group element. 
  \end{prop}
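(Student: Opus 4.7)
The plan is to combine the factorization of Proposition \ref{modularprop} with the character statement in Theorem \ref{CGproduct}. Squaring the identity of Proposition \ref{modularprop} yields
\begin{equation*}
(\Delta_k^{\hat\rho}(\mathbf{Z}))^{2p_{\hat\rho}} = \prod_{m \mid p_{\hat\rho}} \Bigl(B_{\frac{p_{\hat\rho}}{m}\psi^{\hat\rho_m}}(m\,\mathbf{Z})\Bigr)^{2},
\end{equation*}
so each factor is, by Theorem \ref{CGproduct}, a meromorphic Siegel modular form possibly with character --- and crucially, the character is entirely pinned down by the zeroth Fourier--Jacobi coefficient of the product.

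Next, I would identify the zeroth Fourier--Jacobi coefficient of $(\Delta_k^{\hat\rho})^{2p_{\hat\rho}}$ directly from Eq. \eqref{multiplicativelift}: since $\Delta_k^{\hat\rho}(\mathbf{Z}) = s^{1/2}\hat\phi_{k,1/2}^{\hat\rho}(\tau,z) + O(s^{3/2})$, the leading Fourier--Jacobi term of $(\Delta_k^{\hat\rho})^{2p_{\hat\rho}}$ is $s^{p_{\hat\rho}}\bigl(\hat\phi_{k,1/2}^{\hat\rho}(\tau,z)\bigr)^{2p_{\hat\rho}}$, a Jacobi form of index $p_{\hat\rho}$. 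Since the character of the Borcherds product is determined by exactly this leading Jacobi form, the modular transformation law of $(\Delta_k^{\hat\rho})^{2p_{\hat\rho}}$ is indeed dictated by that of $(\hat\phi_{k,1/2}^{\hat\rho})^{2p_{\hat\rho}}$, which is the first part of the claim.

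It then remains to verify that this Jacobi form lives at level $N_{\hat\rho}$ without a non-trivial character. Writing $\hat\phi_{k,1/2}^{\hat\rho} = (\vartheta_1/\eta^3)\,\eta_{\hat\rho}$, Table \ref{TableEtaProducts} records that $\eta_{\hat\rho}$ is modular on $\Gamma_0(2N_{\hat\rho}p_{\hat\rho},2)$ of weight $k_{\hat\rho}$ with a Dirichlet character $\chi_{\hat\rho}$ of order dividing $2$. Raising to the $2p_{\hat\rho}$-th power trivialises $\chi_{\hat\rho}$, and the resulting eta product $\prod_m \eta(m\tau)^{2p_{\hat\rho}a_m}$ has all exponents divisible by $2p_{\hat\rho}$, so it descends to a modular form on $\Gamma_0(N_{\hat\rho})$. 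Combined with $(\vartheta_1/\eta^3)^{2p_{\hat\rho}}$, a Jacobi form of $SL(2,\mathbb{Z})$ of integer index $p_{\hat\rho}$ with trivial multiplier for even exponent, one concludes that $(\hat\phi_{k,1/2}^{\hat\rho})^{2p_{\hat\rho}}$ --- and hence $(\Delta_k^{\hat\rho})^{2p_{\hat\rho}}$ --- is a genuine object at level $N_{\hat\rho}$.

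The main obstacle is the last step: the assertion that taking a $2p_{\hat\rho}$-th power collapses the level from $2N_{\hat\rho}p_{\hat\rho}$ down to $N_{\hat\rho}$ requires a careful Newman-type analysis of eta products, since part of the original level reflects the Dirichlet character and part the fractional index of $\vartheta_1/\eta^3$. One must additionally check that none of the rescaled Borcherds factors $B_{\cdots}(m\,\mathbf{Z})^2$ contributes an extra congruence obstruction once the product over $m \mid p_{\hat\rho}$ is formed, which amounts to verifying a compatibility of the level structures induced on $Sp(2,\mathbb{Z})$ by the different rescalings $\mathbf{Z}\mapsto m\,\mathbf{Z}$.
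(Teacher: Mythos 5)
Your first two paragraphs follow essentially the same route as the paper: the zeroth Fourier--Jacobi coefficient of $(\Delta_k^{\hat\rho})^{2p_{\hat\rho}}$ is $(\hat\phi_{k,1/2}^{\hat\rho})^{2p_{\hat\rho}}$, and by the last clause of Theorem \ref{CGproduct} the character of the Borcherds product is pinned down by that coefficient; this is exactly how the paper establishes the first claim. The divergence, and the genuine gap, is in your third paragraph. The assertion that $\prod_m\eta(m\tau)^{2p_{\hat\rho}a_m}$ ``descends to a modular form on $\Gamma_0(N_{\hat\rho})$'' because all exponents are divisible by $2p_{\hat\rho}$ is not a valid general principle: whether an eta quotient lives on $\Gamma_0(M)$ is governed by the Ligozat--Newman conditions $\sum_m m\,a_m\equiv 0$ and $\sum_m (M/m)\,a_m\equiv 0 \pmod{24}$ together with a quadratic-character condition, none of which follow from divisibility of the exponents alone. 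These conditions do happen to hold for every $L_2(11)_{A/B}$ cycle shape (e.g.\ $\eta(2\tau)^{24}$ on $\Gamma_0(2)$, $\eta(\tau)^2\eta(11\tau)^2$ on $\Gamma_0(11)$), but that is a case-by-case verification you acknowledge not having done, so as written the level-$N_{\hat\rho}$ claim is unproven.

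The paper closes this step differently and much more cheaply: it quotes the result of Cheng and Duncan that $(\hat\phi_{k,1/2}^{\hat\rho})^{2}$ already transforms under $\Gamma_0(N_{\hat\rho})$ with the explicit character $\chi(\gamma)=\exp\bigl(2\pi i\,cd/(p_{\hat\rho}N_{\hat\rho})\bigr)$, and then observes that the $p_{\hat\rho}$-th power of this character is $\exp(2\pi i\,cd/N_{\hat\rho})=1$ since $N_{\hat\rho}\mid c$ for $\gamma\in\Gamma_0(N_{\hat\rho})$. If you replace your eta-quotient analysis with that citation (or actually carry out the Ligozat--Newman check for the eight relevant cycle shapes), your argument goes through. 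Your closing worry about extra congruence obstructions from the rescaled factors $B(m\,\mathbf{Z})$ is also dissolved by the paper's logic: once the character is read off from the zeroth Fourier--Jacobi coefficient alone, no separate compatibility check on the individual factors is needed.
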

  \begin{proof} It is easy to check that $(\hat\phi_{k,1/2}^{\hat\rho}(\tau,z))^{2p_{\hat\rho}}$ is the zeroth Fourier-Jacobi coefficient of $(\Delta_k^{\hat\rho}(\mathbf{Z}))^{2p_{\hat\rho}}$. 
  From the results of Cheng and Duncan\cite{Cheng:2011}, we know that   under $\Gamma_0(N_{\hat\rho})$ transformations,  $(\hat\phi_k^{\hat\rho}(\tau,z))^2$, transforms with the following character
  \[
  \chi(\gamma) = \exp\left(\frac{2\pi i cd}{p_{\hat\rho}\,N_{\hat\rho}}\right) \quad\text{for} \quad\gamma =\begin{pmatrix} a & b \\ c & d \end{pmatrix} \in \Gamma_0(N_{\hat\rho})\ . 
  \]
  The character of the $p_{\hat\rho}$-th power (of the Siegel modular form) is clearly trivial for  all $\gamma\in\Gamma_0(N_{\hat\rho})$. 
 
  \end{proof}
  
\section{BKM Lie algebras for $L_2(11)_{A}$ and $L_2(11)_B$}
  
  We have constructed Siegel modular forms, $\Delta_k^{\hat\rho}(\mathbf{Z})$ for all conjugacy classes of $L_2(11)_{A/B}$. In this section, we will establish that these Siegel modular forms appear as the Weyl-Kac-Borcherds denominator formula for
BKM Lie superalgebras whose identical real simple roots, $(\delta_1,\delta_2,\delta_3)$ have the following rank three hyperbolic Cartan matrix
\[
A^{(1)} =\begin{pmatrix} \phantom{-}2 & -2 & -2 \\ -2 &  \phantom{-}2 & -2 \\ -2 & -2 &  \phantom{-}2 \end{pmatrix}\ .
\]
Each distinct conjugacy class leads to an inequivalent automorphic correction (in the sense of Gritsenko-Nikulin) to the Lorentzian Kac-Moody Lie algebra associated 
to the above Cartan matrix. The imaginary simple roots that appear  depend on the conjugacy class.

Let $w_i$ ($i=1,2,3$) denote  reflections by the three real simple roots and $W=\langle w_1,w_2,w_2\rangle$ be the Weyl group generated by the real simple roots.
The Weyl group acts on the future light-cone $V^+\subset \mathbb{R}^{2,1}=\oplus \mathbb{R}\,\delta_i$. A fundamental domain (under the action of $W$) is a polyhedron $\mathcal{M}$, bounded by three walls. Let  $\mathcal{H}_{\delta_i}:=(x\in V^+~|~ (x,\delta_i)=0)$ be the wall associated with $\delta_i$ and define the space $\mathcal{H}^+_{\delta_i}:=(x\in V^+~|~ (x,\delta_i)\leq 0)$. These walls bound the polyhedron $\mathcal{M}=\cap_{i=1}^3 \mathcal{H}^+_{\delta_i}$. The dihedral group $D_6$  is generated by the involution $\delta_1\leftrightarrow \delta_3$ and the cyclic permutation of the three real simple roots. Under the action of the dihedral group, $\mathcal{M}$ gets mapped to itself.

Let $Q=\oplus_i \mathbb{Z}\, \delta_i$, denote the root lattice and $Q_+=\oplus_i \mathbb{Z}_+\, \delta_i$. The Weyl vector $\varrho=\frac12(\delta_1+\delta_2+\delta_3)$ satisfes $(\varrho,\delta_i)=-1$ for $i=1,2,3$ and is invariant under the dihedral group $D_6$.
We make the formal identifications
\[
e^{-\varrho} \sim  q^{1/2} r^{1/2} s^{1/2}\quad,\quad
e^{-\delta_1} \sim q\, r \quad,\quad
e^{-\delta_2} \sim  r^{-1} \quad,\quad
e^{-\delta_3} \sim s\, r \quad.
\]
Let $\alpha[n,\ell,m]= n \delta_1 + (n+m-\ell) \delta_2 + m \delta_3$. Then, 
one has $e^{-\alpha[n,\ell,m]}\sim q^nr^\ell s^m$ and the norm $(\alpha[n,\ell,m],\alpha[n,\ell,m])= 2\ell^2 - 8nm$.
 
\subsection{Properties of $\Delta_k^{\hat\rho}(\mathbf{Z})$}

We have seen that $\Delta_k^{\hat\rho}(\mathbf{Z})$ is a Siegel modular form of weight $k=(k_{\hat\rho}-1)$ at level $N_{\hat\rho}$ with character. Let $v(M)$ denote this character for  $M$ in the level $N_{\hat\rho}$ subgroup of $Sp(2,\mathbb{Z})$.  Here $N_{\hat\rho}$ is the order of an element of $L_2(11)$ in the conjugacy class $\hat\rho$. 

\begin{enumerate}
\item It is symmetric under the exchange  $\tau \leftrightarrow \tau'$. The corresponding $Sp(2,\mathbb{Z})$ element is called $V$ in appendix \ref{appendixmodforms}. Thus, $v(V)=+1$.
\item It is anti-symmetric under $z\rightarrow -z$. The corresponding $Sp(2,\mathbb{Z})$ element is given by 
\[
\delta=\begin{pmatrix}
1 & 0 & 0 & 0 \\
 0 & -1 & 0 & 0 \\
 0 & 0 & 1 & 0 \\
 0 & 0 & 0 & -1 
\end{pmatrix}\ .
\]
Thus, $v(\delta)=-1$.
\item Under the Heisenberg group, the character is 
\[
v_H([\lambda,\mu;\kappa]_H) = (-1)^{\lambda+\mu +\lambda\mu +\kappa} \ ,
\]
\item Under transformations, $\widetilde{\gamma}\in Sp(2,\mathbb{Z})$ induced by $\gamma\in PSL(2,\mathbb{Z})$, one has
\[
v_{\hat\rho}(\widetilde{\gamma}) := \chi_{\hat\rho}(\gamma)\ ,
\]
where $\chi_{\hat\rho}(\gamma)$ is the character of the eta product $\eta_{\hat\rho}(\tau)$. In particular, there are two ways of understanding this character. The first one, is that the eta product is a modular form of $\Gamma_0(2N_{\hat\rho}\, p_{\hat\rho},2)$ possibly with Dirichlet character as given in Table \ref{TableEtaProducts}. The second one is that the $2p_{\hat\rho}$-th  power of the eta product  is a modular form of $\Gamma_0(N_{\hat\rho})$. We don't give the a precise formula for the character as we don't need it for our considerations. Also, Proposition \ref{characterprop} gives the character for the square of $\Delta_k^{\hat\rho}(\mathbf{Z})$.
\item The Siegel modular form admits the following Fourier expansion: 
\begin{align}
\Delta_k^{\hat\rho}(\mathbf{Z}) 
&= \sum_{\substack{n,\ell,m \equiv 1 \text{ mod }2\\ 4nm-\ell^2>0 \\ n,m > 0}} f(nm,\ell) \ q^{n/2} r^{\ell/2} s^{m/2} \ . \label{second} 
\end{align}
The other conditions $n,\ell,m \equiv 1 \text{ mod }2$ and $n,m > 0$ easily 
follow from the multiplicative lift. For $\hat\rho\neq 1^111^1, 3^4, 6^2$, the condition $(4nm -\ell^2)>0$ follows directly from the additive lift  given in Eq. \eqref{additivelift}.  This is a condition, $\mathcal{D}>0$,  on the discriminant for terms that appear in the $\frac{m}2$-th Fourier-Jacobi coefficient of $\Delta_k^{\hat\rho}(\mathbf{Z})$.

For $\hat\rho=1^111^1, 3^4, 6^2$, where the additive lift is not available, we conjecture that this condition is true as well. One can explicitly verify that it holds for $m=1,3$ as we do below.
\begin{itemize}
\item[$m=1$]
The coefficient of $s^{1/2}$ is the Jacobi form $\hat\phi^{\hat\rho}_{k,1/2} (\tau,z)$.  This Jacobi form has non-vanishing Fourier coefficients  about the cusp at $i\infty$ when the discriminant $\mathcal{D}>0$ which is equivalent to $4n-\ell^2>0$. In particular, the family of terms (that arise from $\theta_1(\tau,z)$ up to an overall pre-factor) with $q^{1/2} r^{1/2} s^{1/2} q^{y(y+1)/2} r^{y}$ ($y\in\mathbb{Z}$) has discriminant $\mathcal{D}=3/4$. More generally, one can show that $\mathcal{D}\geq 3/4$.
\item[$m=3$] From the multiplicative lift, we see that the coefficient of $s^{3/2}$ is 
 \[ \hat\phi^{\hat\rho}_{k,1/2} (\tau,z)\, \hat\psi^{\hat\rho}_{0,1} (\tau,z)\ . \] The Fourier expansion of the multiplicative seed $\hat\psi^{\hat\rho}_{0,1} (\tau,z)$ has terms with negative discriminant, $\mathcal{D}=-1$. These are of the from $q^{n}r^\ell s^m$ with  $n=x(x+1),\ m=1, \ \ell = 2x + 1$ for $x\in \mathbb{Z}$ with $C_{i\infty}(\mathcal{D}=-1,\ell)=+1$ for all conjugacy classes $\hat\rho$. Combining this family of terms with $q^{1/2} r^{1/2} s^{1/2} q^{y(y+1)/2} r^{y}$ ($y\in\mathbb{Z}$) coming from $\hat\phi^{\hat\rho}_{k,1/2} (\tau,z)$, we obtain the term
 \[
 [q^{1/2} r^{1/2} s^{1/2} q^{y(y+1)/2} r^{y}] \times [q^{x(x+1)} r^{2x+1} s] \ ,
 \]
 which has discriminant $\mathcal{D}=(3/4) + 2(x-y)^2 >0$. These are, potentially, the only terms that might have had a negative discriminant. Other terms from $\hat\phi^{\hat\rho}_{k,1/2} (\tau,z)$ come  with higher powers of $q$ that only increase the value of the discriminant. Thus, one has $\mathcal{D}\geq 3/4$.
\end{itemize} 
For $m>3$, we now provide a heuristic argument as to why we expect $\mathcal{D}>0$. All terms with negative discriminant come from the expansion of exponential in Eq. \ref{multiplicativelift}. Further, they arise from the action of the Hecke operator on $\hat\psi^{\hat\rho}_{0,1} (\tau,z)$.  Since terms with discriminant $\mathcal{D}=-1$ that appear in $\hat\psi^{\hat\rho}_{0,1} (\tau,z)$ are the same for all conjugacy classes, we can split, $\hat\psi^{\hat\rho}_{0,1} (\tau,z)$ as
\[
\hat\psi^{\hat\rho}_{0,1} (\tau,z) = \psi_{\mathcal{D}=-1}(\tau,z) + \psi^{\hat\rho}_{\mathcal{D}\geq 0}(\tau,z)\ ,
\]
where the first term which contains \textit{all} terms with negative discriminant. Consider the product (for $y\in\mathbb{Z}$)
\[
 [q^{1/2} r^{1/2} s^{1/2} q^{y(y+1)/2} r^{y}] \times  \exp\Big(\sum_{m=1}^\infty s^m {\hat\psi}_{\mathcal{D}=-1}\big|_0 T(m)(\tau,z)\Big)\ .
\]
Expanding the above, we claim that all terms have discriminant $\mathcal{D}>0$. The dependence on the conjugacy class arises solely from the terms that are not accounted above. Such terms all have $\mathcal{D}\geq 0$ and contain only non-negative powers of $q$ and $s$. Thus, this ensures that all terms that appear in Eq. \eqref{second} must have $\mathcal{D}\geq 0$.\footnote{Let $T_1=q^{n_1}r^{\ell_1}s^{m_1}$ and $T_2=q^{n_2}r^{\ell_2}s^{m_2}$ be such that $n_1,n_2,m_1,m_2\geq0$ and $\mathcal{D}_1>0$, $\mathcal{D}_2\geq 0$. Then, the discriminant, $\mathcal{D}$, of the product $T_1T_2$ is also positive definite, i.e., $\mathcal{D}>0$.} Thus, we expect the conclusion to hold for the three conjugacy classes for which we do not have an additive lift.
\end{enumerate}

\subsection{Establishing the Weyl-Kac-Borcherds denominator formula}

The Weyl-Kac-Borcherds denominator formula takes the form for all the conjugacy classes
\begin{subequations}
\begin{align}
\Delta_k^{\hat\rho}(\mathbf{Z}) 
&= q^{1/2} r^{1/2} s^{1/2} \  \prod_{(n,\ell,m)>0} \Big(1 - q^n r^\ell s^m )^{c(nm,\ell)} \label{first}\\
&= \sum_{w\in W} \det(w) \left(e^{-w(\varrho)}-\sum_{a\in Q\cap \mathcal{M}}m(\alpha)\ e^{-w(\varrho+a)} \right) \label{third}\ 
\end{align}
\end{subequations}
\begin{enumerate}
\item 
The first line, Eq. \eqref{first}, follows  from the multiplicative lift. In particular, Eqns. \eqref{primeproduct} and \eqref{sixproduct} are precisely of this form. Proposition \ref{modularprop}  implies that $c(nm,\ell)\in \mathbb{Z}$.   $c(nm,\ell)$ is the multiplicity of the positive root $\alpha[n.\ell,m]$ and negative values of $c(nm,\ell)$ corresponds to fermionic roots. The condition $(n,\ell,m)>0$ determines all the positive roots.
\item 
Eq. \eqref{third} is the sum side of the denominator formula.
\begin{enumerate}
\item The expansion of all Siegel modular forms  has the following terms 
\[
\begin{split}
\Delta_k^{\hat\rho}(\mathbf{Z}) &=e^{-\varrho} \Big(1- e^{-w_1(\varrho)+\varrho}- e^{-w_2(\varrho)+\varrho}- e^{-w_3(\varrho)+\varrho}+ \cdots\Big) \\
&= q^{1/2} r^{1/2} s^{1/2} \Big( 1 - qr -r^{-1} - sr + \cdots \Big)\ ,
\end{split}
\]
where $e^{\varrho}$ ($\varrho$ is the Weyl vector) is identified with $q^{1/2} r^{1/2} s^{1/2} $ and the three terms shown are the terms corresponding to the real simple roots.
\item  The Siegel modular form is invariant 
the cyclic  $\mathbb{Z}_3$ symmetry that permutes the three real simple roots. It is  generated by 
 the $Sp(2,\mathbb{Z})$ transformation $\tilde{\gamma}=\delta \cdot V\cdot [1,0;0]_H $.
The character $v(\tilde{\gamma})$  is given by 
\[
v(\tilde{\gamma}) = v(\delta)\times v(V)\times v ([1,0;0]_H) = -1 \times 1 \times -1 =+1 \ .
\]
The symmetry of the Siegel modular form under the involution $\tau\leftrightarrow \tau'$ makes it invariant under the dihedral symmetry.
The antisymmetry under $\delta: z\rightarrow -z $ is equivalent to the Weyl reflection $w_2$. When combined with the dihedral symmetry, we see that
\begin{equation}\label{allreflections}
\Delta_k^{\hat\rho}(w_i \cdot \mathbf{Z})=-\Delta_k^{\hat\rho}(\mathbf{Z})\ .
\end{equation}
for $i=1,2,3$.  This implies that for any $w\in W$, one has
\begin{equation}\label{Weylproperty}
\Delta_k^{\hat\rho}(w\cdot \mathbf{Z})=\det(w) \ \Delta_k^{\hat\rho}(\mathbf{Z})\ .
\end{equation}
where $\det(w)=+1$ (resp. $-1$) if $w$ is generated by a combination of even (resp. odd) elementary reflections. 
\item In the following we repeat arguments due to Gritsenko and Nikulin\cite[Theorem 2.3]{Nikulin:1995}  as they are applicable here as well. We can rewrite Eq. \eqref{second} as 
\begin{align}
\Delta_k^{\hat\rho}(\mathbf{Z}) 
&= \sum_{\substack{n,\ell,m \equiv 1 \text{ mod }2\\ ||\alpha[\frac{n}2,\frac{\ell}2,\frac{m}2] ||^2<0\\ n,m > 0}} f(nm,\ell) \ e^{-\alpha[\frac{n}2,\frac{\ell}2,\frac{m}2]} \ . \label{secondp}
\end{align}
 Let $\alpha[\frac{n}2,\frac{\ell}2,\frac{m}2] = \varrho + a[n',\ell',m']$. Then $(n',m')\in \mathbb{Z}_+^2$ and $\ell'\in \mathbb{Z}$.
Due to property \eqref{Weylproperty}, Eq. \eqref{secondp} can be rewritten as follows:
\begin{equation}
\Delta_k^{\hat\rho}(\mathbf{Z}) =\sum_{w\in W} \det(w)\ w\left(\sum_{\varrho+a \in \mathcal{M}\cap \frac12 Q} -m(a)\  e^{-\varrho -a}\right) \label{thirdp}
\end{equation}
where $e^{-a} \sim q^{n'} r^{\ell'} s^{m'}$ and $m(a)=-f(n'+\frac12,m'+\frac12,\ell'+\frac12)$ and $(\varrho+a)$ lies in the Weyl chamber $\mathcal{M}$  i.e., $(\varrho+a,\delta_i)\leq 0$ for 
$i=1,2,3$. 
However, $(\varrho+a,\delta_i)= 0$ does not happen as it needs $(n,m,\ell)=(0,0,0)$ for which $f(0,0,0)=0$. Thus, the stronger condition $(\varrho+a,\delta_i)<0$ 
holds. This  implies that 
$(a,\delta_i)< 1$ for $i=1,2,3$.
 Integrality of $(n',\ell',m')$ implies that $(a,\delta_i)$ must be integral and thus the stronger condition given below holds.
\begin{equation}
(a,\delta_i)\leq 0 \text{ for }i=1,2,3\ .
\end{equation}
Thus $a\in Q\cap \mathcal{M}$. The  equality for all three values of $i$ occurs only when $a=0$ for which one has $m(0)=-1$.  When $a\neq 0$, the above condition implies $(a,a)\leq0$ -- in other words these correspond to imaginary simple roots. Eq. \eqref{third} follows from Eq. \ref{thirdp} after separating the $a=0$ term from the terns with $a\neq0$.
\item The multiplicity of imaginary simple roots with $(a,a)=0$ can be determined by observing that the primitive roots of this type are: $a_0:=a[1,0,0]=(\delta_1+\delta_2)$, $a[0,0,1]=(\delta_2+\delta_3)$ and $a[1,2,1]=(\delta_1+\delta_3)$. These are permuted by the dihedral group and so it suffices to consider $a_0$ and integral multiples of it. The multiplicity of these imaginary simple roots are determined by 
 the zeroth Fourier-Jacobi coefficient, $\phi^{\hat\rho}_k (\tau,z)$. Identifying the $\theta_1(\tau,z)$ with the denominator formula for the $\widehat{sl(2)}$ generated by $\delta_1$ and $\delta_2$, we see that remaining eta products determine the multiplicities by the formula:
\begin{equation}\label{imagrootsmultiplicity}
q^{-3/8}\ \frac{\eta_{\hat\rho}(\tau)}{\eta(\tau)^3}=1- \sum_{m} m(t a_0)\ q^t\ .
\end{equation}
Thus the Siegel modular forms $\Delta_k^{\hat\rho}(\mathbf{Z})$ indeed provide the denominator formula for a family of Borcherds-Kac-Moody super Lie algebras.
\end{enumerate}
\end{enumerate}

\section{Concluding Remarks}

The main positive result of our paper is to show the existence of BKM Lie superalgebras as the end-point of a sequence of moonshines for $L_2(11)_{A/B}$ that sees a beautiful interplay involving multiplicative eta products, EOT Jacobi forms, and Siegel modular forms. It is known that on extending considerations to include CHL orbifolds preserving $\mathcal{N}=4$ supersymmetry leads to other classes of BKM Lie superalgebras whose real simple roots differ from the examples considered here\cite{Govindarajan:2010fu}. The orbifolding group for the CHL orbifolds in the type IIA picture arise from symplectic automorphisms of $K3$. These are known to be sub-groups of $M_{23}$ from the work of Mukai\cite{Mukai:1988}. The conjugacy classes of elements of this group are determined by its order with the order $\leq 8$. Looking at elements of $\Mtwo$ that are also elements of $M_{23}$ picks out conjugacy classes of $\Mtwo$ with atleast 1 one-cycle and  more than four cycles in its cycle shape. This rules out conjugacy classes of $L_2(11)_A$ with orders $2,3$ and $6$ which have no one-cycles and hence are related to $\Mtwo$ conjugacy classes with no one-cycles. The order $11$ elements of $\Mtwo$ with cycle shape $1^211^2$ have only 4 cycles. Thus, the simple CHL $\mathbb{Z}_N$ orbifolds occur for $N\leq 8$.
 In our forthcoming paper\cite{Sutapa2018b}, we revisit these  considerations and  provide evidence for a new type of BKM Lie algebras that arise for the CHL $\mathbb{Z}_5$ and $\mathbb{Z}_6$ orbifolds. The cycle shapes associated with the conjugacy classes of $L_2(11)_A$ with orders $2,3$ and $6$ appear in considering cases involving generalized moonshine associated with commuting pairs of elements. The Jacobi forms that appear in the multiplicative lift here are those that arise in the context of umbral moonshine\cite{Cheng:2012tq}.

The squares of the Siegel modular forms that we construct, i.e., $\Delta_k^{\hat\rho}(\mathbf{Z})$, are Siegel modular forms that appear in the context of $M_{24}$-moonshine. The construction using products of rescaled Borcherds products for the cases when the order of the group element is prime or powers of a prime number  connects up to the work of Raum\cite{Raum:2012}. Other than the conjugacy classes $2^{12}$ and $3^8$ of $M_{24}$,  our results agree when a comparison is possible. For $3^8$, this is due to an incorrect assignment of level during programming and our results correct his. For the order 6 conjugacy classes (i.e., $1^22^23^26^2$ and $6^4$) of $M_{24}$, our results extend Raum's computations. The $M_{12}$ conjugacy class, $2^110^1$, is one where the Jacobi form is one half of an EOT Jacobi form. In this instance, we obtain the following product formula:
\begin{align}
(\Delta_0^{2^1 10^1} (\mathbf{Z}))^{10}= B_{10\psi^{2^1 10^1}} (\mathbf{Z}) B_{5\psi^{1^2 5^2}} (2\mathbf{Z}) B_{\psi^{1^{12}}} (10\mathbf{Z}) \ .
\end{align}
This is different from the formula given in Proposition \ref{modularproduct} where the Siegel modular form, $\Delta_k^{\hat\rho}(\mathbf{Z})$, was raised to the power of the smallest cycle shape which is two in the current example. 
This also provides another example where the na\"ive additive lift fails to match the product formula. One has
\begin{align*}
\frac{T_3 \phi^{2^110^1}}{\phi^{2^110^1}} -\hat\psi^{2^1 10^1} &= \frac{\theta_1(\tau,z)^2}{\eta(\tau)^6} \Big[ \tfrac{20}3\ \eta_{2^210^2}(\tau)    \Big] \ .
\end{align*}
We anticipate that there is a BKM Lie superalgebras associated with this Siegel modular form as well.
Clearly, it appears that we should be able construct Siegel modular forms for all conjugacy classes of $M_{24}$ that don't appear in Raum's list or the ones that we considered here. We do not pursue this here and hope to report it elsewhere.

\medskip

\noindent \textbf{Acknowledgments:}  We thank Aniket Joshi for discussions and for verifying the mod $2$ behaviour of the Jacobi Forms as a part of his undergraduate thesis work under the supervision of one of us (SG).  This work was partially supported by a grant from the Simons Foundation which enabled SG to visit the Aspen Center for Physics during the summer of 2017. SG also thanks the Department of Science and Technology for support through grant EMR/2016/001997 and SS is supported by  NET-JRF fellowship under UGC fellowship scheme.

\appendix

\section{Modular Forms}\label{appendixmodforms}

A modular form, of weight $k$ and character $\chi$, is a function $f:\mathbb{H}\rightarrow \mathbb{C}$ such that for $\gamma=\begin{pmatrix} a& b \\ c& d\end{pmatrix}\in PSL(2,\mathbb{Z})$, one has
\begin{equation}
f|_k \gamma(\tau) = \chi(\gamma)\ f(\tau)\ ,
\end{equation}
where
\[
f|_k \gamma(\tau) := (c\tau + d)^{-k}\ f(\gamma\cdot \tau)\ ,
\]
and $\gamma\cdot \tau = \frac{a\tau+b}{c\tau+d}$. The level $N$ sub-group $\Gamma_0(N)\subseteq PSL(2,\mathbb{Z})$ is given by restricting to $\gamma$ with $c=0\text{ mod} N$.

\subsection{Weight two modular forms}

The Eisenstein series at level $N>1$ and weight $2$ is defined as follows:
$$
E^{(N)}_2(\tau):=\tfrac{12i}{\pi(N-1)}\partial_\tau \big[\ln \eta(\tau) -\ln \eta(N\tau)\big] = 1 + \tfrac{24}{N-1} q + \cdots \ .
$$
Note that $\frac{N-1}{24} E^{(N)}_2(\tau)$ has integral coefficients except for the constant term.
Let $f(\tau)$ denote the following weight two modular form of $\Gamma_0(16)$:
\begin{equation}
f(\tau) := \frac14 \left(\eta_{4^88^{-4}}(\tau)- \eta_{1^42^2 4^{-2}}(\tau)\right)= q -4q^3 + 6q^{5} -8 q^{7} + \cdots
\end{equation}
An alternate formula for $f(\tau)$ is as a generalised Eisenstein series\cite{Stein}
\[
f(\tau) = E_{2,\chi,\chi}(\tau):=\sum_{m=0}^\infty \Big[\sum_{n|m}\chi(n)\ \chi(\tfrac{m}n)\ m\Big] \ q^m\ ,
\]
where $\chi(m) = \left(\tfrac{-4}{m}\right)$ is a real Dirichlet character modulo $4$.
A basis for five-dimensional space of weight two modular forms of $\Gamma_0(16)$ is given by\footnote{We have used SAGE to obtain the dimension of the spaces of modular forms\cite{sage}. SAGE also provides a basis for the modular forms and we have verified that our choices are consistent with the choices given there.}
\begin{equation}
E^{(2)}_2(\tau),\  E^{(4)}_2(\tau),\  E^{(8)}_2(\tau),\  E^{(16)}_2(\tau) \textrm{ and } f(\tau)\ . 
\end{equation}
The first five Fourier coefficients of any weight two modular form of $\Gamma_0(16)$  uniquely determine the modular form.
A basis for weight two modular forms of $\Gamma_0(32)$ is obtained by adding three  more weight two modular
forms:
 \begin{equation*}
 E^{(32)}_2(\tau), f(2\tau)\   \textrm{ and the cusp form }\eta_{4^28^2}(\tau)\ ,
 \end{equation*}
to the $\Gamma_0(16)$ basis.
Eight of the first nine Fourier coefficients of any weight two modular form of $\Gamma_0(32)$  uniquely determines the modular form.

\subsection{Siegel and Jacobi Forms}

The group $Sp(2,\mathbb{Z})$ is the set of $4\times 4$ matrices written 
in terms of four $2\times 2$ matrices $A$, $B$, $C$,  $D$ (with integral entries)
as
$
M=\left(\begin{smallmatrix}
   A   & B   \\
    C  &  D
\end{smallmatrix}\right)
$
satisfying $ A B^T = B A^T $, $ CD^T=D C^T $ and $ AD^T-BC^T=I $. 
This group acts naturally 
on the Siegel upper half space, $\BH_2$, as
\begin{equation*}
\mathbf{Z}=\begin{pmatrix} \tau & z \\ z & \tau' \end{pmatrix}
\longmapsto M\cdot \mathbf{Z}\equiv (A \mathbf{Z} + B) 
(C\mathbf{Z} + D)^{-1} \ .
\end{equation*}
The  level $N\in \mathbb{Z}_{>0}$ subgroup, $\Gamma_0^{(2)}(N)$, of $Sp(2,\mathbb{Z})$ is given by restricting to $M$ such that $C=0\text{ mod }N$.

 The Jacobi sub-group, $\Gamma^J(N)\subset\Gamma_0^{(2)}(N)$,  is the semi-direct product of the Heisenberg group and $\Gamma_0(N)$ defined as follows:
     \begin{equation}
     \Gamma^J(N) = \Gamma_0(N)\ltimes H(\mathbb{Z}),
     \end{equation}
     where $H(\mathbb{Z})$ is given by
     \begin{align}
     &H(\mathbb{Z})=\left\{ [\lambda,\mu; \kappa]_H := \begin{pmatrix} 1& 0 & 0 & \mu\\ \lambda & 1 & \mu & \kappa\\ 0 & 0 & 1 & -\lambda\\ 0& 0& 0& 1 \end{pmatrix} , \hspace{0.5cm}  \lambda,\mu, \kappa \in \mathbb{Z}\right\} \ ,
     \end{align}
     The embedding of $\begin{pmatrix} a& b \\ c& d\end{pmatrix}\in \Gamma_0(N)$ in $\Gamma_t(N)$ is given by 
\begin{equation}
\widetilde{\begin{pmatrix}a& b \\ c& d\end{pmatrix}}\equiv \begin{pmatrix} a & 0 & b & 0\\ 0 & 1 & 0 & 0 \\ c & 0 & d & 0\\ 0 & 0 & 0 & 1\end{pmatrix}\ , \  c=0\mod N\ .
\end{equation}
Then, $\Gamma_0^{(2)}(N) $ is generated by adding the following transformation to $  \Gamma^J(N) $:
\begin{equation}
V =\begin{pmatrix} 0 & 1 & 0 & 0\\ 1 & 0 & 0 & 0 \\ 0 & 0 & 0 & 1\\ 0 & 0 & 1 & 0\end{pmatrix}\ ,
\end{equation}
with $\det(C\mathbf{Z}+D)=-1$. This acts on $\mathbb{H}_2$ as the involution
\begin{equation}
(\tau, z, \tau')\longrightarrow (\tau', z, \tau)\ .
\end{equation}

 A Siegel modular form, of weight $k$ with character $v$ with respect to $\Gamma_0^{(2)}(N)$, is a holomorphic function $F:\mathbb{H}_2\rightarrow \mathbb{C} $ satisfying
     \begin{equation}
     F|_k M (\mathbf{Z}) = v(M)\  F(\mathbf{Z})\ ,
     \end{equation}
     for all $M\in \Gamma_0^{(2)}(N) $ and the slash operation is defined as 
 \begin{equation}
    F|_k M(\mathbf{Z}) := \det (C\mathbf{Z}+D)^{-k}\ F(M\cdot \mathbf{Z})\ .
\end{equation}

\noindent \textbf{Jacobi Form:} A holomorphic function $\phi_{k,m}(\tau, z) : \mathbb{H}_1\times \mathbb{C}\rightarrow \mathbb{C}$ is called a  Jacobi form of weight $k$ and index $m$ if the function $$\tilde\phi_k(\mathbf{Z}) = \exp(2\pi i m \tau')\  \phi_{k,m}(\tau, z),$$ on $ \mathbb{H}_2$ such that $\tilde\phi_k(\mathbf{Z})$
 is a modular form of weight $k$ with respect to the Jacobi group $\Gamma^J(N)\subseteq Sp(2,\mathbb{Z})$ with character $v$ i.e., it satisfies
     \begin{equation}
     \tilde\phi|_k\ M (\mathbf{Z}) = v(M)\ \tilde\phi_k(\mathbf{Z})\hspace{1cm} \forall M\in \Gamma^J(N)\ ,
     \end{equation}
and it is holomorphic at all cusps, i.e.,
let $\gamma\in SL(2,\mathbb{Z})$ and $\tilde{\gamma}$, its embedding in $Sp(2,\mathbb{Z})$. Then it has a Fourier expansion 
     \begin{equation}
      \tilde\phi|_{k} \tilde{\gamma} (\mathbf{Z}) =  s^m\ \sum_{\substack{n,\ell\\ 4nm-l^2\ge0}}\ c_\gamma(n,\ell)\ q^n r^\ell =: s^m\ \sum_{\substack{n,\ell\\ \mathcal{D}=4nm-l^2\ge0}}\ C_\gamma(\mathcal{D},\ell)\ q^n r^\ell\ ,
     \end{equation}
     where $q= e^{2\pi i \tau}$, $r=e^{2\pi i z}$ and $s= e^{2\pi i \tau'}$ and $n,\ell\in\mathbb{Q}$. The combination $\mathcal{D}:=(4nm-\ell^2)$ is called the \textit{discrimant}, The Fourier coefficients of a Jacobi form depend only on the discriminant and the value of $\ell\mod 2m$.
     \begin{enumerate}[label=(\alph*)]
     \item $\phi_{k,m}(\tau,z)$ is called a \textit{cusp form} if $C_\gamma(\mathcal{D},\ell) = 0$ unless $\mathcal{D}>0$ at all cusps.
     \item $\phi_{k,m}(\tau,z)$ is called a \textit{weak Jacobi form} if $C_\gamma(\mathcal{D},\ell) = 0$ unless $n\ge0$ at all cusps.
     \item $\phi_{k,m}(\tau,z)$ is called a \textit{nearly holomorphic} if there exists an $n\in \mathbb{N}$ such that $\Delta^n\phi_{k,m}(\tau,z)$ is a weak Jacobi form where $\Delta =\eta(\tau)^{24}$. 
     \end{enumerate}
     The space of all Jacobi forms  with character for $\Gamma^J(N)$ is denoted by $J_{k,m}(\Gamma_0(N), v)$. Similarly the space for all weak and nearly holomorphic Jacobi forms are denoted by $J_{k,m}^w(\Gamma_0(N))$ and $J_{k,m}^{nh}(\Gamma_0(N))$ respectively.
     
 \subsection{Examples}
 
  An example of Jacobi form of weight $\frac12$ and index $\frac12$ is the Jacobi theta function of level 8:
     \begin{align}
     \theta_1 (\tau,z) &= \sum_{m\in\mathbb{Z}} \left(-\frac4m\right) q^{m^2/8} r^{m/2}\nonumber\\
	     &=-q^{1/8} r^{-1/2}\prod_{n\ge1} (1- q^{n-1}r) (1- q^{n}r^{-1}) (1- q^n)\ ,\\
             & =q^{1/8}\left(-\frac{1}{\sqrt{r}}+\sqrt{r}\right)+q^{9/8}
   \left(\frac{1}{r^{3/2}}-r^{3/2}\right)+ \cdots \nonumber
      \end{align}
      This is an element of $J_{\frac12, \frac12}(SL(2,\mathbb{Z}), v_\eta^3\times v_H)$ where $ v_\eta$ is the character of the Dedekind $\eta$-function and 
      \begin{equation}\label{Hcharacter}
       v_H([\lambda,\mu:\kappa]_H) =(-1)^{\lambda+\mu+\lambda\mu+\kappa} \ .
\end{equation}      
Then the weight $-1$ index $\frac12$ Jacobi form $\frac{\theta_1(\tau,z)}{\eta(\tau)^3}$ has character $v_H$.

 More generally, the genus-one theta functions are defined by
\begin{equation}
\theta\left[\genfrac{}{}{0pt}{}{a}{b}\right] \left(\tau,z\right)
=\sum_{l \in \BZ} 
q^{\frac12 (l+\frac{a}2)^2}\ 
r^{(l+\frac{a}2)}\ e^{i\pi lb}\ ,
\end{equation}
where $a,\ b\in (0,1)\mod 2$. We define $\theta_1 
\left(\tau,z\right)\equiv\theta\left[\genfrac{}{}{0pt}{}{1}{1}\right](\tau,z)$,
$\theta_2 
\left(\tau,z\right)\equiv\theta\left[\genfrac{}{}{0pt}{}{1}{0}\right] 
\left(z_1,z\right)$, $\theta_3 
\left(\tau,z\right)\equiv\theta\left[\genfrac{}{}{0pt}{}{0}{0}\right] 
\left(\tau,z\right)$ and $\theta_4 
\left(\tau,z\right)\equiv\theta\left[\genfrac{}{}{0pt}{}{0}{1}\right] 
\left(\tau,z\right)$.

The characters of the level 1 $\mathcal{N}=4$ superconformal algebra that appear in our decomposition of the  Jacobi forms of weight zero index 1 are 
$\mathcal{C}(\tau,z)$ and $q^{h-\frac18} \mathcal{B}(\tau,z)$ ($h>0$), where
\begin{align}
\mathcal{C}(\tau,z) &= \frac{\theta_1(\tau,z)^2}{\eta(\tau)^3} \frac{i}{\theta_1(\tau,2z)} 
\sum_{n\in\BZ} q^{2n^2} r^{4n} \frac{1+q^n r}{1-q^nr}\ . \label{masslesschar} \\
\mathcal{B}(\tau,z) 
 &= \frac{\theta_1(\tau,z)^2}{\eta(\tau)^3} \label{massivechar}\ .
\end{align}

\subsection{The EOT Jacobi Forms for $\Mtwo$}

Let $\phi_{0,1}(\tau,z)$ and $\phi_{-2,1}(\tau,z)$ denote the following Jacobi forms:
\begin{align*}
\phi_{0,1}(\tau,z) &= 4\left[\frac{\theta_2(\tau,z)^2}{\theta_2(\tau,0)^2} + \frac{\theta_3(\tau,z)^2}{\theta_3(\tau,0)^2}  + \frac{\theta_4(\tau,z)^2}{\theta_4(\tau,0)^2} \right]= (r^{-1} + 10 + r) + O(q)\ ,\\
\phi_{-2,1}(\tau,z) &=\frac{\theta_1(\tau,z)^2}{\eta(\tau)^6} =(r^{-1}-2 +r) + O(q)\ .
\end{align*}
These  are the unique weak Jacobi forms of index 1 and weight $\leq 0$. They generate the ring of weak Jacobi forms freely over the space of modular forms\cite{Dabholkar:2012nd}. Thus, all weak Jacobi forms of index 1 and weight zero can be given in terms of a constant and a modular form of weight 2 at suitable level.  Table \ref{EOTJFlist} lists all EOT Jacobi forms that appear for all conjugacy classes of $\Mtwo$. All EOT Jacobi forms have Fourier expansions about the cusp at $i\infty$ that are non-vanishing when the discriminant $\mathcal{D}\geq -1$. Further $C(\mathcal{D}-1,1)=2$ for all EOT Jacobi forms. This implies that terms with negative discriminant are identical for all of EOT Jacobi forms.
\begin{table}[ht]\label{EOTJFlist}
\begin{center}
\begin{tabular}{c|c|l}
Conj. Class & $\trho$ & \hspace{2in}$Z^\trho$ \\ \hline
1a & $1^{24}$ & 2 $\phi_{0,1}(\tau,z)$\phantom{\Big|}  \\[3pt]
2a/c & $2^{12}$ &  $(-2E^{(2)}_2(\tau)+4 E^{(4)}_2(\tau))\ \phi_{-2,1}(\tau,z) $  \\[3pt]
2a/c & $2^{12}$ &  $2 \eta_{1^82^{-4}}(\tau)\ \phi_{-2,1}(\tau,z) $  \\[3pt]
2b & $1^82^8$ & $\tfrac23 \phi_{0,1}(\tau,z) +\tfrac43 E^{(2)}_2(\tau)\ \phi_{-2,1}(\tau,z) $ \\[3pt]
3a & $1^63^6$ & $\tfrac12 \phi_{0,1}(\tau,z) +\tfrac32 E^{(3)}_2(\tau)\ \phi_{-2,1}(\tau,z) $ \\[3pt]
3b & $3^{8}$ &  $2\,\eta_{1^6 3^{-2}}(\tau)\ \phi_{-2,1}(\tau,z) $  \\[3pt]
4a & $1^4 2^2 4^6$ & $\tfrac13 \phi_{0,1}(\tau,z) +(-\tfrac13 E^{(2)}_2(\tau)+2 E^{(4)}_2(\tau))\ \phi_{-2,1}(\tau,z) $  \\[3pt]
5a & $1^45^4$ & $\tfrac13 \phi_{0,1}(\tau,z) +\tfrac53 E^{(5)}_2(\tau)\ \phi_{-2,1}(\tau,z) $ \\[3pt]
6a/c & $6^4$ & $2\,\eta_{ 1^2 2^23^{2}6^{-2}}(\tau)\ \phi_{-2,1}(\tau,z) $  \\[3pt]
6b & $1^22^23^26^2$ & $\tfrac16 \phi_{0,1}(\tau,z) +(-\tfrac16 E^{(2)}_2(\tau)-\tfrac12 E^{(3)}_2(\tau)+\tfrac52 E^{(6)}_2(\tau))\ \phi_{-2,1}(\tau,z) $  \\[3pt]
8a &$1^2 2^1 4^1 8^2$ & $\tfrac16 \phi_{0,1}(\tau,z) +(-\tfrac12 E^{(4)}_2(\tau)+\tfrac73 E^{(8)}_2(\tau))\ \phi_{-2,1}(\tau,z) $  \\[3pt]
10a/b/c & $2^2 10^2$ & 2\,$\eta_{ 1^3 2^15^{1}10^{-1}}(\tau)\ \phi_{-2,1}(\tau,z) $  \\[3pt]
11a &$1^2 11^2$ & $\tfrac16 \phi_{0,1}(\tau,z) + (\tfrac{11}6 E^{(11)}_2(\tau) - \tfrac{22}5 \eta_{1^211^2}(\tau))\ \phi_{-2,1}(\tau,z) $  \\[3pt]
4b & $2^4 4^4$&$2\,\eta_{ 2^{8}4^{-4}}(\tau)\ \phi_{-2,1}(\tau,z) $ \\[3pt]
4c & $4^6$ & $2\,\eta_{ 1^4 2^{2}4^{-2}}(\tau)\ \phi_{-2,1}(\tau,z) $\\[3pt]
12a & $12^2$ & $2\,\eta_{ 1^4 2^{-1}4^16^{1}12^{-1}}(\tau)\ \phi_{-2,1}(\tau,z) $ \\[3pt]
12b/c & $2^14^16^1 12^1$ &2$\,\eta_{ 1^3 2^{-1}3^{-1}4^26^{3}12^{-2}}(\tau)\ \phi_{-2,1}(\tau,z) $\\[3pt] \hline
\end{tabular}
\end{center}
\caption{The EOT Jacobi forms as given in \cite{Gaberdiel:2010ch,Gaberdiel:2010ca,Eguchi:2010fg} for all conjugacy classes of $M_{24}$ that reduce to conjugacy classes of $\Mtwo$.}
\end{table}

\clearpage

\section{Computations for the Borcherds product formula}\label{appendixcomp}

\subsection{Proving Equation \eqref{six-identities}}\label{six-appendix}

\subsubsection*{Cycle Shape $\hat\rho=2^6$}

The multiplicative seed, $\psi=\hat\psi_{0,1}^{2^6}(\tau,z)= \frac12 Z^{2^{12}}(\tau,z)$, is a Jacobi form of $\Gamma_0(4)$. The cusps and other data for $\Gamma_0(4)$ are as follows:
\begin{center}
\begin{tabular}{|c|c|c|c|}
\hline
$f/e$& $i\infty$ & $1/2$ & $0/1$\\
\hline
$h_e$ & 1&1&4\\
\hline
$N_e$& 1& 2&4\\
\hline
\end{tabular}
\end{center}
We need the Fourier expansion of $\psi$ about these cusps. Using $M_0=1$ and $M_{1/p}=-ST^{-p}S$, we can compute the expansion using methods described in chapter 2 of \cite{Niemann}. 
\begin{align}
\hat\psi^{\hat\rho}|_{0,1} M_0 (\tau,z)&=- 4 \frac{\eta(\tau)^8}{\eta(\tau/2)^4}\phi_{-2,1}(\tau,z)\\
\hat\psi^{\hat\rho}|_{0,1} M_{\frac12} (\tau,z)&=-  \frac{\eta(\tau)^8}{\eta(2\tau)^4}\phi_{-2,1}(\tau,z) = -\hat\psi_{0,1}^{\hat\rho}(\tau,z)
\end{align}
The Fourier expansion of the cusp about zero does not have any terms with integral powers of $q$ and hence this cusp does not contribute to the product formula. 

\subsubsection*{Cycle Shape $\hat\rho=3^4$}
The multiplicative seed, $\psi=\hat\psi_{0,1}^{3^4}(\tau,z)= \frac12 Z^{3^{8}}(\tau,z)$, is a Jacobi form of $\Gamma_0(9)$. The cusps and other data for $\Gamma_0(9)$ are as follows:
\begin{center}
\begin{tabular}{|c|c|c|c|c|}
\hline
$f/e$& $i\infty$ & $1/3$ & $2/3$ & $0/1$\\
\hline
$h_e$ & 1&1&1&9\\
\hline
$N_e$& 1& 3&3&9\\
\hline
\end{tabular}
\end{center}
Using $M_{2/3}=-S T^{-1} S T^2 S$, we obtain
\begin{align}
\hat\psi_{0,1}^{\hat\rho}(\tau,z)| M_{\frac13} &= e^{\frac{2\pi i}{3}}\hat\psi_{0,1}^{\hat\rho}(\tau,z)\\
\hat\psi_{0,1}^{\hat\rho}(\tau,z)| M_{\frac23} &=e^{\frac{4\pi i}{3}}\hat\psi_{0,1}^{\hat\rho}(\tau,z)
\end{align}
Again the cusp about $0$ does not have any terms with integral powers of $q$ and hence does not contribute to the Borcherds product.

\subsubsection*{Cycle Shape $\hat\rho=6^2$}
The multiplicative seed, $\psi=\hat\psi_{0,1}^{6^2}(\tau,z)= \frac12 Z^{6^{4}}(\tau,z)$, is a Jacobi form of $\Gamma_0(36)$. The cusps and other data for $\Gamma_0(36)$ are as follows:
\begin{center}
\begin{tabular}{|c|c|c|c|c|c|c|c|c|c|c|c|c|}
\hline
$f/e$& $i\infty$&0 & $1/2$ & $1/3$ & $2/3$ & $1/4$ & $1/6$ & $5/6$ & $1/9$ & $1/12$ & $5/12$ & $1/18$\\
\hline
$h_e$&1 & 36&9&4&4&9&1&1&4&1&1&1\\
\hline
$N_e$ & 1& 36&18&12&12&9&6&6&4&3&3&2\\
\hline
\end{tabular}
\end{center}
Using $M_{\frac56} = -ST^{-1}S T^5 S$ and $M_{\frac5{12}} = ST^{-2}S T^2 ST^{-2}S$,  we obtain
\begin{align*}
\hat\psi^{\hat\rho}|_{0,1}M_{\frac16}(\tau,z) &= -e^{\frac{-2\pi i}{3}}\hat\psi_{0,1}^{\hat\rho}(\tau,z)\\
\hat\psi^{\hat\rho}|_{0,1} M_{\frac56}(\tau,z)&= -e^{\frac{-4\pi i}{3}}\hat\psi_{0,1}^{\hat\rho}(\tau,z)\\
\hat\psi^{\hat\rho}|_{0,1} M_{\frac1{12}}(\tau,z) &=e^{\frac{2\pi i}{3}}\hat\psi_{0,1}^{\hat\rho}(\tau,z)\\
\hat\psi^{\hat\rho}|_{0,1} M_{\frac5{12}}(\tau,z) &=e^{\frac{4\pi i}{3}}\hat\psi_{0,1}^{\hat\rho}(\tau,z)\\
\hat\psi^{\hat\rho}|_{0,1}M_{\frac1{18}}(\tau,z) &=-\hat\psi_{0,1}^{\hat\rho}(\tau,z)
\end{align*}
The Fourier expansions about the cusps at $0, 1/2, 1/3, 2/3, 1/4, 1/9$ do not have any terms with integral powers of $q$ and hence do not contribute to the Borcherds product.

\section{Basic Group Theory}

\subsection{$M_{12}$ and  $M_{12}\!:\!2$}
In the 12-dimensional permutation representation, $M_{12}$ is generated as  $\langle \alpha, \beta, \gamma,\delta\rangle$. 
$L_2(11)_A$ is a maximal subgroup of $M_{12}$ while one has $L_2(11)_B \subset M_{11} \subset M_{12}$.
The four conjugacy classes associated with elements of order $4$, $8$ and $10$ do \textit{not} reduce to conjugacy classes of
either $L_2(11)_A $ or $L_2(11)_B$.

Let $g$ denote an element of $M_{12}$ in the 12-dimensional permutation representation. Let $\varphi$ denote the outer automorphism of $M_{12}$.
 It acts on the generators of $M_{12}$ as
\[
\alpha^\varphi = \varphi \alpha \varphi^{-1}=\alpha^{-1} \quad,\quad \beta^\varphi =\beta \quad,\quad
\gamma^\varphi=\gamma^{-1}\quad,\quad \delta^\varphi =\delta\ .
\]
The 24-dimensional permutation representation of $\Mtwo$
consists of two classes of elements given in block-diagonal form below:
\[
(g,e):= \begin{pmatrix} g & 0 \\ 0 & \varphi(g) \end{pmatrix} \quad\text{and}\quad 
(g,\varphi):= \begin{pmatrix} 0 & g \\ \varphi(g) & 0 \end{pmatrix}\ .
\]
Conjugacy classes, $\widetilde{\rho}$ of type $(g,e)$ of $\Mtwo$ descend to pairs of conjugacy classes of $M_{12}$. Explicitly, one has $\widetilde{\rho}=(\wrho,\varphi(\wrho))$, where $\wrho$ is the conjugacy class of $g$ in $M_{12}$. One has the sequence of groups 
\[
L_2(11)_{A/B} \subset M_{12} \stackrel{\varphi}{\longrightarrow} M_{12}\!:\!2 \subset M_{24}\ ,
\]

\subsubsection{$\Mtwo$ characters from $M_{12}$ characters}

Given a group $G$ and a $\BZ_2$ automorphism $\varphi$, the characters  of the group $G$ are related to those of the group $G.2$ in two possible ways\cite{Atlasv3}:
\begin{enumerate}
\item \textit{The splitting case:}
 A character $\hat{\chi}_m$ of $G$ may give rise to two characters of $G.2$ -- call them $\tilde{\chi}_a$ and $\tilde{\chi}_{a'}$. For elements of type $(g,e)$, they are given by
$\tilde{\chi}_a=\tilde{\chi}_{a'}= \hat{\chi}_m$. For elements of type $(g,\varphi)$, one has $\tilde{\chi}_a+\tilde{\chi}_{a'}=0$. Thus we have a natural pairing $(a,a')$ of representations of $G.2$ and they are mapped to the $m$-th representation of $G$. For $\Mtwo$, the splitting representations are (in the notation $m\leftrightarrow (a,a')$)
\begin{equation}
\begin{aligned}
1 \leftrightarrow (1,2)\ , \ 
6 \leftrightarrow (5,6) \ , \ 
7 \leftrightarrow (7,8) \ , \
8 \leftrightarrow (9,10) \ , \\
11+a \leftrightarrow (12+2a,13+2a) \text{ for } a=0,1,\ldots, 4\ .
\end{aligned}
\end{equation}
\item \textit{The fusion case:} Two characters $\hat{\chi}_m$ and $\hat{\chi}_n$ fuse to give a single character, call it $\tilde{\chi}_{m,n}$. For elements of $G.2$ of type $(g,e)$, one has
$\tilde{\chi}_a=\hat{\chi}_m+\hat{\chi}_n$  and for elements of type $(g,\varphi)$, one has $\tilde{\chi}_a=0$. 

The fusion characters of $\Mtwo$ are (in the notation $(m,n)\leftrightarrow a$)
\begin{equation}
(2,3) \leftrightarrow 3 \ , \
(4,5) \leftrightarrow 4 \ , \
(9,10) \leftrightarrow 11 \ . 
\end{equation}

\end{enumerate}

\subsection{Character Tables}

Character table for $L_2(11)$ obtained from the GAP database\cite{GAP4}

\[
\begin{array}{c|rrrccrcc}
&1a & 2a &3a& 5a &5b &6a &11a & 11b \\[3pt] \hline
 \chi_1& 1 & 1 & 1 & 1 & 1 & 1 & 1 & 1 \\
 \chi_2& 5 & 1 & -1 & 0 & 0 & 1 & -\frac{1}{2}+\frac{i \sqrt{11}}{2} & -\frac{1}{2}-\frac{i \sqrt{11}}{2}
   \\
  \chi_3&5 & 1 & -1 & 0 & 0 & 1 & -\frac{1}{2}-\frac{i \sqrt{11}}{2} & -\frac{1}{2}+\frac{i \sqrt{11}}{2}
   \\
  \chi_4&10 & -2 & 1 & 0 & 0 & 1 & -1 & -1 \\
  \chi_5&10 & 2 & 1 & 0 & 0 & -1 & -1 & -1 \\
  \chi_6&11 & -1 & -1 & 1 & 1 & -1 & 0 & 0 \\
  \chi_7&12 & 0 & 0 & -\frac{1}{2}+\frac{\sqrt{5}}{2} & -\frac{1}{2}-\frac{\sqrt{5}}{2} & 0 & 1 & 1 \\
  \chi_8&12 & 0 & 0 & -\frac{1}{2}-\frac{\sqrt{5}}{2} & -\frac{1}{2}+\frac{\sqrt{5}}{2} & 0 & 1 & 1 \\[3pt] \hline
\end{array}
\]


The character table for $M_{12}$ (obtained from the GAP character table database)
\begin{equation}
\left(\begin{smallmatrix}
\textrm{Label}         &1a &  2a & 2b & 3a &  3b &  4a & 4b & 5a & 6a & 6b & 8a & 8b & 10a &  11a & 11b \\[4pt]
\widehat{\chi}_1      &~1  &~1  &~1  &~1  &~1  &~1  &~1  &~1  &~1  &~1  &~1  &~1   &~1   &~1   &~1 \\
\widehat{\chi}_2   &   11& -1&  ~3&  ~2 &-1 &-1 & ~3 & 1 &-1  &~0 &-1 & ~1 & -1   &~0   &~0 \\
\widehat{\chi}_3    &  11'& -1 & ~3 & ~2& -1&  ~3& -1& ~1& -1  &~0  &~1 &-1 & -1   &~0   &~0 \\
\widehat{\chi}_4    &  16&  ~4  &~0& -2  &~1  &~0  &~0 & ~1 & ~1  &~0  &~0  &~0 & -1  & ~\alpha & \alpha^* \\
\widehat{\chi}_5    &  16' & ~4  &~0 &-2&  ~1  &~0  &~0  &~1 & ~1  &~0  &~0  &~0  &-1 & \alpha^* &  ~\alpha\\
\widehat{\chi}_6    &  45 & ~5 &-3  &~0 & ~3&  ~1 & ~1  &~0 &-1  &~0 &-1 &-1   &~0 &  ~1&   ~1\\
\widehat{\chi}_7  &    54 & ~6  &~6  &~0  &~0 & ~2 & ~2& -1  &~0  &~0  &~0  &~0  & ~1  &-1  &-1\\
\widehat{\chi}_8  &    55_R &-5  &~7  &~1 & ~1 &-1& -1  &~0 & ~1 & ~1 &-1& -1   &~0   &~0   &~0 \\ 
\widehat{\chi}_9  &    55& -5 &-1  &~1 & ~1&  ~3& -1  &~0 & ~1 &-1 &-1 & ~1   &~0   &~0   &~0 \\ 
\widehat{\chi}_{10}&     55'& -5 &-1&  ~1 & ~1 &-1 & ~3  &~0 & ~1& -1&  ~1 &-1   &~0   &~0   &~0\\
\widehat{\chi}_{11} &   66 & ~6  &~2&  ~3  &~0 &-2& -2&  ~1  &~0& -1  &~0  &~0 & ~1   &~0   &~0\\
\widehat{\chi}_{12} &    99& -1&  ~3  &~0 & ~3 &-1 &-1 &-1& -1  &~0 & ~1 & ~1 & -1   &~0   &~0\\
\widehat{\chi}_{13}  &  120  &~0 &-8&  ~3  &~0  &~0  &~0  &~0  &~0 & ~1  &~0  &~0   &~0 & -1  &-1\\
\widehat{\chi}_{14} &  144 & ~4  &~0  &~0 &-3  &~0  &~0& -1 & ~1  &~0  &~0  &~0 & -1&   ~1 &  ~1\\
\widehat{\chi}_{15}&    176 & -4  &~0& -4 &-1  &~0  &~0 & ~1& -1  &~0  &~0  &~0 &  ~1   &~0   &~0\\
\end{smallmatrix}
\right)\label{M12chartable}
\end{equation}
where $\alpha=-\tfrac12 + i \tfrac{\sqrt{11}}2$. Under the outer automorphism, $\varphi$, of $M_{12}$ one has
\begin{equation}
\varphi:\quad \wchi_2 \leftrightarrow \wchi_3\quad,\quad
\wchi_4 \leftrightarrow \wchi_5\quad,\quad
\wchi_9 \leftrightarrow \wchi_{10}\ .
\end{equation}

The character table for $\Mtwo$ (obtained from the GAP  database)
\begin{equation*}
\left(\begin{smallmatrix}
\textrm{Label} & 1a & 2a & 2b & 3a & 3b & 4a & 5a &  6a & 6b & 8a & 10a & 11a & 2c & 4b & 4c & 6c & 10b & 10c & 12a & 12b & 12c\\[3pt]
\widetilde{\chi}_1 & ~1 & ~1 & ~1 & ~1 & ~1 & ~1 & ~1 & ~1 & ~1 & ~1&1&1&~1 & ~1 & ~1 & ~1 & ~1 & ~1 & ~1 & ~1 & ~1 \\
\widetilde{\chi}_2 & ~1 & ~1 & ~1 & ~1 & ~1 & ~1 & ~1 & ~1 & ~1 & ~1 & ~1 & ~1 & -1 & -1 & -1 & -1 & -1 & -1 & -1 & -1 & -1 \\
 \widetilde{\chi}_3 &22 & -2 & ~6 & ~4 & -2 & ~2 & ~2 & -2 & ~0 & ~0 & -2 & ~0 & ~0 & ~0 & ~0 & ~0 & ~0 &
   ~0 & ~0 & ~0 & ~0 \\
\widetilde{\chi}_4 & 32 & ~8 & ~0 & -4 & ~2 & ~0 & ~2 & ~2 & ~0 & ~0 & -2 & -1 & ~0 & ~0 & ~0 & ~0 & ~0 &
   ~0 & ~0 & ~0 & ~0 \\
\widetilde{\chi}_5 & 45 & ~5 & -3 & ~0 & ~3 & ~1 & ~0 & -1 & ~0 & -1 & ~0 & ~1 & ~5 & -3 & ~1 & -1 & ~0
   & ~0 & ~1 & ~0 & ~0 \\
\widetilde{\chi}_6 & 45 & ~5 & -3 & ~0 & ~3 & ~1 & ~0 & -1 & ~0 & -1 & ~0 & ~1 & -5 & ~3 & -1 & ~1 & ~0
   & ~0 & -1 & ~0 & ~0 \\
 \widetilde{\chi}_7 &54 & ~6 & ~6 & ~0 & ~0 & ~2 & -1 & ~0 & ~0 & ~0 & ~1 & -1 & ~0 & ~0 & ~0 & ~0 &
   ~\sqrt{5} & -\sqrt{5} & ~0 & ~0 & ~0 \\
 \widetilde{\chi}_8 &54 & ~6 &~ 6 & ~0 & ~0 & ~2 & -1 & ~0 & ~0 &~ 0 & ~1 & -1 & ~0 & ~0 & ~0 & ~0 &
   -\sqrt{5} & ~\sqrt{5} & ~0 & ~0 & ~0 \\
 \widetilde{\chi}_9 &55 & -5 & ~7 & ~1 & 1 & -1 & ~0 & ~1 & ~1 & -1 & ~0 & ~0 & ~5 & ~1 & -1 & -1 & 0
   & ~0 & -1 & ~1 & ~1 \\
 \widetilde{\chi}_{10} &55 & -5 & ~7 & ~1 & 1 & -1 & ~0 & ~1 & ~1 & -1 & ~0 & ~0 & -5 & -1 & ~1 & ~1 & ~0
   & ~0 & ~1 & -1 & -1 \\
 \widetilde{\chi}_{11} & 110 & -10 & -2 & ~2 & ~2 & ~2 & ~0 & ~2 & -2 & ~0 & ~0 & ~0 & ~0 & ~0 & ~0 & ~0 & ~0
   & ~0 & ~0 & ~0 & ~0 \\
 \widetilde{\chi}_{12} &66 & ~6 & ~2 & ~3 & ~0 & -2 & ~1 & ~0 & -1 & ~0 & ~1 & ~0 & ~6 & ~2 & ~0 & ~0 & ~1 & ~1
   & ~0 & -1 & -1 \\
 \widetilde{\chi}_{13}& 66 & ~6 & ~2 & ~3 & ~0 & -2 & ~1 & ~0 & -1 & ~0 & ~1 & ~0 & -6 & -2 & ~0 & ~0 & -1
   & -1 & ~0 & ~1 & ~1 \\
  \widetilde{\chi}_{14}&99 & -1 & ~3 & ~0 & ~3 & -1 & -1 & -1 & ~0 & ~1 & -1 & ~0 & ~1 & -3 & -1 & ~1 &
   ~1 & ~1 & -1 & ~0 & ~0 \\
 \widetilde{\chi}_{15} &99 & -1 & ~3 & ~0 & ~3 & -1 & -1 & -1 & ~0 & ~1 & -1 & ~0 & -1 & ~3 & ~1 & -1 &
   -1 & -1 & ~1 & ~0 &~ 0 \\
 \widetilde{\chi}_{16} &120 & ~0 & -8 & ~3 & ~0 & ~0 & ~0 & ~0 & ~1 & ~0 & ~0 & -1 & ~0 & ~0 & ~0 & ~0 & ~0 &
   ~0 & ~0 & ~\sqrt{3} & -\sqrt{3} \\
 \widetilde{\chi}_{17} &120 & ~0 & -8 & ~3 & ~0 & ~0 & ~0 & ~0 & ~1 & ~0 & ~0 & -1 & ~0 & ~0 & ~0 & ~0 & ~0 &
   ~0 & ~0 & -\sqrt{3} & ~\sqrt{3} \\
  \widetilde{\chi}_{18}&144 & ~4 & ~0 & ~0 & -3 & ~0 & -1 & ~1 & ~0 & ~0 & -1 & ~1 & ~4 & ~0 & ~2 & ~1 & -1
   & -1 & -1 & ~0 & ~0 \\
 \widetilde{\chi}_{19} &144 & ~4 & ~0 & ~0 & -3 & ~0 & -1 & ~1 & ~0 & ~0 & -1 & ~1 & -4 & ~0 & -2 & -1 &
   ~1 & ~1 & ~1 & ~0 & ~0 \\
 \widetilde{\chi}_{20} &176 & -4 & ~0 & -4 & -1 & ~0 & ~1 & -1 & ~0 & ~0 & ~1 & ~0 & ~4 & ~0 & -2 & ~1 &
   -1 & -1 & ~1 & ~0 & ~0 \\
 \widetilde{\chi}_{21} &176 & -4 & ~0 & -4 & -1 & ~0 & ~1 & -1 & ~0 & ~0 & ~1 & ~0 & -4 & ~0 & ~2 & -1 &
   ~1 & ~1 & -1 & ~0 & ~0
\end{smallmatrix}
\right)\ .
\end{equation*}

\begin{sidewaystable}
\scriptsize
\begin{tabular}{c|rrrrrrrrrrrrrrr}
 & $\wchi _1 $& $\wchi _2$ & $\wchi _3$ &$ \wchi _4$ &$ \wchi _5$ &$ \wchi _6$ &$ \wchi _7 $&
   $\wchi _8$ & $\wchi _9$ & $\wchi _{10}$ & $\wchi _{11}$ & $\wchi _{12} $& $\wchi _{13}$
   & $\wchi _{14}$ & $\wchi _{15}$ \\[3pt] \hline
$ 1$ & $-1$ & 0 & 0 & 0 & 0 & 0 & 0 & 0 & 0 & 0 & 0 & 0 & 0 & 0 & 0 \\
 $ q$  & 0 & 0 & 0 & 0 & 0 & 1 & 0 & 0 & 0 & 0 & 0 & 0 & 0 & 0 & 0 \\
 $ q^2$  & 0 & 0 & 0 & 0 & 0 & 0 & 0 & 1 & 0 & 0 & 0 & 0 & 0 & 0 & 1 \\
 $ q^3$  & 0 & 0 & 0 & 0 & 0 & 0 & 0 & 0 & 0 & 0 & 1 & 0 & 2 & 2 & 1 \\
 $ q^4$  & 0 & 0 & 0 & 0 & 0 & 1 & 2 & 3 & 0 & 4 & 1 & 3 & 2 & 3 & 4 \\
 $ q^5$  & 0 & 0 & 0 & 2 & 2 & 4 & 4 & 1 & 3 & 3 & 4 & 5 & 8 & 9 & 11 \\
 $ q^6$  & 0 & 2 & 4 & 1 & 1 & 4 & 8 & 10 & 8 & 10 & 10 & 15 & 16 & 21 & 26 \\
 $ q^7$  & 0 & 0 & 4 & 7 & 7 & 18 & 16 & 15 & 24 & 10 & 23 & 32 & 42 & 46 & 56 \\
 $ q^8$  & 1 & 6 & 12 & 10 & 10 & 28 & 38 & 43 & 46 & 32 & 43 & 70 & 78 & 98 & 124 \\
 $ q^9$  & 1 & 13 & 15 & 23 & 23 & 66 & 76 & 70 & 113 & 37 & 94 & 134 & 174 & 206 & 242 \\
 $ q^{10}$  & 3 & 31 & 35 & 42 & 42 & 119 & 148 & 162 & 219 & 89 & 179 & 276 & 322 & 390 & 485 \\
 $ q^{11}$  & 4 & 62 & 40 & 88 & 88 & 242 & 278 & 272 & 442 & 122 & 346 & 511 & 632 & 753 & 914 \\
 $ q^{12}$  & 10 & 146 & 84 & 147 & 147 & 420 & 522 & 546 & 809 & 259 & 633 & 956 & 1144 & 1384 & 1699 \\
 $ q^{13}$  & 19 & 264 & 102 & 286 & 286 & 801 & 938 & 933 & 1506 & 396 & 1152 & 1716 & 2102 & 2506 & 3051 \\
 $ q^{14}$  & 30 & 500 & 192 & 484 & 484 & 1364 & 1664 & 1721 & 2628 & 768 & 2018 & 3056 & 3666 & 4420 & 5423 \\
 $ q^{15}$  & 52 & 889 & 263 & 861 & 861 & 2420 & 2874 & 2896 & 4603 & 1241 & 3535 & 5263 & 6434 & 7697 & 9375 \\
 $ q^{16}$  & 94 & 1579 & 455 & 1444 & 1444 & 4069 & 4922 & 5058 & 7754 & 2290 & 5994 & 9033 & 10886 & 13087 & 16032 \\
 $ q^{17}$  & 151 & 2664 & 652 & 2468 & 2468 & 6920 & 8248 & 8340 & 13003 & 3773 & 10099 & 15107 & 18382 & 22027 & 26887 \\
 $ q^{18}$  & 252 & 4501 & 1133 & 4020 & 4020 & 11330 & 13674 & 14000 & 21243 & 6639 & 16689 & 25077 & 30316 & 36427 & 44563 \\
 $ q^{19}$  & 412 & 7330 & 1686 & 6647 & 6647 & 18681 & 22316 & 22644 & 34484 & 10950 & 27318 & 40913 & 49696 & 59567 & 72744 \\
 $ q^{20}$  & 669 & 11917 & 2853 & 10649 & 10649 & 29960 & 36064 & 36844 & 54889 & 18611 & 44021 & 66134 & 80010 & 96094 & 117541 \\
 $ q^{21}$  & 1064 & 18925 & 4427 & 17087 & 17087 & 48040 & 57526 & 58442 & 86807 & 30313 & 70371 & 105420 & 127988 & 153496 & 187481 \\
 $ q^{22}$  & 1692 & 29831 & 7327 & 26877 & 26877 & 75625 & 90908 & 92775 & 135259 & 50001 & 111037 & 166710 & 201830 & 242298 & 296284 \\
 $ q^{23}$  & 2622 & 46244 & 11482 & 42197 & 42197 & 118616 & 142120 & 144536 & 209293 & 80135 & 173798 & 260529 & 316064 & 379145 & 463254 \\
 $ q^{24}$  & 4082 & 71296 & 18694 & 65174 & 65174 & 183384 & 220348 & 224690 & 320080 & 128864 & 269200 & 403992 & 489368 & 587424 & 718126 \\
 $ q^{25}$  & 6270 & 108377 & 29259 & 100406 & 100406 & 282327 & 338446 & 344382 & 486339 & 202971 & 413792 & 620437 & 752450 & 902705 & 1103084 \\
 $ q^{26}$  & 9555 & 163767 & 46683 & 152718 & 152718 & 429576 & 515886 & 525845 & 731812 & 319208 & 630341 & 945863 & 1145966 & 1375439 & 1681406 \\
 $ q^{27}$  & 14433 & 244901 & 72561 & 231277 & 231277 & 650388 & 780008 & 793968 & 1094465 & 494269 & 953589 & 1429925 & 1733926 & 2080389 & 2542299 \\
 $ q^{28}$  & 21711 & 364030 & 113550 & 346819 & 346819 & 975551 & 1171218 & 1193511 & 1623580 & 762466 & 1431222 & 2147351 & 2602046 & 3122821 & 3817239 \\
 $ q^{29}$  & 32314 & 536411 & 174379 & 517616 & 517616 & 1455614 & 1746034 & 1777621 & 2394700 & 1161740 & 2134316 & 3200923 & 3880816 & 4656537 & 5690817 \\
 $ q^{30}$  & 47909 & 786171 & 268275 & 766024 & 766024 & 2154660 & 2586488 & 2635260 & 3507492 & 1761600 & 3160915 & 4742013 & 5746832 & 6896777 & 8429971 \\
$  q^{31}$  & 70489 & 1143629 & 406567 & 1128391 & 1128391 & 3173388 & 3806978 & 3876453 & 5110133 & 2644483 & 4653450 & 6979350 & 8461124 & 10152616 & 12408027 \\
$ q^{32}$  & 103184 & 1655300 & 615246 & 1650225 & 1650225 & 4641456 & 5570988 & 5675403 & 7399554 & 3949188 & 6808400 & 10213676 & 12378560 & 14855132 & 18157233 \\
\end{tabular}
\caption{Character Decomposition of the $M_{12}$ Jacobi forms.}\label{characterdecomp}
\end{sidewaystable}
\bibliography{master}

\providecommand{\href}[2]{#2}\begingroup\raggedright\begin{thebibliography}{10}

\bibitem{Dummit:1985}
D.~Dummit, H.~Kisilevsky, and J.~McKay, ``Multiplicative products of
  {$\eta$}-functions,'' in {\em Finite groups---coming of age ({M}ontreal,
  {Q}ue., 1982)}, vol.~45 of {\em Contemp. Math.}, pp.~89--98.
\newblock Amer. Math. Soc., Providence, RI,
1985.
\newblock

\bibitem{Mason:1985}
G.~Mason, ``{$M\sb {24}$} and certain automorphic forms,'' in {\em Finite
  groups---coming of age ({M}ontreal, {Q}ue., 1982)}, vol.~45 of {\em Contemp.
  Math.}, pp.~223--244.
\newblock Amer. Math. Soc., Providence, RI,
1985.
\newblock

\bibitem{Govindarajan:2009qt}
S.~Govindarajan and K.~Gopala~Krishna, ``{BKM Lie superalgebras from dyon
  spectra in $\mathbb{Z}_N$ CHL orbifolds for composite $N$},''
  \href{http://dx.doi.org/10.1007/JHEP05(2010)014}{{\em JHEP} {\bfseries 05}
  (2010) 014},
\href{http://arxiv.org/abs/0907.1410}{{\ttfamily arXiv:0907.1410 [hep-th]}}.

\bibitem{Eguchi:2010ej}
T.~Eguchi, H.~Ooguri, and Y.~Tachikawa, ``{Notes on the K3 Surface and the
  Mathieu group $M_{24}$},''
  \href{http://dx.doi.org/10.1080/10586458.2011.544585}{{\em Exper. Math.}
  {\bfseries 20} (2011) 91--96},
\href{http://arxiv.org/abs/1004.0956}{{\ttfamily arXiv:1004.0956 [hep-th]}}.

\bibitem{Cheng:2010pq}
M.~C.~N. Cheng, ``{K3 Surfaces, N=4 Dyons, and the Mathieu Group M24},''
  \href{http://dx.doi.org/10.4310/CNTP.2010.v4.n4.a2}{{\em Commun. Num. Theor.
  Phys.} {\bfseries 4} (2010) 623--658},
\href{http://arxiv.org/abs/1005.5415}{{\ttfamily arXiv:1005.5415 [hep-th]}}.

\bibitem{Govindarajan:2010fu}
S.~Govindarajan, ``{BKM Lie superalgebras from counting twisted CHL dyons},''
  \href{http://dx.doi.org/10.1007/JHEP05(2011)089}{{\em JHEP} {\bfseries 05}
  (2011) 089},
\href{http://arxiv.org/abs/1006.3472}{{\ttfamily arXiv:1006.3472 [hep-th]}}.

\bibitem{Gaberdiel:2010ch}
M.~R. Gaberdiel, S.~Hohenegger, and R.~Volpato, ``{Mathieu twining characters
  for K3},'' \href{http://dx.doi.org/10.1007/JHEP09(2010)058}{{\em JHEP}
  {\bfseries 09} (2010) 058},
\href{http://arxiv.org/abs/1006.0221}{{\ttfamily arXiv:1006.0221 [hep-th]}}.

\bibitem{Gaberdiel:2010ca}
M.~R. Gaberdiel, S.~Hohenegger, and R.~Volpato, ``{Mathieu Moonshine in the
  elliptic genus of K3},''
  \href{http://dx.doi.org/10.1007/JHEP10(2010)062}{{\em JHEP} {\bfseries 10}
  (2010) 062},
\href{http://arxiv.org/abs/1008.3778}{{\ttfamily arXiv:1008.3778 [hep-th]}}.

\bibitem{Eguchi:2010fg}
T.~Eguchi and K.~Hikami, ``{Note on twisted elliptic genus of $K3$ surface},''
  \href{http://dx.doi.org/10.1016/j.physletb.2010.10.017}{{\em Phys. Lett.}
  {\bfseries B694} (2011) 446--455},
\href{http://arxiv.org/abs/1008.4924}{{\ttfamily arXiv:1008.4924 [hep-th]}}.

\bibitem{Eguchi:2008gc}
T.~Eguchi and K.~Hikami, ``{Superconformal Algebras and Mock Theta
  Functions},'' \href{http://dx.doi.org/10.1088/1751-8113/42/30/304010}{{\em J.
  Phys.} {\bfseries A42} (2009) 304010},
\href{http://arxiv.org/abs/0812.1151}{{\ttfamily arXiv:0812.1151 [math-ph]}}.

\bibitem{Eguchi:2009cq}
T.~Eguchi and K.~Hikami, ``{Superconformal Algebras and Mock Theta Functions 2.
  Rademacher Expansion for K3 Surface},'' {\em Commun. Number Theory Phys.}
  {\bfseries 3} no.~3, (2009) 531--554,
\href{http://arxiv.org/abs/0904.0911}{{\ttfamily arXiv:0904.0911 [math-ph]}}.

\bibitem{Gannon:2012ck}
T.~Gannon, ``{Much ado about Mathieu},''
  \href{http://dx.doi.org/10.1016/j.aim.2016.06.014}{{\em Adv. Math.}
  {\bfseries 301} (2016) 322--358},
\href{http://arxiv.org/abs/1211.5531}{{\ttfamily arXiv:1211.5531 [math.RT]}}.

\bibitem{Govindarajan:2010cf}
S.~Govindarajan, ``{Brewing moonshine for Mathieu},''
\href{http://arxiv.org/abs/1012.5732}{{\ttfamily arXiv:1012.5732 [math.NT]}}.

\bibitem{Creutzig:2012}
T.~Creutzig, G.~H\"ohn, and T.~Miezaki, ``The {M}c{K}ay-{T}hompson series of
  {M}athieu moonshine modulo two,''
  \href{http://dx.doi.org/10.1007/s11139-013-9553-4}{{\em Ramanujan J.}
  {\bfseries 34} no.~3, (2014) 319--328}.
  \url{https://doi.org/10.1007/s11139-013-9553-4}.

\bibitem{Eguchi:2013es}
T.~Eguchi and K.~Hikami, ``{Enriques moonshine},''
  \href{http://dx.doi.org/10.1088/1751-8113/46/31/312001}{{\em J. Phys.}
  {\bfseries A46} (2013) 312001},
\href{http://arxiv.org/abs/1301.5043}{{\ttfamily arXiv:1301.5043 [hep-th]}}.

\bibitem{ChengDuncan:2012}
M.~C.~N. Cheng and J.~F.~R. Duncan,
  \href{http://dx.doi.org/10.1090/conm/610/12195}{``On the discrete groups of
  {M}athieu moonshine,''} in {\em Perspectives in representation theory},
  vol.~610 of {\em Contemp. Math.}, pp.~65--78.
\newblock Amer. Math. Soc., Providence, RI, 2014.
\newblock \url{https://doi.org/10.1090/conm/610/12195}.

\bibitem{DutourSikiric2009}
M.~{Dutour Sikiri{\'{c}}} and G.~Ellis, ``{Wythoff polytopes and
  low-dimensional homology of Mathieu groups},''
  \href{http://dx.doi.org/10.1016/j.jalgebra.2009.09.031}{{\em Journal of
  Algebra} {\bfseries 322} no.~11, (2009) 4143--4150},
  \href{http://arxiv.org/abs/0812.4291}{{\ttfamily arXiv:0812.4291}}.

\bibitem{Conway1971}
J.~H. Conway, ``Three lectures on exceptional groups,'' in {\em Finite simple
  groups}, M.~B. Powell and G.~Higman, eds., pp.~215--247.
\newblock Academic Press, London, 1971.

\bibitem{Govindarajan:2011em}
S.~Govindarajan, ``{Unravelling Mathieu Moonshine},''
  \href{http://dx.doi.org/10.1016/j.nuclphysb.2012.07.005}{{\em Nucl. Phys.}
  {\bfseries B864} (2012) 823--839},
\href{http://arxiv.org/abs/1106.5715}{{\ttfamily arXiv:1106.5715 [hep-th]}}.

\bibitem{Gritsenko:2008}
F.~Clery and V.~Gritsenko, ``{Siegel modular forms of genus 2 with the simplest
  divisor},'' {\em Proc. Lond. Math. Soc. (3)} {\bfseries 102} no.~6, (2011)
  1024--1052,
\href{http://arxiv.org/abs/0812.3962}{{\ttfamily arXiv:0812.3962 [math.NT]}}.

\bibitem{GritsenkoNikulinI}
V.~A. Gritsenko and V.~V. Nikulin, ``Automorphic forms and {L}orentzian
  {K}ac-{M}oody algebras. {I},''
{\em Internat. J. Math.} {\bfseries 9} no.~2, (1998) 153--199.

\bibitem{GritsenkoNikulinII}
V.~A. Gritsenko and V.~V. Nikulin, ``Automorphic forms and {L}orentzian
  {K}ac-{M}oody algebras. {II},''
{\em Internat. J. Math.} {\bfseries 9} no.~2, (1998) 201--275.

\bibitem{Aoki:2005}
H.~Aoki and T.~Ibukiyama, ``Simple graded rings of {S}iegel modular forms,
  differential operators and {B}orcherds products,''
{\em Internat. J. Math.} {\bfseries 16} no.~3, (2005) 249--279.

\bibitem{Raum:2012}
M.~Raum, ``{M24-twisted Product Expansions are Siegel Modular Forms},'' {\em
  Communications in Number Theory and Physics} {\bfseries 7} no.~3, (Aug, 2012)
  469--495, \href{http://arxiv.org/abs/1208.3453}{{\ttfamily arXiv:1208.3453}}.
  \url{http://arxiv.org/abs/1208.3453}.

\bibitem{Atkin:1970}
A.~O.~L. Atkin and J.~Lehner, ``{Hecke Operators on $\Gamma_0(m)$},'' {\em
  Math. Ann.} {\bfseries 185} (1970) 134--160.

\bibitem{Cheng:2011}
M.~C.~N. Cheng and J.~F.~R. Duncan, ``{On Rademacher Sums, the Largest Mathieu
  Group, and the Holographic Modularity of Moonshine},''
  \href{http://arxiv.org/abs/1110.3859}{{\ttfamily arXiv:1110.3859}}.
  \url{http://arxiv.org/abs/1110.3859}.

\bibitem{Nikulin:1995}
V.~A. Gritsenko and V.~V. Nikulin, ``Siegel automorphic form corrections of
  some {L}orentzian {K}ac-{M}oody {L}ie algebras,'' {\em Amer. J. Math.}
  {\bfseries 119} no.~1, (1997) 181--224,
\href{http://arxiv.org/abs/alg-geom/9504006}{{\ttfamily alg-geom/9504006}}.

\bibitem{Mukai:1988}
S.~Mukai, ``Finite groups of automorphisms of {$K3$} surfaces and the {M}athieu
  group,''
{\em Invent. Math.} {\bfseries 94} no.~1, (1988) 183--221.

\bibitem{Sutapa2018b}
S.~Govindarajan and S.~Samanta, ``{BKM Lie superalgebras from counting twisted
  CHL dyons -- II },''.
(to appear).

\bibitem{Cheng:2012tq}
M.~C.~N. Cheng, J.~F.~R. Duncan, and J.~A. Harvey, ``{Umbral Moonshine},''
  \href{http://dx.doi.org/10.4310/CNTP.2014.v8.n2.a1}{{\em Commun. Num. Theor.
  Phys.} {\bfseries 08} (2014) 101--242},
\href{http://arxiv.org/abs/1204.2779}{{\ttfamily arXiv:1204.2779 [math.RT]}}.

\bibitem{Stein}
W.~Stein, {\em Modular forms, a computational approach}, vol.~79 of {\em
  Graduate Studies in Mathematics}.
\newblock American Mathematical Society, Providence, RI, 2007.
\newblock
With an appendix by Paul E. Gunnells.

\bibitem{sage}
W.~Stein {\em et al.}, {\em {S}age {M}athematics {S}oftware ({V}ersion 4.0)}.
\newblock The Sage~Development Team, 2009.
\newblock {\tt http://www.sagemath.org}.

\bibitem{Dabholkar:2012nd}
A.~Dabholkar, S.~Murthy, and D.~Zagier, ``{Quantum Black Holes, Wall Crossing,
  and Mock Modular Forms},''
\href{http://arxiv.org/abs/1208.4074}{{\ttfamily arXiv:1208.4074 [hep-th]}}.

\bibitem{Niemann}
P.~Niemann, ``Some generalized {K}ac-{M}oody algebras with known root
  multiplicities,'' {\em Mem. Amer. Math. Soc.} {\bfseries 157} no.~746, (2002)
  x+119,
\href{http://arxiv.org/abs/math/0001029}{{\ttfamily math/0001029 [math]}}.

\bibitem{Atlasv3}
R.~Wilson, P.~Walsh, J.~Tripp, I.~Suleiman, R.~Parker, S.~Norton, S.~Nickerson,
  S.~Linton, J.~Bray, and R.~Abbott, {\em {ATLAS of Finite Group
  Representations v3}}.
\newblock \texttt{http://brauer.maths.qmul.ac.uk/Atlas/v3/}.

\bibitem{GAP4}
The GAP~Group, {\em {GAP -- Groups, Algorithms, and Programming, Version
  4.4.12}}, 2008.
\newblock \texttt{http://www.gap-system.org}.

\end{thebibliography}\endgroup
\end{document}